\newtheorem{assumption}{Assumption}
\newcommand{\remove}[1]{{}}
\def\endthebibliography{%
	\def\@noitemerr{\@latex@warning{Empty `thebibliography' environment}}%
	\endlist
}
\begin{document}
\title{Renting Edge Computing \\ Resources for Service Hosting} 
%
%
\author{Aadesh Madnaik 
\and
Sharayu Moharir
\and
Nikhil Karamchandani
}
\authorrunning{A. Madnaik et al.}
%
\institute{Indian Institute of Technology Bombay, Mumbai, India \\
\email{aadesh.madnaik@gmail.com} \\ \email{\{sharayum,nikhilk\}@ee.iitb.ac.in}}
\maketitle              
\begin{abstract}
We consider the setting where a service is hosted on a third-party edge server deployed close to the users and a cloud server at a greater distance from the users. Due to the proximity of the edge servers to the users, requests can be served at the edge with low latency. However, as the computation resources at the edge are limited, some requests must be routed to the cloud for service and incur high latency. The system's overall performance depends on the rent cost incurred to use the edge server, the latency experienced by the users, and the cost incurred to change the amount of edge computation resources rented over time. The algorithmic challenge is to determine the amount of edge computation power to rent over time. We propose a deterministic online policy and characterize its performance for adversarial and stochastic i.i.d. request arrival processes. We also characterize a fundamental bound on the performance of any deterministic online policy. Further, we compare the performance of our policy with suitably modified versions of existing policies to conclude that our policy is robust to temporal changes in the intensity of request arrivals.

\keywords{Service hosting \and edge computing \and competitive ratio.} 
\end{abstract}

\section{Introduction}
\label{sec:intro}
Software as a Service (SaaS) instances like search engines, online shopping platforms, navigation services, and Video-on-Demand services have recently gained popularity. Low latency in responding to user requests/queries is essential for most of these services. This necessitates the use of edge resources in a paradigm known as edge-computing \cite{satyanarayanan2017emergence}, i.e., storage and computation power close to the resource-constrained users, to serve user queries. Due to limited computation resources at the edge, such services are often also deployed on cloud servers which can serve requests that cannot be served at the edge, albeit with more latency given the distance between the cloud servers and the users. Ultimately, introducing edge-computing platforms facilitates low network latency coupled with higher computational capabilities. For instance, consider a scenario where a child goes missing in an urban setting \cite{shi2016edge}. While cameras are widely used for security, it is challenging to leverage the information as a whole because of privacy and data traffic issues. In the edge computing paradigm, a workaround would be to push a request to search for the child to a certain subset of devices, thereby making the process faster and more efficient than cloud computing. Several other avenues of edge computing exist in the forms of cloud offloading, AR/VR-based infotainment, autonomous robotics, Industry 4.0 and the Internet of Things (IoT). Several industry leaders offer services for edge resources, e.g., Amazon Web Services \cite{AWS}, Oracle Cloud Infrastructure \cite{OCI} and IBM with 5G technology \cite{IBM}.

This work considers a system with cloud servers and third-party-owned edge servers.  Edge resources, i.e., storage and computation power, can be rented via short-term contracts to host services. Storage resources are needed to store the code, databases, and libraries for the service and computation resources are required to compute responses to user queries. As edge servers are limited in computational capabilities, there is a cap on the number of concurrent requests that can be served at the edge \cite{tran2019costa}. The amount of edge computational resources rented for the service governs the number of user requests that can be served simultaneously at the edge. We focus on a service that is hosted both on the cloud and edge servers, and the amount of edge computational resources rented can be changed over time based on various factors, including the user request traffic and the cost of renting edge computation resources. Service providers provision for elasticity in the quantity of edge resources rented, and the clients can exploit this based on the number of request arrivals \cite{mouradian2017comprehensive}. 
The total cost incurred by the system is modelled as the sum of the rent cost incurred to use edge resources, the cost incurred due to high latency in serving requests that have to be routed to the cloud, and the switching cost incurred every time the amount of edge computation resource rented is changed \cite{zhang2018power}. The algorithmic challenge in this work is to determine the amount of edge computation resources to rent over time in the setting where the request arrival sequence is revealed causally with the goal of minimizing the overall cost incurred. 

\subsection{Our Contributions}
We propose a deterministic online policy called Better-Late-than-Never (BLTN) inspired by the RetroRenting policy proposed in \cite{narayana2021renting} and analyze its performance for adversarial and, i.i.d. stochastic request arrival patterns. In addition to this, we also characterize fundamental limits on the performance of any deterministic online policy for adversarial arrivals in terms of competitive ratio against the optimal-offline policy. Further, we compare the performance of BLTN with a suitably modified version of the widely studied Follow the Perturbed Leader (FTPL) policy \cite{mukhopadhyay2021online,bhattacharjee2020fundamental} via simulations. Our results show that while the performance of BLTN and FTPL is comparable for i.i.d. stochastic arrivals, for arrival processes with time-varying intensity, e.g., a Gilbert-Elliot-like model, BLTN significant outperforms FTPL. The key reason for this is that BLTN puts extra emphasis on recent arrival patterns of making decisions, while FTPL uses the entire request arrival history to make decisions. For all settings under consideration, the simulations demonstrate that BLTN differs little in performance from the optimal online policy despite not having information about the incoming request arrival process.

\color{black}

\subsection{Related Work} 
There has been a sharp increase in mobile application latency and bandwidth requirements, particularly when coupled with time-critical domains such as autonomous robotics and the Internet of Things (IoT).  These changes have ushered in the advent of the edge computing paradigm away from the conventional remote servers, as discussed in the surveys \cite{Puliafito:2019, 
luo2021surveyresource, abbas2018surveymobile}. The surveys alongside several academic works elaborate on and model the dynamics of such systems. We briefly discuss some relevant literary works.

Representations of the problem considered in \cite{tran2019costa, bi2019joint} model the decision making of which services to cache and which tasks to offload as a mixed-integer non-linear optimization problem. In these cases, the problem is NP-hard. Similarly, \cite{chen2017collaborative} models the problem as a graph colouring problem and solves it using parallel Gibbs sampling. While these works try to solve a one-shot cost-minimization problem, in this work we consider the dynamic nature of decision-making based on the input request sequence.

Another model considered in \cite{yan2021pricing} for service hosting focused on the joint optimization of service hosting decision and pricing is a two-stage interactive game between a base-station that provides pricing for edge servers and user equipment which decides whether to offload the task. In another game-theoretic setup, \cite{jiang2020economic, zeng2020novel} delve into the economic aspects of edge caching involving interactions amongst different stakeholders. Some heuristic algorithms have been employed in the works \cite{ascigil2021resource, choi2019latency}. Their approach for the problem is through resource constraint in the latency from the view-point of the edge-cloud infrastructure and not the application provider.

Stochastic models of the system have been considered in \cite{chen2019budget, wang2019dynamic,miao2020intelligent}. While \cite{wang2019dynamic} assumes that the underlying requests follow a Poisson process, \cite{chen2019budget,miao2020intelligent} do not make any prior assumptions regarding the same. \cite{chen2019budget,miao2020intelligent}, through Contextual Combinatorial Multiarmed Bandits aim to use a learning-based approach to make decisions. \cite{wang2019dynamic} formulates the service migration problem as a Markov decision process (MDP) to design optimal service migration policies. These models do not provide any worst-case guarantees for the algorithm, simply average guarantees. In our work, we aim to provide both performance guarantees which are crucial to sensitive applications with large variations in the arrival patterns.

Closest to our work, \cite{narayana2021renting, narayana2021online, prakash2020partial} consider the setting where a service is always hosted at the cloud and consider the algorithmic task of determining when to host the service at the edge server as a function of the arrival process and various system parameters. The key difference between our work and \cite{narayana2021renting, narayana2021online, prakash2020partial} is that, in \cite{narayana2021renting, narayana2021online, prakash2020partial}, once the service is hosted at the edge, the amount of edge computation resources available for use by the service is either fixed or effectively unlimited. Our model allows us to choose the level of computation resources to rent which is a feature available in popular third-party storage/computation resource providers like AWS and Microsoft Azure. Another critical difference between our model and the \cite{narayana2021renting, narayana2021online, prakash2020partial} is the fact that we consider the setting where a non-zero switch cost is incurred every time we change the level of computation resource rented. Contrary to this, in \cite{narayana2021renting, narayana2021online, prakash2020partial}, switch cost is unidirectional, i.e., a switch cost is incurred only when a service is not hosted at the edge in a time-slot and has to be fetched from the cloud servers to host on the edge server. Due to this, the algorithms proposed in \cite{narayana2021renting, narayana2021online, prakash2020partial} and their performance analyses do not directly extend to our setting. Other works on the service hosting problem include \cite{zhao2018red}. At a high level, our work is related to the rich body of work on caching \cite{mukhopadhyay2021online,bhattacharjee2020fundamental,borst2010distributed, belady1966study}.  

\section{Setting}
\label{sec:setting}

We study a system consisting of a cloud server and a third-party-owned edge server. We focus on the problem of efficiently using edge resources from the perspective of a specific service provider. This service is hosted both at the edge and on the cloud server. Each user query/request is routed either to the edge or the cloud and the answer to the query is computed at that server and communicated back to the user, thus necessitating computation power both at the edge and at the cloud servers. We consider a time-slotted setting where the amount of edge computation power rented by the service provider can be changed over time via short-term contracts. 

\emph{Request arrival process}: We consider adversarial and stochastic arrivals. Under the adversarial setting, we make no structural assumptions on the number of requests arriving over time. For our analytical results for stochastic arrivals, we consider the setting where arrivals are i.i.d. over time. 
\begin{assumption}\label{assum_stochastic}(i.i.d. stochastic arrivals)
Let $X_t$ be the number of requests arriving in time-slot $t$. Then, for all $t$,
$\mathbb{P}(X_t = x) = p_x \text{ for } x = 0, 1, 2, \cdots.$
\end{assumption}

In the Gilbert-Elliot-like Model, we make the following assumption:

\begin{assumption}
\label{assum_GE}(Gilbert-Elliot (GE) Model)
Using \cite{GBmodel} as a basis, we consider an arrival process governed by a two-state Markov chain, $A_H$ and $A_L$. Transitions from state $A_H \rightarrow A_L$ and from state $A_L \rightarrow A_H$ occur with probabilities $p_{HL}$  and $p_{LH}$. The state transition diagram has been described in \ref{fig:GEMarkov}. We refer to the two states as the high state and the low state. Under the GE model, if the Markov chain is in the high state, the requests arrive as Poisson($\lambda_H$), and they are Poisson($\lambda_L$) otherwise. 

\begin{figure}
    \centering
    \vspace*{-15mm}
    \begin{tikzpicture}[node distance={30mm}, main/.style = {draw, circle}] 
\node[main] (1) {$A_H$}; 
\node[main] (2) [right of=1] {$A_L$}; 
\draw[->] (1) to [out=45,in=135,looseness=1.5, "$p_{HL}$"] (2);
\draw[->] (2) to [out=225,in=315,looseness=1.5, "$p_{LH}$"] (1);
\draw[->] (1) to [out=135,in=225,looseness=15, "1-$p_{HL}$"] (1);
\draw[->] (2) to [out=45,in=315,looseness=15, "1-$p_{LH}$"] (2);
\end{tikzpicture} 
\vspace*{-10mm}
    \caption{Gilbert-Elliot Model as a Markov Chain}
    \label{fig:GEMarkov}
\end{figure}
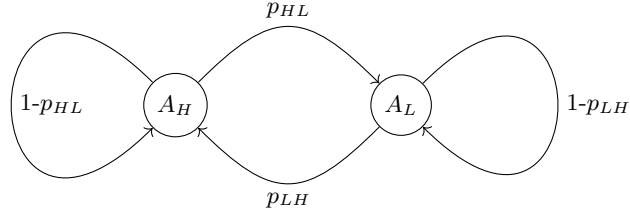
\end{assumption}

\emph{Sequence of events in each time-slot}: We first have request arrivals. These requests are served by the edge server subject to constraints due to limited computation power at the edge. The remaining requests, if any, are forwarded to the cloud server for service. The system then makes a decision on how much edge computation power to rent for the next time-slot.



The algorithmic challenge is to determine how much edge computation power to rent over time. Let $\mathcal{P}$ be a candidate policy that determines the amount of computation power rented by the service provider over time. \\

\subsection{Cost Model and Constraints}
We build on the assumptions in \cite{narayana2021renting, narayana2021online, prakash2020partial}. Under policy $\mathcal{P}$, the service provider incurs three types of costs.
\begin{enumerate}
	\item[--] \emph{Rent cost $(C_{R,t}^\mathcal{P})$}: The service provider can choose one of two possible levels of edge computation power to rent in each time-slot, referred to as high ($H$) and low ($L$). The rent cost incurred per time-slot for levels $H$ and $L$ are denoted by $c_H$ and $c_L(<c_H)$ respectively. 

\item[--]  \emph{Service cost $(C_{S,t}^\mathcal{P})$}: This is the cost incurred due to the latency in service user requests. Given the proximity of the edge servers and the users, no service cost is incurred for requests served at the edge. A cost of one unit is levied on each request forwarded to the cloud server. The highest number of requests that can be served at the edge at edge computation power levels $H$ and $L$ are denoted by $\kappa_H$ and $\kappa_L(<\kappa_H)$ respectively.

\item[--]  \emph{Switch cost $(C_{W,t}^\mathcal{P})$}: Switching from edge computation power level $H$ to $L$ and $L$ to $H$ results in a switch cost of $W_{HL}$ and $W_{LH}$ units respectively.

\end{enumerate}
The number of requests that can be served by the edge server in a time-slot is limited to $\kappa_H$ for state $S_H$ and $\kappa_L$ for state $S_L$ in $\mathbb{Z}^+$, where $\mathbb{Z}^+$ is the set of all positive integers. 
Let $r_t \in \{H, L\}$ denote the edge computation power rented during time-slot $t$ and $X_t$ denote the number of request arrivals in time-slot $t$. It follows that

\begin{align}
\label{equation:total_cost}
C_t^\mathcal{P} &=C_{R,t}^\mathcal{P}+C_{S,t}^\mathcal{P}+C_{W,t}^\mathcal{P},  
\end{align}
\begin{align}
\text{where, }
C_{R,t}^\mathcal{P} &=
\begin{cases*}
c_H & \text{ if $r_{t} = H$ } \nonumber \\
c_L & \text{ if $r_{t} = L$ }
\end{cases*} \\
C_{S,t}^\mathcal{P} & =
\begin{cases*}
X_t-\min\{X_t,\kappa_H\} & \text{ if $r_t = H$ } \nonumber \\
X_t-\min\{X_t,\kappa_L\} & \text{ if $r_t = L$ } 
\end{cases*} \nonumber\\
C_{W,t}^\mathcal{P}&=
\begin{cases*}
W_{HL} & \text{ if $r_{t-1} = H$ and $r_{t} = L$ }  \nonumber\\
W_{LH} & \text{ if $r_{t-1} = L$ and $r_{t} = H$ }  \nonumber\\
0 &\text{ otherwise.} 
\end{cases*}
\end{align}

\begin{remark}
 We limit our discussion to the case where $\kappa_H - \kappa_L > c_H - c_L $. If $\kappa_H - \kappa_L \leq c_H - c_L $, the optimal policy is to always use computation level $L$.
\end{remark}

\subsection{Performance metrics}
We use the following metrics for adversarial and stochastic request arrivals. 

For adversarial arrivals, we compare the performance of a policy $\mathcal{P}$ with the performance of the optimal offline policy (OPT-OFF) which knows the entire arrival sequence a priori. The performance of policy $\mathcal{P}$ is characterized by it competitive ratio $\rho^{\mathcal{P}}$ defined as 
\begin{equation}
\label{eq:competitiveRatio}
\rho^{\mathcal{P}}=\sup_{a\in \mathcal{A}} \frac{C^{\mathcal{P}}(a)}{C^{\text{OPT-OFF}}(a)},
\end{equation}
where $\mathcal{A}$ is the set of all possible finite request arrival sequences, and $C^{\mathcal{P}}(a)$ and $C^{\text{OPT-OFF}}(a)$ are the total costs of service for the request arrival sequence $a$ under the policy $\mathcal{P}$ and the optimal offline policy respectively. 

For i.i.d. stochastic arrivals, we compare the performances of a policy $\mathcal{P}$ with the optimal online policy (OPT-ON) which might know the statistics of the arrival process, but does not know the sample path. The performance metric $\sigma^{\mathcal{P}}_T$ is defined as the ratio of the expected cost incurred by policy $\mathcal{P}$ in $T$ time-slots to that of the optimal online policy in the same time interval. Formally,
\begin{equation}
\label{eq:efficiencyRatio}
\sigma^\mathcal{P}(T) =\frac{\mathbb{E}\bigg[\displaystyle\sum_{t=1}^T C_t^\mathcal{P}\bigg]}{\mathbb{E}\bigg[\displaystyle\sum_{t=1}^T C_t^{\text{OPT-ON}}\bigg]},
\end{equation}
where $C_t^\mathcal{P}$ is as defined in \eqref{equation:total_cost}.


\emph{Goal}: The goal is to design online policies with provable performance guarantees for both adversarial and stochastic arrivals.

\section{Policies}
\label{sec:policies}
In our analysis, we focus the discussion towards \emph{online} policies. At each time-slot, a singular decision must be made determining whether to switch states.

\subsection{Better Late than Never (BLTN)} 
The BLTN policy is inspired by the RetroRenting policy proposed in \cite{narayana2021renting}. BLTN is a deterministic policy that uses recent arrival patterns to evaluate decisions by checking if it made the correct choice in hindsight. 	
Let $t_{\text{switch}}<t$ be the most recent time when the state was changed from $H \rightarrow L$ or $L \rightarrow H$
under BLTN. The policy searches for a time-slot $\tau$ such that $t_{\text{switch}} < \tau < t$, and the total cost incurred is lower if the state is switched in time-slot $\tau-1$ 
and switched back in time-slot $t$ than the cost incurred if the state is not changed during time-slots $\tau-1$ to $t$. If there exists such a time $\tau$, BLTN switches the state in time-slot $t$.

Consider a scenario where the state in time slot $t$ is $S_H$. Let $t_{\text{switch}}<t$ be the time when the server had last changed state to $S_L$ under BLTN. Let $H_i$ and $L_i$ denote cost incurred in time slot $i$ where $H_i$ and $L_i$ are evaluated for $r_i = r_{i-1} = H$ and $L$ respectively. Analytically, the decision to switch to state $S_L$ is made if the algorithm can find a time $\tau$ such that 
	\begin{equation*}
	    {W_{LH} + W_{HL} + \sum_{t_{\text{switch}} \leq i < \tau} H_i + \sum_{\tau \leq j \leq t} L_j < \sum_{t_{\text{switch}} \leq i \leq t} H_i}
	\end{equation*}
	
	\begin{equation}
	\label{algo:line8}
	\text{which simplifies to }
	    {W_{LH} + W_{HL} < \sum_{ i = \tau }^{t} (H_i - L_i).}
	\end{equation}
	
A similar analytical condition can be made for the decision to switch from $S_L$ to state $S_H$. The decision is made if the algorithm can find a time-slot $\tau$ such that 
\begin{equation}
	\label{algo:line16}
	    {W_{LH} + W_{HL} < \sum_{ i = \tau }^{t} (L_i - H_i).}
	\end{equation}

A naive implementation of the algorithm has been constructed in the Appendix. 

While a naive implementation of the BLTN policy can have $\mathcal{O}(T)$ space and time complexity, using techniques proposed in \cite{narayana2021renting}, the time and computational complexity can be reduced to $\mathcal{O}(1)$ as shown through Algorithm \ref{algo:BLTNE}.

\begin{algorithm}
	\caption{Better Late than Never (BLTN)}\label{algo:BLTNE}
	\SetAlgoLined
	Input: Sum of switch costs $W$ units, maximum number of our service requests served by edge server ($\kappa_H$ and $\kappa_L$),
	rent cost: $c_H$ and $c_L$, number of requests: $x_t$, $\underbar{\text{$x$}}_t^H$ = $\min \{x_t, \kappa_H\}$, $\underbar{\text{$x$}}_t^L$ = $\min \{x_t, \kappa_L\}$, $t > 0$\\
	Output:  Service hosting strategy $r_{t+1} \in \{H, L\}$, $t > 0$\\
	Initialize:  Service hosting variable $r_1 = 0$, $\Delta(0)=0$\\
	\For {\textbf{each} time-slot $t$}{
	    $\Delta(t-1) = \Delta(t)$ \\
	    \uIf{$r_t = H$}{
	        $\Delta(t)=\max\big\{0, \Delta(t-1)+\underbar{\text{$x$}}_t^L - \underbar{\text{$x$}}_t^H + c_H - c_L \big\}$\\
	        
    		\uIf{$\Delta(t) > W$}{
    		    $t_{\text{switch}} = t$ \\
    		    $\Delta(t) = 0$ \\
    			$\text{return } r_{t+1}=L$\\
    		}						
    		\Else
    		{
    			$\text{return } r_{t+1}=H$
    		}		
	    }
	    \ElseIf{$r_t = L$}{
	        $\Delta(t)=\max\big\{0, \Delta(t-1)+\underbar{\text{$x$}}_t^H - \underbar{\text{$x$}}_t^L + c_L - c_H\big\}$\\
	        
    		\uIf{$\Delta(t) > W$}{
    		    $t_{\text{switch}} = t$ \\
    		    $\Delta(t) = 0$ \\
    			$\text{return } r_{t+1}=H$\\
    		}						
    		\Else
    		{
    			$\text{return } r_{t+1}=L$
    		}		
	    }
	 }
\end{algorithm}

\subsection{Follow The Perturbed Leader (FTPL)} 
FTPL \cite{mukhopadhyay2021online,bhattacharjee2020fundamental} is a randomized policy. In time-slot $t$, it compares suitably perturbed versions of the cost incurred from time 1 to $t$ under two static decisions, i.e., state $L$ from time 1 to $t$ and state $H$ from time 1 to $t$. The state of the system is then set to the one which has the lower perturbed cost.

\begin{algorithm}
	\caption{Follow the Perturbed Leader (FTPL)}\label{algo:FTPL}
	\SetAlgoLined
	Input: Switch costs $W_{HL}$ and $W_{LH}$ units, maximum number of our service requests served by edge server ($\kappa_H$ and $\kappa_L$),
	rent cost: $c_H$ and $c_L$, number of requests: $x_t$, $\underbar{\text{$x$}}_t^H$ = $\min \{x_t, \kappa_H\}$, $\underbar{\text{$x$}}_t^L$ = $\min \{x_t, \kappa_L\}$, $t > 0$, Gaussian distribution with mean $\mu$ and variance $\sigma$: $\mathcal{N}(\mu, \sigma)$\\
	Output:  Service hosting strategy $r_{t+1} \in \{H, L\}$, $t > 0$\\
	Initialize:  Service hosting variable $r_1 = 0$, $\Delta(0)=0$\\
	\For {\textbf{each} time-slot $t$}{
	    $\Delta(t-1) = \Delta(t)$ \\
	    $\Delta(t)= \Delta(t-1)+\underbar{\text{$x$}}_t^L - \underbar{\text{$x$}}_t^H + c_H - c_L$
	    
    		\uIf{$\Delta(t) + \gamma \mathcal{N}(0, \sqrt{t}) > 0$}{
    			$\text{return } r_{t+1}=L$\\
    		}						
    		\Else
    		{
    			$\text{return } r_{t+1}=H$
    		}		
	    
	 }
\end{algorithm}

The variation of the perturbation $\mathcal{N}(0, \sqrt{t})$ increases as $\sqrt{t}$, while the total difference in cost scaled linearly with time. This implies that the FTPL policy, over time, chooses to remain static in the state with the least cost.

\subsection{An illustration}

We consider a sample sequence of arrivals. Let $r_t$ be the state sequence with time index $t$. We consider the case where $W_{LH} = W_{HL} = 275, c_H = 600, c_L = 400, \kappa_H = 700, \kappa_L = 300$ and a request sequence

\vspace{2mm}
\small{
\noindent
\begin{tikzpicture}
 	\foreach \x in {1,2,...,11}{
 	 \draw[gray] (0.8*\x,-1.2mm) -- (0.8*\x,1.2mm);
 	  \draw[] (0.8*\x,0) -- node[below=0.2mm,pos=0.5] {\x} (0.8*\x+0.8,0);
 		\draw[gray] (0.8*\x+0.8,-1.2mm) -- (0.8*\x+0.8,1.2mm);
 		}
 		\draw[] (1,3mm) -- node[left=2mm,pos=0.25] {Number of requests:} (1,2mm);
 		\draw[] (1,-3mm) -- node[left=2mm,pos=0.25] {Time-slot index:} (1,-2mm);
 		\draw[] (1,0) -- node[above=0.2mm,pos=0.25] {900} (1.8,0);
 		\draw[] (1.8,0) -- node[above=0.2mm,pos=0.25] {900} (2.6,0);
 		\draw[] (2.6,0) -- node[above=0.2mm,pos=0.25] {900} (3.4,0);
 		\draw[] (3.4,0) -- node[above=0.2mm,pos=0.25] {900} (4.2,0);
 		\draw[] (4.2,0) -- node[above=0.2mm,pos=0.25] {900} (5,0);
 		\draw[] (5,0) -- node[above=0.2mm,pos=0.25] {900} (5.8,0);
 		\draw[] (5.8,0) -- node[above=0.2mm,pos=0.25] {200} (6.6,0);
 		\draw[] (6.6,0) -- node[above=0.2mm,pos=0.25] {200} (7.4,0);
 		\draw[] (7.4,0) -- node[above=0.2mm,pos=0.25] {200} (8.2,0);
 		\draw[] (8.2,0) -- node[above=0.2mm,pos=0.25] {200} (9,0);
 		\draw[] (9,0) -- node[above=0.2mm,pos=0.25] {200} (9.8,0);
  	\end{tikzpicture}
}
\vspace{2mm}

\normalsize
Initially, we consider the edge server to be in state $S_L$. We observe that the optimal state to serve 900 incoming requests is $S_H$. While hosting under the BLTN policy, the first switch to state $S_H$ occurs in the time-slot 3, thus $r_4 = H$ and $r_{1, 2, 3} = L$. The cost incurred in the case where the state is $S_L$ till $t=3$ is $\textstyle\sum_{l=1}^3 (x_l - \kappa_L)^+ + (3 - 1 + 1) \times c_L = 3000$, while the cost incurred in state $S_H$ is $\textstyle\sum_{l=1}^3 (x_l - \kappa_H)^+ + (3 - 1 + 1) \times c_H = 2400$. The difference in the cost equates $600 > W = 550$, and that is the first time the condition \ref{algo:line16} is satisfied. $t_{switch}$ is updated to 3, and $r_4 = H$.

From time-slot 4 onward, BLTN hosts in state $S_H$ up to time-slot 8. We set $\tau = 6 > t_{switch}$ and evaluate the condition \ref{algo:line8}. Setting $t = 8, \tau = 6$, we have $\textstyle\sum_{l=6}^8 ((x_l - \kappa_H)^+ - (x_l - \kappa_L)^+) + (8 - 6 + 1) \times (c_H-c_L) = 600 \geq W = 550$. This is the first time-slot since 3 that the condition is satisfied, thus $r_8 = H, r_{9,...} = L$ till the next time BLTN decides to switch.

\section{Main Results and Discussion}
\label{sec:mainResults}
Our first theorem characterizes the performance of BLTN for adversarial arrivals by giving a worst-case guarantee on the performance of the BLTN policy against the optimal offline policy. We also characterize a lower bound performance of any deterministic online policy.
\begin{theorem}
\label{thm:BLTN_adv_online}
Let $\Delta \kappa = \kappa_H - \kappa_L$,  $\Delta c = c_H - c_L$, $ W = W_{LH} + W_{HL}$.
If $\Delta \kappa > \Delta c$ then,  
\begin{enumerate}
   \item[(a)] $\rho^{\text{BLTN}}\leq \left(1 + \frac{2W + \Delta \kappa}{W \left( 1 + \frac{c_H}{\Delta \kappa - \Delta c}+ \frac{c_L}{\Delta c } \right)}\right)$,
   \item[(b)] $\rho^{\mathcal{P}} \geq
\text{min } \left\{ \frac{\Delta \kappa + c_L}{c_H}, \frac{c_H}{c_L}, \frac{\Delta \kappa + c_L + c_H + W}{c_H + c_L + W} \right\}$, for any deterministic policy $\mathcal{P}$.
\end{enumerate}
\end{theorem}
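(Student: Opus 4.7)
For part (a), my plan is a cycle-based amortized analysis comparing BLTN against OPT-OFF. I would partition BLTN's trajectory into \emph{cycles}, each consisting of an $L\to H$ transition followed by an $H\to L$ one (or vice versa). On each cycle, I extract two complementary bounds. First, since BLTN did not switch before time $t$, the counter $\Delta(t-1)$ from Algorithm~\ref{algo:BLTNE} stayed below $W$; and since the counter can rise by at most $\Delta\kappa$ in a single slot, we have $\Delta(t) \le W + \Delta\kappa$ at the moment of switching. This inequality bounds BLTN's excess cost over the optimal static alternative on the cycle by $2W + \Delta\kappa$ (one contribution per direction of switch, with a single $\Delta\kappa$ overshoot charged to one end of the cycle). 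Second, the update rule implies the counter grows by at most $\Delta\kappa - \Delta c$ per slot while in state $L$ and at most $\Delta c$ per slot while in state $H$, so any BLTN cycle must contain at least $W/(\Delta\kappa - \Delta c)$ high-arrival slots and at least $W/\Delta c$ low-arrival slots. On the high-arrival slots any policy must pay at least $c_H$ per slot (either as rent in $H$, or as service cost in $L$ since $\Delta\kappa>\Delta c$), and similarly $c_L$ per slot on the low-arrival ones, so OPT-OFF's cost on the cycle is at least $c_H\,W/(\Delta\kappa-\Delta c) + c_L\,W/\Delta c + W = W\bigl(1 + c_H/(\Delta\kappa-\Delta c) + c_L/\Delta c\bigr)$, the final $+W$ coming from the two switches OPT-OFF must pay (or compensate for via extra service/rent cost). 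Dividing the two per-cycle bounds and summing over cycles yields the stated $\rho^{\text{BLTN}}$, with boundary partial-cycles contributing only a bounded additive constant absorbed into the supremum ratio.

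For part (b), since $\mathcal{P}$ is deterministic its trajectory is a deterministic function of the past arrivals, so an adaptive adversary suffices. The construction is simple: feed $x_t = \kappa_H$ whenever $\mathcal{P}$ is in state $L$, and $x_t = 0$ whenever $\mathcal{P}$ is in state $H$, forcing $\mathcal{P}$ into the wrong state at every slot. Three disjoint cases cover every such $\mathcal{P}$. If $\mathcal{P}$ stays in $L$ forever, arrivals are constantly $\kappa_H$, OPT-OFF plays constant $H$, and the long-run ratio equals $(\Delta\kappa + c_L)/c_H$. If $\mathcal{P}$ stays in $H$ forever, arrivals are constantly zero, OPT-OFF plays constant $L$, and the ratio equals $c_H/c_L$. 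Otherwise $\mathcal{P}$ makes at least one switch in each direction; over a minimal window spanning one complete $\mathcal{P}$-cycle, the forced mismatch costs $\mathcal{P}$ the round-trip switching cost $W$ plus an extra $\Delta\kappa$ on each high-arrival slot, while OPT-OFF — having offline access to the same sequence — can track the arrivals and pay only $c_H + c_L + W$ on that window, giving ratio $(\Delta\kappa + c_L + c_H + W)/(c_H + c_L + W)$. Taking the worst adversarial choice over the three cases yields the minimum-of-three lower bound.

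The main obstacle I foresee lies in part (a): namely, making the per-cycle lower bound on OPT-OFF airtight without constraining OPT-OFF's strategy. The delicate point is that OPT-OFF is not forced to cycle in sync with BLTN, so one must argue that the arrival pattern forcing BLTN's switches imposes an unavoidable floor $c_H\,W/(\Delta\kappa-\Delta c) + c_L\,W/\Delta c$ on OPT-OFF's rent-plus-service cost regardless of OPT-OFF's own state sequence. This rests on an ``arrival-slot charging'' step: a high-arrival slot contributes at least $c_H$ under any state (rent $c_H$ in $H$, or service cost $\ge c_H$ in $L$ using $\Delta\kappa>\Delta c$), and symmetrically for low-arrival slots. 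Once this lemma is in place, the remainder of the argument reduces to telescoping the per-cycle inequality and handling partial boundary cycles with a constant additive slack.
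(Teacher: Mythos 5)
Your part (b) is essentially the paper's argument: because $\mathcal{P}$ is deterministic the adaptive sequence (high arrivals while $\mathcal{P}$ rents $L$, low/zero arrivals while it rents $H$) can be precomputed, and the same three-way case split via the mediant inequality gives the stated minimum. No issues there.

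Part (a), however, has a genuine gap, and it sits exactly where you flagged the "delicate point" --- but the problem is worse than making the OPT-OFF floor airtight. Your decomposition is BLTN-centric (cycles delimited by BLTN's own switches), and two steps fail to close. First, the numerator: the counter $\Delta(t)$ reaching $W+\Delta\kappa$ bounds BLTN's cost on the window against the cost of the \emph{static alternative state} on that window, not against OPT-OFF's cost. Since OPT-OFF may switch in the middle of a BLTN cycle and be strictly cheaper than either static option there, $C^{\text{BLTN}} - C^{\text{static}} \le 2W+\Delta\kappa$ does not yield $C^{\text{BLTN}} - C^{\text{OPT-OFF}} \le 2W+\Delta\kappa$ per cycle. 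Second, the denominator: the extra $+W$ is not guaranteed within a single BLTN cycle. If OPT-OFF pre-positions itself in $H$ before the high-arrival burst that triggers BLTN's $L\to H$ switch and then makes a single $H\to L$ switch before the low stretch, it pays only $W_{HL}$ in that cycle and incurs no compensating excess rent or service cost on either the high or the low slots; the missing $W_{LH}$ only appears when amortized across neighbouring cycles, which your telescoping step does not set up.

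The paper resolves both issues with one structural ingredient you are missing: it partitions time by \emph{OPT-OFF's} switches and proves an interleaving lemma (its Lemma~\ref{lem:frameStructure}, built on Lemmas~\ref{lem:eviction_H}--\ref{lem:RR_one_L}) showing that in each such frame BLTN switches to $H$ and back exactly once, with its $L\to H$ switch occurring \emph{before} OPT-OFF's $H\to L$ switch and its $H\to L$ switch occurring \emph{after}. This synchronization guarantees (i) the two policies disagree only on two sub-intervals per frame, on each of which OPT-OFF is static, so the counter argument directly bounds $C^{\text{BLTN}}-C^{\text{OPT-OFF}}$ by $(W+\Delta\kappa-\Delta c)+(W+\Delta c)=2W+\Delta\kappa$; and (ii) both policies pay exactly one round-trip switch cost $W$ per frame, while OPT-OFF's minimum dwell times ($W/(\Delta\kappa-\Delta c)$ slots in $H$, $W/\Delta c$ slots in $L$, from its Lemmas~\ref{lem:OPT_slots_1} and~\ref{lem:OPT_slots_2}) supply the denominator $W\bigl(1+\tfrac{c_H}{\Delta\kappa-\Delta c}+\tfrac{c_L}{\Delta c}\bigr)$ without any per-slot charging. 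To repair your proof you would either need to prove this interleaving property anyway --- at which point you have the paper's proof --- or find a genuinely different amortization that handles OPT-OFF's asynchronous and possibly odd number of switches per BLTN cycle.
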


Theorem \ref{thm:BLTN_adv_online} provides a worst-case guarantee on the performance of the BLTN policy against the optimal offline policy. Unlike the BLTN policy, the optimal offline policy has complete information of the entire arrival sequence beforehand. 
We note that the competitive ratio of BLTN improves as the sum of switch costs ($W_{LH} + W_{HL}$) increases.
Also, the competitive ratio of BLTN increases linearly with the difference of the caps on requests served at the edge, $\Delta \kappa$. Supplementing it, we have Theorem \ref{thm:BLTN_adv_online} (b) which shows that the competitive ratio of \emph{any} deterministic online policy increases linearly with $\Delta \kappa$. 
While Theorem \ref{thm:BLTN_adv_online} (a) provides a worst-case guarantee for the BLTN policy, it must be noted that through subsequent simulations, the performance of BLTN is substantially closer to the optimal offline policy.


Next we summarize the performance of the BLTN  policy for i.i.d. stochastic arrivals (Assumption~\ref{assum_stochastic}, Section \ref{sec:setting}).

This lemma gives a bound on the expected difference between the costs incurred in a time-slot by the BLTN policy and the optimal online policy. We use the functions $f(\cdot), g(\cdot)$  which are defined in the Appendix. 
The functions $f(\cdot), g(\cdot)$ are formulated using Hoeffding’s inequality to bound the probability of certain events. We use the functions $f(\cdot), g(\cdot)$ for the sake of compactness.

\small
\begin{lemma}
	\label{lemma:difference_BLTNstochastic}
	Let $\Delta_t^{\mathcal{P}} = \mathbb{E}[C_t^{\mathcal{P}} - C_t^{\text{OPT-ON}}]$, $\mu_H =  \mathbb{E}[\underbar{$X$}_{t,H}]$, $\mu_L =  \mathbb{E}[\underbar{$X$}_{t,L}]$, $\Delta \mu = \mu_H - \mu_L$, $\Delta \kappa = \kappa_H - \kappa_L$, and  $\Delta c = c_H - c_L$.
	
 \begin{align*}
	f(\Delta \kappa,\lambda, W, \Delta \mu, \Delta c)
	= &(W + \Delta \mu - \Delta c) \times \Bigg( 2\bigg\lceil\frac{\lambda W}{\Delta \mu- \Delta c}\bigg\rceil\frac{\exp(-2\frac{(\Delta \mu- \Delta c)^2\frac{W}{\Delta c}}{(\Delta \kappa)^2})} {1-\exp(-2\frac{(\Delta \mu- \Delta c)^2}{ (\Delta \kappa)^2})} \\ &+ \exp(-2\frac{(\lambda-1)^2 W(\Delta \mu- \Delta c)}{\lambda  (\Delta \kappa)^2})\Bigg), \text{ and} \\
	g(\Delta \kappa,\lambda, W, \Delta \mu, \Delta c) 
	= &(\Delta c - \Delta \mu + W) \times \Bigg( \exp(-2\frac{(\lambda-1)^2(\Delta c - \Delta \mu)W}{\lambda  (\Delta \kappa)^2}) \\ &+ 2\bigg\lceil\frac{\lambda W}{\Delta c - \Delta \mu}\bigg\rceil \frac{\exp(-2\frac{(\Delta c - \Delta \mu)^2 \frac{W}{\Delta \kappa- \Delta c}}{ (\Delta \kappa)^2})} {1-\exp(-2\frac{(\Delta c - \Delta \mu)^2}{ (\Delta \kappa)^2})} \Bigg).
	\end{align*}
	Then, under Assumption \ref{assum_stochastic},
	\begin{itemize}
		\item[--] Case $\Delta \mu > \Delta c$:
		\begin{align*}
			\Delta_t^{\text{BLTN}}(\lambda) &\leq \begin{cases}
			W+ \Delta \mu - \Delta c, & t \leq \big\lceil\frac{\lambda W}{\Delta \mu- \Delta c}\big\rceil\\
            f(\Delta \kappa,\lambda, W, \Delta \mu, \Delta c), & t > \big\lceil\frac{\lambda W}{\Delta \mu- \Delta c}\big\rceil
		 \end{cases}.
		\end{align*}
		\item[--] Case $\Delta \mu < \Delta c$:
\begin{align*}
\Delta_t^{\text{BLTN}}(\lambda) &\leq \begin{cases}
			W+ \Delta c - \Delta \mu, & t \leq \big\lceil\frac{\lambda W}{\Delta c- \Delta \mu}\big\rceil\\
            g(\Delta \kappa, \lambda, W, \Delta \mu, \Delta c), & t > \big\lceil\frac{\lambda W}{\Delta c - \Delta \mu}\big\rceil
		 \end{cases}.
\end{align*}
	\end{itemize}
\end{lemma}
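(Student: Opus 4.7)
The plan is to leverage that under Assumption~\ref{assum_stochastic} the optimal online policy reduces to the better static policy: the expected per-slot rent plus service cost is $\E[X_t]+c_H-\mu_H$ in state $H$ and $\E[X_t]+c_L-\mu_L$ in state $L$, so OPT-ON stays in $H$ forever when $\Delta\mu>\Delta c$ and in $L$ forever when $\Delta\mu<\Delta c$ (any one-time initial switch cost is absorbed into the trivial bound for small $t$). Consequently I would write
\[
\Delta_t^{\text{BLTN}} \;\le\; \bigl(W+|\Delta\mu-\Delta c|\bigr)\,\Prob\bigl(r_t^{\text{BLTN}}\ne r^{\text{OPT-ON}}\ \text{or BLTN switches at }t\bigr),
\]
since on the complementary event the per-slot costs of BLTN and OPT-ON coincide pathwise. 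For $t\le\lceil\lambda W/|\Delta\mu-\Delta c|\rceil$ the deterministic bound $W+|\Delta\mu-\Delta c|$ is immediate, producing the first branch of both $f$ and $g$.

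For the regime $t>\lceil\lambda W/(\Delta\mu-\Delta c)\rceil$ in the case $\Delta\mu>\Delta c$ (the opposite case being symmetric), I would split the wrong-state event into (i) BLTN has never switched out of its initial state $L$, and (ii) BLTN has already reached $H$ and then spuriously switched back to $L$. For (i), Algorithm~\ref{algo:BLTNE} has not triggered iff the reflected walk $\Delta(\cdot)$, whose i.i.d.\ increments $Y_s=\min\{X_s,\kappa_H\}-\min\{X_s,\kappa_L\}+c_L-c_H$ have mean $\Delta\mu-\Delta c$ and range $\Delta\kappa$, stays $\le W$; Hoeffding applied to $\sum_{s=1}^t Y_s$ yields the bound $\exp(-2(t(\Delta\mu-\Delta c)-W)^2/(t(\Delta\kappa)^2))$, which at the threshold value of $t$ is itself bounded by $\exp(-2(\lambda-1)^2 W(\Delta\mu-\Delta c)/(\lambda(\Delta\kappa)^2))$, recovering the second term of $f$. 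For (ii), after BLTN enters $H$ the counter resets and evolves with increments $Z_s=-Y_s$ of mean $-(\Delta\mu-\Delta c)$, range $\Delta\kappa$, and per-step ceiling $\Delta c$, so a spurious switch in any window of length $n$ requires both $\sum_{s=1}^n Z_s>W$ and $n\ge\lceil W/\Delta c\rceil$; combining Hoeffding with the crude inequality $(W+na)^2\ge(na)^2$ for $a=\Delta\mu-\Delta c>0$ gives $\Prob(\sum_{s=1}^n Z_s>W)\le\exp(-2n(\Delta\mu-\Delta c)^2/(\Delta\kappa)^2)$, and summing this geometric series from $n=\lceil W/\Delta c\rceil$ reproduces the exponential ratio appearing in $f$.

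The remaining prefactor $2\lceil\lambda W/(\Delta\mu-\Delta c)\rceil$ I would extract by a renewal argument: once in state $L$ BLTN runs a counter with positive drift $\Delta\mu-\Delta c$, so the expected time from a spurious $H\to L$ transition back to $H$ is $O(W/(\Delta\mu-\Delta c))$, and only spurious switches triggered inside a window of that length immediately preceding $t$ can leave BLTN in $L$ at time $t$; union-bounding the single-window hitting probability over the $O(\lambda W/(\Delta\mu-\Delta c))$ possible starting slots of this recent window supplies the prefactor. The case $\Delta\mu<\Delta c$ then follows after swapping the roles of $H$ and $L$, replacing $\Delta\mu-\Delta c$ by $\Delta c-\Delta\mu$, and using $\Delta\kappa-\Delta c$ in place of $\Delta c$ as the per-step ceiling of the walk that now drives spurious $L\to H$ switches, producing $g$. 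The principal obstacle is precisely the clean justification of the $2\lceil\lambda W/(\Delta\mu-\Delta c)\rceil$ prefactor: a naive union bound over all $t$ slots would inflate it to something growing with $t$, and extracting a $t$-independent constant requires amortising the spurious-switch hitting probability over the renewal cycle so that only a constant-multiple-of-$\lambda$ slots need to be covered.
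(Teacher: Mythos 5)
Your overall strategy coincides with the paper's: bound $\Delta_t^{\text{BLTN}}$ by $(W+|\Delta\mu-\Delta c|)$ times the probability that BLTN is in the wrong state at $t$ or switches at $t$, use the trivial bound for $t\le\lceil\lambda W/|\Delta\mu-\Delta c|\rceil$, and for larger $t$ control the two failure modes by Hoeffding: failing to accumulate the switch-to-$H$ evidence over a window of length $\lceil\lambda W/(\Delta\mu-\Delta c)\rceil$ gives the $\exp(-2(\lambda-1)^2W(\Delta\mu-\Delta c)/(\lambda(\Delta\kappa)^2))$ term, and a spurious switch-to-$L$ trigger, which requires a window of at least $\lceil W/\Delta c\rceil$ slots because each summand of the trigger condition is at most $\Delta c$, gives the geometric-series term. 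Your Hoeffding computations for both terms match the paper's.

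The gap is in the step you yourself flag as the principal obstacle. You argue that only spurious switches triggered inside a window of length equal to the \emph{expected} return time $O(W/(\Delta\mu-\Delta c))$ immediately preceding $t$ can leave BLTN in $L$ at time $t$. A bound on the expected return time does not imply the return occurs within that window: the return time is random, and a spurious switch arbitrarily far in the past can leave BLTN in $L$ at $t$ on sample paths where recovery is slow, so this step as stated would fail. The paper avoids reasoning about \emph{when} the spurious switch occurred altogether. It fixes the window $[t-\lceil\lambda W/(\Delta\mu-\Delta c)\rceil,\,t-1]$ and conditions on three events: $F$, that the cumulative switch-to-$H$ condition holds over the entire window (so if BLTN is in $L$ throughout it, it must switch to $H$ by $t-1$, no matter how long it has already been in $L$); and $E_{t-1}^c$ and $E_t^c$, that no switch-to-$L$ condition is triggered at any slot of the window or at $t$ (so once BLTN is in $H$ anywhere in the window it stays in $H$ through $t$). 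A three-case analysis on BLTN's state at the start of the window then shows that on $F\cap E_{t-1}^c\cap E_t^c$ BLTN is in $H$ at $t$ and does not switch, and the prefactor $2\lceil\lambda W/(\Delta\mu-\Delta c)\rceil$ is simply the union bound over the window's slots applied to $E_{t-1}$ and $E_t$ --- no renewal or amortisation argument is needed. With this replacement your proof goes through; the remainder, including the symmetric case $\Delta\mu<\Delta c$ with $\Delta\kappa-\Delta c$ as the per-step ceiling, is as in the paper.
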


\normalsize

We can conclude that for large enough $t$, the difference between the cost incurred by BLTN and the optimal online policy in time-slot $t$, decays exponentially with $W$ and $|\Delta \mu - \Delta c|$.

Theorem \ref{thm:BLTN_stochastic_theorem} gives an upper bound on the ratio of the expected cost incurred under BLTN and the optimal online policy in a setting where request arrivals are stochastic. In the statement of this theorem, we used the functions $f$ and $g$  which are defined in Lemma \ref{lemma:difference_BLTNstochastic}.

\small
\begin{theorem}\label{thm:BLTN_stochastic_theorem}
		Let 
	$\nu = \mathbb{E}[X_t],$ $\mu_H =  \mathbb{E}[\underbar{$X$}_{t,H}]$, and $\mu_L =  \mathbb{E}[\underbar{$X$}_{t,L}]$. 
	Let the rent cost per time-slot be $c_H$ or $c_L$ depending on the states $S_H$ or $S_L$ respectively. Define $\Delta \mu = \mu_H - \mu_L$, $\Delta \kappa = \kappa_H - \kappa_L$,  $\Delta c = c_H - c_L$.
	Recall the definition of $\sigma^{\mathcal{P}}_T$ given in \eqref{eq:efficiencyRatio}.

		\begin{itemize}
		\item[--] Case $\Delta \mu > \Delta c$: For the function $f$ defined in Lemma \ref{lemma:difference_BLTNstochastic},
		
		\begin{align*}
		\hspace*{-7mm}
		\sigma^{\text{BLTN}}(T) \leq \min_{\lambda>1} \Bigg(1 + \dfrac{\big\lceil \frac{\lambda W}{\Delta \mu- \Delta c}\big\rceil (W+ \Delta \mu - \Delta c)}{T(\nu - \mu_H + c_H)} 
		+ \dfrac{\left(T - \big\lceil \frac{\lambda W}{\Delta \mu- \Delta c}\big\rceil \right)f(\Delta \kappa,\lambda,W, \Delta \mu, \Delta c)}{T(\nu - \mu_H + c_H)}\Bigg),
			\end{align*}

		\item[--] Case $\Delta \mu < \Delta c$: For the function $g$ defined in Lemma \ref{lemma:difference_BLTNstochastic},
		\begin{align*}
		\hspace*{-7mm}
		\sigma^{\text{BLTN}}(T) \leq \min_{\lambda>1} \Bigg(1 + \dfrac{\big\lceil \frac{\lambda W}{\Delta c- \Delta \mu}\big\rceil (W+ \Delta c - \Delta \mu)}{T(\nu - \mu_L + c_L)} 
		+ \dfrac{\left(T - \big\lceil \frac{\lambda W}{\Delta c- \Delta \mu}\big\rceil \right)g(\Delta \kappa,\lambda,W, \Delta \mu, \Delta c)}{T(\nu - \mu_L + c_L)}\Bigg).
		\end{align*}
		
	\end{itemize}
\end{theorem}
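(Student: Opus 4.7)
The plan is to reduce the theorem to Lemma \ref{lemma:difference_BLTNstochastic} together with a clean lower bound on the per-slot expected cost of OPT-ON. By linearity of expectation,
\begin{equation*}
\sigma^{\text{BLTN}}(T) \;=\; \frac{\sum_{t=1}^T \mathbb{E}[C_t^{\text{BLTN}}]}{\sum_{t=1}^T \mathbb{E}[C_t^{\text{OPT-ON}}]} \;=\; 1 + \frac{\sum_{t=1}^T \Delta_t^{\text{BLTN}}}{\sum_{t=1}^T \mathbb{E}[C_t^{\text{OPT-ON}}]},
\end{equation*}
so it suffices to upper-bound the numerator via Lemma \ref{lemma:difference_BLTNstochastic} (for any choice of $\lambda > 1$) and to lower-bound the denominator uniformly in $t$.

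For the denominator, I would argue that in any slot $t$ OPT-ON occupies either state $H$ or state $L$ and additionally incurs a non-negative switch cost, hence by conditioning on the chosen state and using $\mathbb{E}[X_t - \min\{X_t,\kappa_i\}] = \nu - \mu_i$ for $i \in \{H, L\}$,
\begin{equation*}
\mathbb{E}[C_t^{\text{OPT-ON}}] \;\geq\; \min\bigl\{ c_H + \nu - \mu_H,\; c_L + \nu - \mu_L \bigr\}.
\end{equation*}
A one-line comparison shows the difference of the two candidates is $\Delta c - \Delta \mu$, so when $\Delta \mu > \Delta c$ the minimum equals $\nu - \mu_H + c_H$, and when $\Delta \mu < \Delta c$ it equals $\nu - \mu_L + c_L$. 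Summing over $t$ yields $\sum_{t=1}^T \mathbb{E}[C_t^{\text{OPT-ON}}] \geq T(\nu - \mu_H + c_H)$ in the first case and $T(\nu - \mu_L + c_L)$ in the second, matching the denominators in the theorem.

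For the numerator, I would apply Lemma \ref{lemma:difference_BLTNstochastic} slot-by-slot, splitting the sum at $t_\lambda := \lceil \lambda W /(\Delta \mu - \Delta c)\rceil$ in Case $\Delta \mu > \Delta c$ (and at $\lceil \lambda W /(\Delta c - \Delta \mu)\rceil$ in the other case). On the initial window $t \leq t_\lambda$ the lemma gives the crude per-slot bound $W + \Delta \mu - \Delta c$, contributing at most $t_\lambda(W + \Delta \mu - \Delta c)$; on the tail $t > t_\lambda$ it gives the exponentially small $f(\Delta \kappa,\lambda,W,\Delta \mu,\Delta c)$, contributing at most $(T - t_\lambda)\, f(\Delta \kappa,\lambda,W,\Delta \mu,\Delta c)$. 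Dividing these by $T(\nu - \mu_H + c_H)$ and taking the infimum over $\lambda > 1$ produces the stated bound for Case $\Delta \mu > \Delta c$; Case $\Delta \mu < \Delta c$ is completely symmetric with $g$ replacing $f$ and $\mu_H, c_H$ replaced by $\mu_L, c_L$.

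The main obstacle is not the combining step, which is essentially bookkeeping, but justifying the per-slot OPT-ON lower bound cleanly: one must verify that no online policy can beat the better of the two static levels in expectation per slot and that dropping the non-negative switch-cost term preserves the inequality uniformly over the sample path. Once this is in place, the theorem follows directly from Lemma \ref{lemma:difference_BLTNstochastic} by the piecewise summation described above.
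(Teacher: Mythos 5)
Your proposal is correct and follows essentially the same route as the paper: the appendix establishes the per-slot lower bound $\mathbb{E}[C_t^{\text{OPT-ON}}] \geq \min\{c_H + \nu - \mu_H,\, c_L + \nu - \mu_L\}$ by conditioning on the rented level and discarding the non-negative switch cost (Lemma~\ref{lemma:optimal_causal}), and then combines it with the per-slot bounds of Lemma~\ref{lemma:difference_BLTNstochastic} by summing over the initial window and the tail exactly as you describe. Your identification of which candidate attains the minimum in each case, and the resulting denominators, matches the paper.
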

\normalsize

We observe that the bounds in Lemma \ref{lemma:difference_BLTNstochastic} worsen with an increase in $\Delta \kappa$. It must be noted that this is a bound obtained using Hoeffding's inequality which does not assume any specific i.i.d. process (Chernoff bound presents a stronger inequality here). For generic cases, the performance of BLTN does not worsen with an increase in $\Delta \kappa$ as shown via simulations in the next section.

For large values of $T$, the bound on the ratio of total expected costs reduces exponentially with the sum of switch costs $W$. The performance guarantees obtained for BLTN in this section show that BLTN performs well in both the general and the i.i.d. stochastic settings without making any assumptions on the request arrival process.

\section{Simulations}
\label{sec:simulations}
Since our analytical results only provide bounds on the BLTN policy, we now compare the performance of BLTN, FTPL, and the optimal online policy which uses the knowledge of the statistics of the arrival process to make decisions via simulations. Recall that both BLTN and FTPL do not know the statistics of the arrival process. 

\label{sec:simulations}
Unless stated otherwise, the parameter values in the simulations are as follows: $c_L = 300$;  $\Delta c = 300$; $k_L = 400$; $\Delta \kappa = 400$; $W_{HL} = W_{LH} = 300$. For the GE model, the transition probability from either state of the two-state Markov chain is 0.01. In Algorithm \ref{algo:FTPL}, we set $\gamma = 500$, through empirical observations of overall performance.  All the simulations have been averaged over the same set of 50 random seeds over 10,000 time-slots. The captions highlight the arrival sequence model - Assumption \ref{assum_stochastic} (i.i.d. Poisson) or Assumption \ref{assum_GE} (GE).




\vspace*{-10mm}
\begin{figure}[H]
\centering
  \subfloat[ i.i.d. Poisson]{%
   \includegraphics[width=0.48\linewidth]{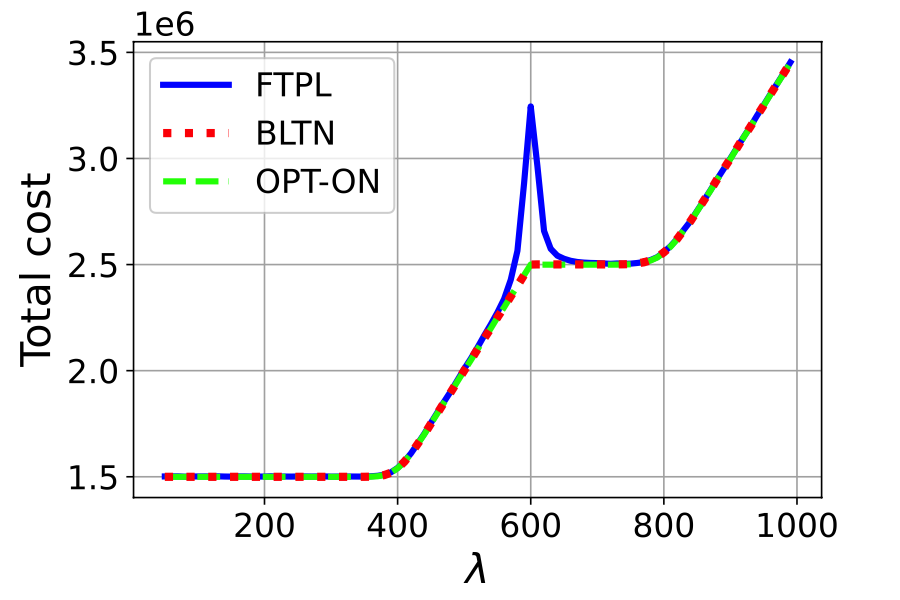}}
    \hspace{2pt}
    \subfloat[GE]{%
   \includegraphics[width=0.48\linewidth]{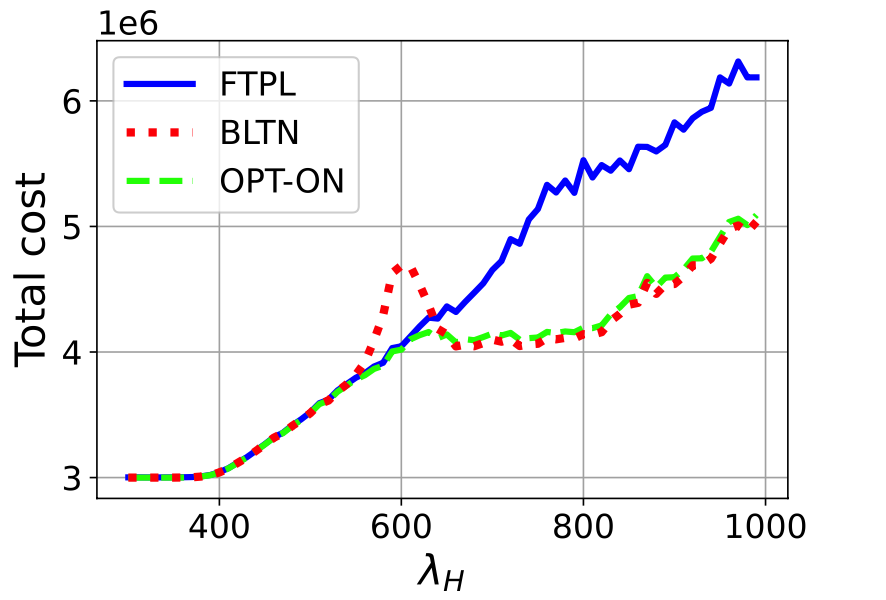}}
  \caption{\small Performance of various policies as a function of the request arrival rate. For the GE model, we fix $\lambda_L = 300$ and vary $\lambda_H$.}
  \label{fig:lambda1} 
\end{figure}

\vspace{-5mm}
In Figure \ref{fig:lambda1}, we see that for i.i.d. Poisson arrivals, the performance of BLTN matches that of FTPL for most of the $\lambda$ values considered except around $\lambda = 600$. At this value of $\lambda$, the expected cost incurred at levels $L$ and $H$ is very close and as a result, the switch cost incurred by FTPL is high, thus leading to poor performance. We note that for the GE model, BLTN outperforms FTPL for a large range of $\lambda_H$. The superior performance of BLTN is a consequence of the fact that unlike FTPL, BLTN puts added emphasis on recent arrival patterns when making decisions. The same trend follows in Figure \ref{fig:k}. 

\vspace*{-10mm}
\begin{figure}[H]
\centering
  \subfloat[ i.i.d. Poisson, $\lambda = 700$]{
    \includegraphics[width=0.48\linewidth]{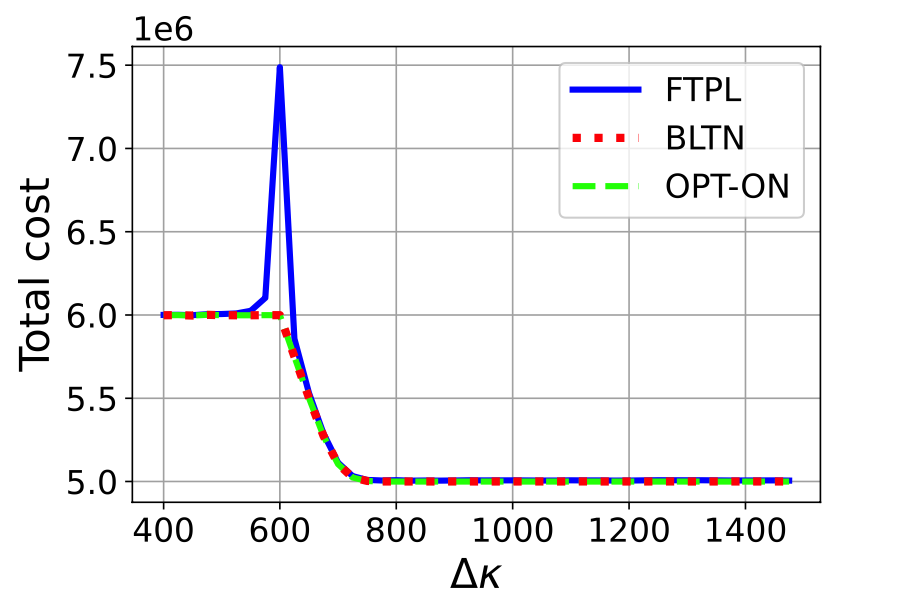}}
    \hspace{2pt}
    \subfloat[\small{$\text{GE}$, $\lambda_H = 800$, $\lambda_L = 300$}]{%
   \includegraphics[width=0.48\linewidth]{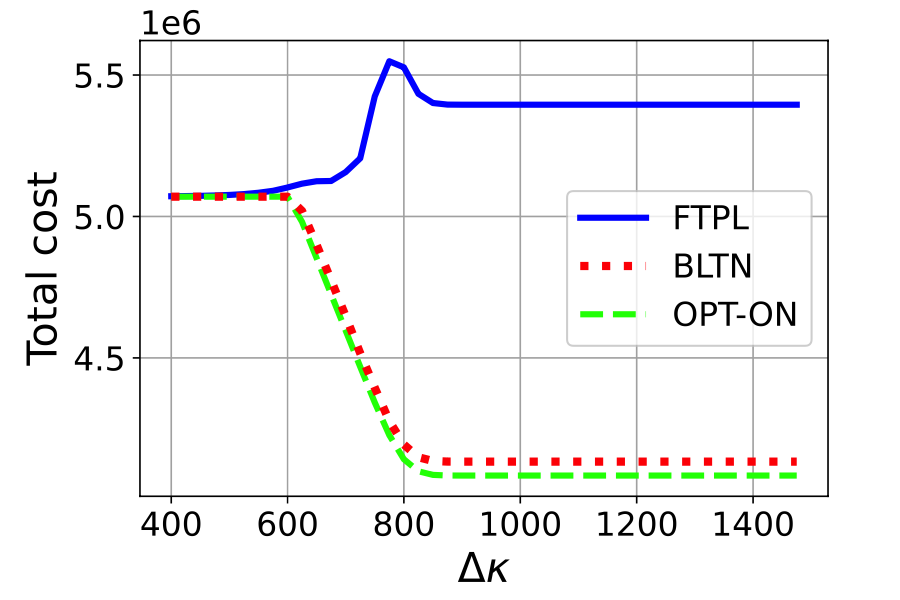}}
   \caption{ \small Performance of various policies as a function of $\Delta \kappa$}
  \label{fig:k} 
\end{figure}

\vspace*{-25mm}
\begin{figure}[H]
\centering
  \subfloat[ $\text{Poisson}(\lambda = 700)$]{%
   \includegraphics[width=0.48\linewidth]{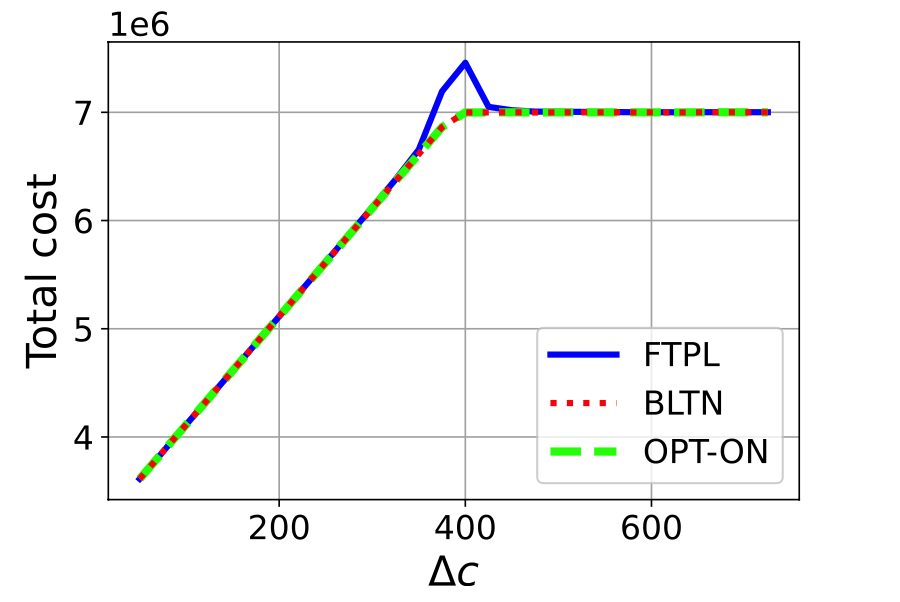}}
    \hspace{2pt}
    \subfloat[$GE(\lambda_H = 800,\lambda_L = 300)$]{%
   \includegraphics[width=0.48\linewidth]{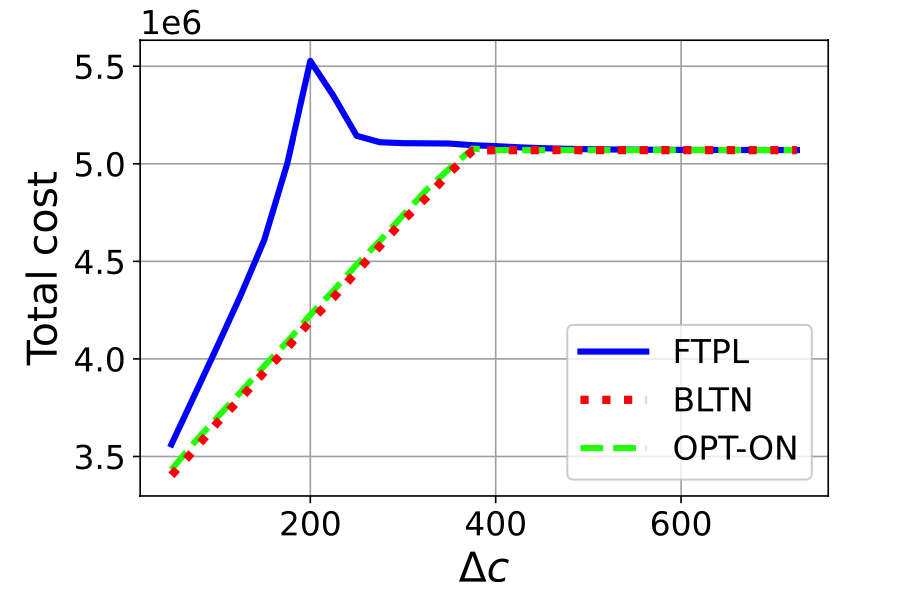}}
  \caption{\small Performance of various policies as a function of difference in rent costs $\Delta c$}
  \label{fig:c} 
\end{figure}

\vspace*{-25mm}
\begin{figure}[H]
\centering
  \subfloat[ $\text{Poisson}(\lambda = 700)$]{%
   \includegraphics[width=0.48\linewidth]{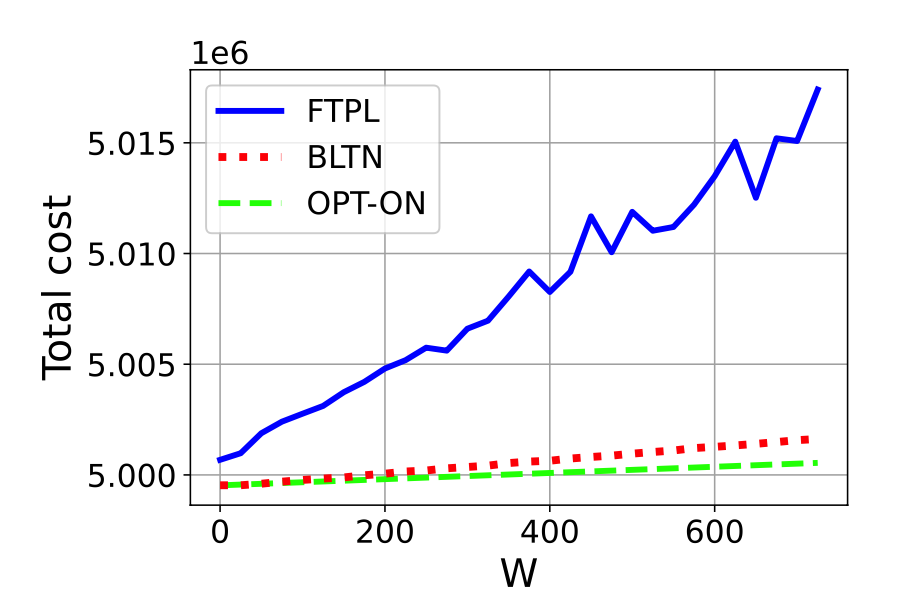}}
    \hspace{2pt}
    \subfloat[$GE(\lambda_H = 800,\lambda_L = 300)$]{%
   \includegraphics[width=0.48\linewidth]{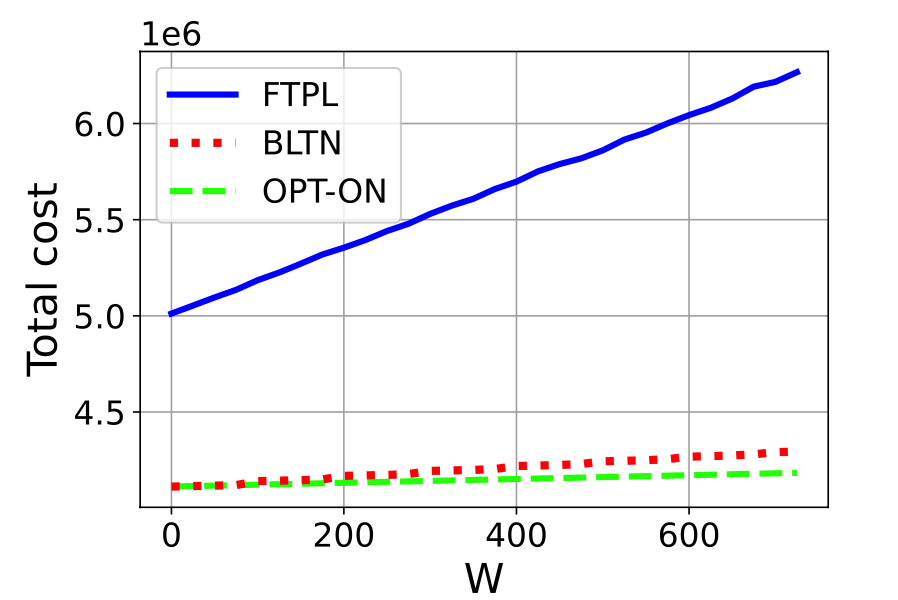}}
  \caption{\small Performance of various policies as a function of switch costs, $W = W_{HL} = W_{LH}$}
  \label{fig:W} 
\end{figure}

\vspace*{-15mm}
\begin{figure}[H]
\centering
  \subfloat[ $\text{Poisson}(\lambda = 700)$]{%
   \includegraphics[width=0.48\linewidth]{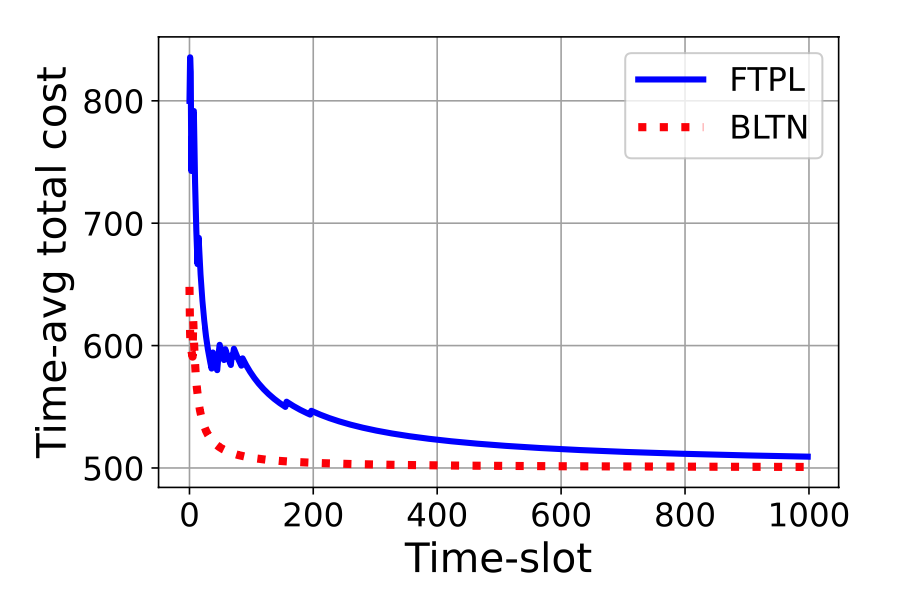}}
    \hspace{2pt}
    \subfloat[$GE(\lambda_H=800,\lambda_L = 300)$]{%
   \includegraphics[width=0.48\linewidth]{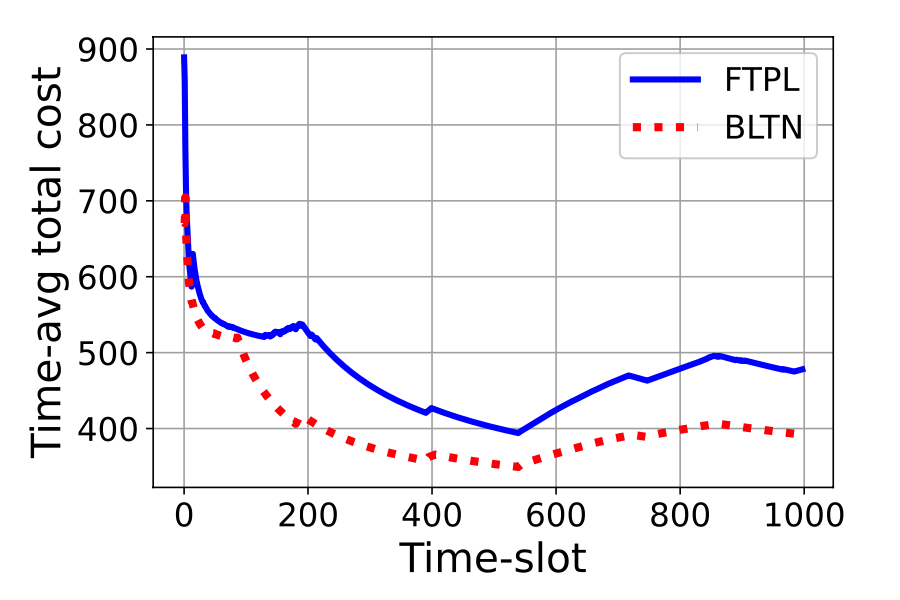}}
  \caption{Time averaged total cost}
  \label{fig:time-1000} 
\end{figure}


Through Theorems \ref{thm:BLTN_adv_online} and \ref{thm:BLTN_stochastic_theorem}, it is suggested that the bounds worsen as $\Delta \kappa$ increases. However, under Assumption \ref{assum_stochastic}, there is significant difference in performance only when $\Delta \kappa = \Delta c$, and under Assumption \ref{assum_GE}, there is significant difference whenever $\Delta \kappa > \Delta c$.

In arrival sequences characterized by Assumption \ref{assum_GE}, usually BLTN performs better owing to its ability to draw conclusions from history. FTPL fails to account for switching costs and so, is sub-optimal.

\section{Conclusion}
\label{sec:conclusion}
We consider the problem of renting edge computing resources for serving customer requests at the edge. We propose an online policy called Better-Late-Than-Never (BLTN) and provide performance guarantees for adversarial and stochastic request arrivals. Further, we compare the performance of BLTN with the widely studied FTPL policy. We conclude that BLTN outperforms FTPL for most settings considered, especially when the statistics of the arrival process are time-varying. The main reason for this is that BLTN makes decisions based on recent request arrival patterns while FTPL uses the entire request arrival history to make decisions.

\bibliographystyle{splncs04}
\bibliography{main}

\begin{thebibliography}{10}
\providecommand{\url}[1]{\texttt{#1}}
\providecommand{\urlprefix}{URL }
\providecommand{\doi}[1]{https://doi.org/#1}

\bibitem{abbas2018surveymobile}
Abbas, N., Zhang, Y., Taherkordi, A., Skeie, T.: Mobile edge computing: A
  survey. IEEE Internet of Things Journal  \textbf{5}(1),  450--465 (2018).
  \doi{10.1109/JIOT.2017.2750180}

\bibitem{ascigil2021resource}
Ascigil, O., Tasiopoulos, A., Phan, T.K., Sourlas, V., Psaras, I., Pavlou, G.:
  Resource provisioning and allocation in function-as-a-service edge-clouds.
  IEEE Transactions on Services Computing  (2021)

\bibitem{AWS}
AWS:  (2022), \url{https://aws.amazon.com}

\bibitem{belady1966study}
Belady, L.A.: A study of replacement algorithms for a virtual-storage computer.
  IBM Systems journal  \textbf{5}(2),  78--101 (1966)

\bibitem{bhattacharjee2020fundamental}
Bhattacharjee, R., Banerjee, S., Sinha, A.: Fundamental limits on the regret of
  online network-caching. Proceedings of the ACM on Measurement and Analysis of
  Computing Systems  \textbf{4}(2),  1--31 (2020)

\bibitem{bi2019joint}
Bi, S., Huang, L., Zhang, Y.J.A.: Joint optimization of service caching
  placement and computation offloading in mobile edge computing system. arXiv
  preprint arXiv:1906.00711  (2019)

\bibitem{borst2010distributed}
Borst, S., Gupta, V., Walid, A.: Distributed caching algorithms for content
  distribution networks. In: 2010 Proceedings IEEE INFOCOM. pp.~1--9. IEEE
  (2010)

\bibitem{chen2017collaborative}
Chen, L., Xu, J.: Collaborative service caching for edge computing in dense
  small cell networks. arXiv preprint arXiv:1709.08662  (2017)

\bibitem{chen2019budget}
Chen, L., Xu, J.: Budget-constrained edge service provisioning with demand
  estimation via bandit learning. arXiv preprint arXiv:1903.09080  (2019)

\bibitem{choi2019latency}
Choi, H., Yu, H., Lee, E.: Latency-classification-based deadline-aware task
  offloading algorithm in mobile edge computing environments. Applied Sciences
  \textbf{9}(21), ~4696 (2019)

\bibitem{GBmodel}
Gilbert, E.N.: Capacity of a burst-noise channel. The Bell System Technical
  Journal  \textbf{39}(5),  1253--1265 (1960).
  \doi{10.1002/j.1538-7305.1960.tb03959.x}

\bibitem{hoeffding1994probability}
Hoeffding, W.: Probability inequalities for sums of bounded random variables.
  In: The Collected Works of Wassily Hoeffding, pp. 409--426. Springer (1994)

\bibitem{IBM}
IBM:  (2022), \url{https://www.ibm.com/cloud/edge-computing}

\bibitem{OCI}
Infrastructure, O.C.:  (2022),
  \url{https://www.oracle.com/a/ocom/docs/cloud/edge-services-100.pdf}

\bibitem{jiang2020economic}
Jiang, C., Gao, L., Wang, T., Luo, J., Hou, F.: On economic viability of mobile
  edge caching. In: ICC 2020-2020 IEEE International Conference on
  Communications (ICC). pp.~1--6. IEEE (2020)

\bibitem{luo2021surveyresource}
Luo, Q., Hu, S., Li, C., Li, G., Shi, W.: Resource scheduling in edge
  computing: A survey. IEEE Communications Surveys \& Tutorials
  \textbf{23}(4),  2131--2165 (2021). \doi{10.1109/COMST.2021.3106401}

\bibitem{miao2020intelligent}
Miao, Y., Hao, Y., Chen, M., Gharavi, H., Hwang, K.: Intelligent task caching
  in edge cloud via bandit learning. IEEE Transactions on Network Science and
  Engineering  \textbf{8}(1),  625--637 (2020)

\bibitem{mouradian2017comprehensive}
Mouradian, C., Naboulsi, D., Yangui, S., Glitho, R.H., Morrow, M.J., Polakos,
  P.A.: A comprehensive survey on fog computing: State-of-the-art and research
  challenges. IEEE Communications Surveys \& Tutorials  \textbf{20}(1),
  416--464 (2017)

\bibitem{mukhopadhyay2021online}
Mukhopadhyay, S., Sinha, A.: Online caching with optimal switching regret. In:
  2021 IEEE International Symposium on Information Theory (ISIT). pp.
  1546--1551. IEEE (2021)

\bibitem{narayana2021online}
Narayana, V.C.L., Agarwala, M., Karamchandani, N., Moharir, S.: Online partial
  service hosting at the edge. In: 2021 International Conference on Computer
  Communications and Networks (ICCCN). pp.~1--9. IEEE (2021)

\bibitem{narayana2021renting}
Narayana, V.C.L., Moharir, S., Karamchandani, N.: On renting edge resources for
  service hosting. ACM Transactions on Modeling and Performance Evaluation of
  Computing Systems  \textbf{6}(2),  1--30 (2021)

\bibitem{prakash2020partial}
Prakash, R.S., Karamchandani, N., Kavitha, V., Moharir, S.: Partial service
  caching at the edge. In: 2020 18th International Symposium on Modeling and
  Optimization in Mobile, Ad Hoc, and Wireless Networks (WiOPT). pp.~1--8. IEEE
  (2020)

\bibitem{Puliafito:2019}
Puliafito, C., Mingozzi, E., Longo, F., Puliafito, A., Rana, O.: Fog computing
  for the internet of things: A survey. ACM Trans. Internet Technol.
  \textbf{19}(2),  18:1--18:41 (Apr 2019). \doi{10.1145/3301443},
  \url{http://doi.acm.org/10.1145/3301443}

\bibitem{satyanarayanan2017emergence}
Satyanarayanan, M.: The emergence of edge computing. Computer  \textbf{50}(1),
  30--39 (2017)

\bibitem{shi2016edge}
Shi, W., Cao, J., Zhang, Q., Li, Y., Xu, L.: Edge computing: Vision and
  challenges. IEEE Internet of Things Journal  \textbf{3}(5),  637--646 (2016).
  \doi{10.1109/JIOT.2016.2579198}

\bibitem{tran2019costa}
Tran, T.X., Chan, K., Pompili, D.: Costa: Cost-aware service caching and task
  offloading assignment in mobile-edge computing. In: 2019 16th Annual IEEE
  International Conference on Sensing, Communication, and Networking (SECON).
  pp.~1--9. IEEE (2019)

\bibitem{wang2019dynamic}
Wang, S., Urgaonkar, R., Zafer, M., He, T., Chan, K., Leung, K.K.: Dynamic
  service migration in mobile edge computing based on markov decision process.
  IEEE/ACM Transactions on Networking  \textbf{27}(3),  1272--1288 (2019).
  \doi{10.1109/TNET.2019.2916577}

\bibitem{yan2021pricing}
Yan, J., Bi, S., Duan, L., Zhang, Y.J.A.: Pricing-driven service caching and
  task offloading in mobile edge computing. IEEE Transactions on Wireless
  Communications  (2021)

\bibitem{zeng2020novel}
Zeng, F., Chen, Y., Yao, L., Wu, J.: A novel reputation incentive mechanism and
  game theory analysis for service caching in software-defined vehicle edge
  computing. Peer-to-Peer Networking and Applications pp. 1--15 (2020)

\bibitem{zhang2018power}
Zhang, M., Zheng, Z., Shroff, N.B.: An online algorithm for power-proportional
  data centers with switching cost. In: 2018 IEEE Conference on Decision and
  Control (CDC). pp. 6025--6032 (2018). \doi{10.1109/CDC.2018.8619443}

\bibitem{zhao2018red}
Zhao, T., Hou, I.H., Wang, S., Chan, K.: Red/led: An asymptotically optimal and
  scalable online algorithm for service caching at the edge. IEEE Journal on
  Selected Areas in Communications  \textbf{36}(8),  1857--1870 (2018)

\end{thebibliography}

\newpage
\section*{Appendix}
\label{sec:appendix}



\section{BLTN Naive Algorithm}
\begin{algorithm}
	\caption{Better Late Than Never (BLTN)}\label{algo:BLTN}
	\SetAlgoLined
	
	Input: Sum of switch costs $W$ units, maximum number of our service requests served by edge server in states $S_H$ and $S_L$ as $\kappa_H$ and $\kappa_L$,  rent costs $c_H$ and $c_L$, request arrival sequence: $\{x_l\}_{l=0}^t$, $t > 0$\\
	Output:  Service switching strategy $r_{t+1}$, $t > 0$\\
	Initialize:  Service switching variable $r_1 = L$\\
	\For {\textbf{each} time-slot $t$}{
		$r_{t+1}=r_t$\\
		
		\If{$r_t=H$ }{
			\For{$t_{\text{switch}} < \tau < t$}{
				\If{$\displaystyle\sum_{l=\tau}^t (x_l - \kappa_H)^+ + (t - \tau + 1) \times c_H \geq \displaystyle\sum_{l=\tau}^t(x_l-\kappa_L)^+ + (t - \tau + 1) \times c_L + W,$}{
					$r_{t+1}=L$, $t_{\text{switch}} =  t$\\
					break\\
				}			
			}
		}
		\If{$r_t=L$ }{ 
			\For{$t_{\text{switch}} < \tau < t$}{
				\If{$\displaystyle\sum_{l=\tau}^t (x_l - \kappa_L)^+ + (t - \tau + 1) \times c_L \geq \displaystyle\sum_{l=\tau}^t(x_l-\kappa_H)^+ + (t - \tau + 1) \times c_H + W,$}{
					$r_{t+1}=H$, $t_{\text{switch}} =  t$\\
					break\\
				}			
			}		
		}
		
	}
\end{algorithm}

\section{Proof Outlines}
\label{sec:proofoutlines}
In this section we outline the proofs of the results discussed in Section \ref{sec:mainResults}. The proof details follow. 

\subsection{Proof Outline for Theorem \ref{thm:BLTN_adv_online} (a)}
The time axis is partitioned into `frames'. Frame $i$ for $i \in \mathbb{Z}^+$ begins when OPT-OFF switches the state of the edge-server for the $2i^{\text{th}}$ time. The time interval before the first frame is denoted as Frame 0. 

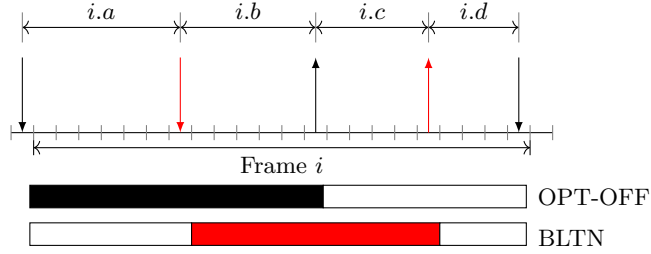
\begin{figure}[ht]
	\centering
	\begin{tikzpicture}
	\foreach \x in {0,0.3,0.6,...,7.2}{
		\draw[] (\x,0) --  (\x+0.3,0);
		\draw[gray] (\x,-1mm) -- (\x,1mm);
	}
	
	\draw[<->,color=black] (0.15,14mm) -- 
	node[above=0mm,pos=0.5]{$i.a$} (2.25,14mm);
	\draw[<->,color=black] (2.25,14mm) -- 
	node[above=0mm,pos=0.5]{$i.b$} (4.05,14mm);
	\draw[<->,color=black] (4.05,14mm) -- 
	node[above=0mm,pos=0.5]{$i.c$} (5.55,14mm);
	\draw[<->,color=black] (5.55,14mm) -- 
	node[above=0mm,pos=0.5]{$i.d$} (6.75,14mm);
	
	\foreach \x in {0.15,2.25,4.05,5.55,6.75}{
		\draw[gray] (\x,12mm) -- (\x,16mm);
	}
	\draw[gray] (7.2,-1mm) -- (7.2,1mm);
	\draw[-latex] (0.15,10mm) -- node[above=5mm]{} (0.15,0mm);
	\draw[-latex] (6.75,10mm) -- node[above=5mm]{} (6.75,0mm);
	\draw[-latex] (4.05,0mm) -- node[above=5mm]{} (4.05,10mm);
	\draw[-latex,color=red] (2.25,10mm) -- node[above=5mm]{} (2.25,0mm);
	\draw[-latex,color=red] (5.55,0mm) -- node[above=5mm]{} (5.55,10mm);
	
	\draw[<->,color=black] (0.3,-2mm) -- node[below=0mm,pos=0.5]{Frame 
		$i$} (6.9,-2mm);
	
	\draw[black] (0.3,-1mm) -- (0.3,-3mm);
	\draw[black] (6.9,-1mm) -- (6.9,-3mm);
	
	\node[] at (7.75,-0.85) {OPT-OFF};
	
	\filldraw[fill=black!] (0.25,-1) rectangle (4.15,-0.7);
	\filldraw[fill=white!40] (4.15,-1) rectangle (6.85,-0.7);
	\node[] at (7.45,-1.4) {BLTN};
	\filldraw[fill=white!40] (0.25,-1.5) rectangle (2.4,-1.2);
	\filldraw[fill=red!] (2.4,-1.5) rectangle (5.7,-1.2);
	\filldraw[fill=white!40] (5.7,-1.5) rectangle (6.85,-1.2);

	\end{tikzpicture}
	\caption{Figure representing 
	switching by OPT-OFF and BLTN in the $i^{\text{th}}$ frame. }
	\vspace*{-5mm}
\label{fig:OPT-OFF_BLTN_frame_proofoutline}
\end{figure}

In figure \ref{fig:OPT-OFF_BLTN_frame_proofoutline}, the downward arrows represent switches to state $S_H$, upward arrows indicate switches to state $S_L$.
	We designate red and black arrows to the BLTN and OPT-OFF policies respectively. 
	The state of the system under policies OPT-OFF and BLTN have been indicated using the two bars below the time axis. The solid red and solid black portions represent the 
	intervals during which BLTN and OPT-OFF host the service in state $S_H$ respectively.

We use the properties of BLTN and OPT-OFF to show that each frame has the following structure (Figure \ref{fig:OPT-OFF_BLTN_frame_proofoutline}):
\begin{enumerate}
	\item[--] BLTN switches state back and forth exactly once. 
	\item[--] BLTN starts in state $S_L$.
    \item[--] The switch to $S_H$ by BLTN in Frame $i$ is before OPT-OFF switches to state $S_L$ in Frame $i$.
    \item[--] The switch to $S_L$ by BLTN in Frame $i$ is after OPT-OFF switches to $S_L$ in Frame $i$.
\end{enumerate}
We note that both BLTN and OPT-OFF switch exactly once back and forth in a frame and therefore, the total switch cost under BLTN and OPT-OFF is identical for both policies. 


From Lemma 5 
, we have that after switching to state $S_H$, OPT-OFF hosts the service for at least $\frac{W}{\Delta \kappa-\Delta c}$ time-slots. Similarly, after switching to $S_L$, OPT-OFF hosts the service for at least $\frac{W}{\Delta c}$ (Lemma 6) time-slots.

Across subframes $i.b$ and $i.d$, the rent and service costs are the same for both the policies. However, the differences between the cumulative service and rent costs incurred by BLTN and OPT-OFF in subframes $i.a$ and $i.c$ are capped by $W + \Delta \kappa - \Delta c$ and $W + \Delta c$ respectively. Also, the total switch cost under both policies in a frame is the same. Using these results, we have that the total cost incurred by BLTN and OPT-OFF in a frame differs by at most $2W + \Delta \kappa$.

In case of the last frame, if OPT-OFF switches the service, the analysis is the same as that of the previous frame. Otherwise, we bound the ratio of the cost incurred by BLTN and cost incurred by OPT-OFF in the frame.

Combining the results for individual frames obtained earlier, the final result follows.

\subsection{Proof Outline for Theorem \ref{thm:BLTN_adv_online} (b)}

All deterministic online policies can be partitioned into two subsets.
A policy that hosts the service in $S_H$ during the first time-slot is in the first subset. All other polices are in the second subset.

In either of the subset, for each policy we construct an arrival sequence and calculate the ratio of the cost of the deterministic online policy and an alternative policy. Following from the definition, this quantity is a lower-bound on the competitive ratio of the deterministic policy.

\subsection{Proof Outline for Theorem \ref{thm:BLTN_stochastic_theorem}}

Through Lemma 15 
, we identify a lower bound on the cost per time-slot incurred by any online policy.

First, we consider the case where $\Delta \mu > \Delta c$. Through Hoeffding's inequality \cite{hoeffding1994probability}, we upper bound the probability of the state being $S_L$ during time-slot $t$ under BLTN. Conditioned on being in state $S_H$ and not switching to $S_L$ in time-slot $t$, the expected cumulative cost incurred by BLTN  is at the most $c_H+\mathbb{E}[X_t - \min\{X_t, \kappa_H\}]$ and is upper bounded by $W + c_L + \mathbb{E}[X_t - \min\{X_t, \kappa_L\}]$ otherwise. The result follows.

Next, we consider the case where $\Delta \mu < \Delta c$. We upper bound the probability of the service being hosted in state $S_H$ in time-slot $t$ under BLTN through Hoeffding's inequality  \cite{hoeffding1994probability}. 
Similarly, the result then follows by the fact that the expected total cost incurred by BLTN is no more than $c_L + \mathbb{E}[X_t - \min\{X_t, \kappa_L\}]$ and at the most $W + c_H + \mathbb{E}[X_t - \min\{X_t, \kappa_H\}]$ otherwise, conditioned on being in state $S_L$ and not switched to $S_H$ in time-slot $t$.

\label{sec:proofs}
\section{Proofs}
\label{sec:proofthm1a}

\subsection{Proof of Theorem \ref{thm:BLTN_adv_online}(a)}

The notation used in this subsection is given in Table \ref{table:proofs}. 
\begin{table}
	\centering
	{\renewcommand{\arraystretch}{1.4}%
	\begin{tabular}{ |c | l|} 
		\hline
		\textbf{Symbol}&  \textbf{Description} \\ 
		\hline
		\hline
		$t$ & Time index\\
		\hline
		$W_{HL}$ & Switch cost from H to L\\
		\hline
		$W_{LH}$ & Switch cost from L to H\\
		\hline
		W       & Sum of switching costs ($W_{HL} + W_{LH})$ \\
		\hline
		$c_t$       & Rent cost per time-slot  $t$\\
		\hline
		$c_{L}$       &$c_t$ in state $S_L$\\
		\hline
		$c_{H}$       &$c_t$ in state $S_H$\\
		\hline
		
		$x_t$       & Request arrivals in time-slot $t$\\
		\hline
		$\underbar{\text{$x$}}_{t,L}$       & $\min \{x_t, \kappa_L\}$ \\
		\hline
		$\underbar{\text{$x$}}_{t,H}$       & $\min \{x_t, \kappa_H\}$ \\
		\hline
		$\delta_{t,L}$       & $x_t-\underbar{\text{$x$}}_{t,L}$\\
		\hline
		$\delta_{t,H}$       & $x_t-\underbar{\text{$x$}}_{t,H}$\\
		\hline
		$r^*(t)$   & Indicator variable; \textit{H} if the state is $S_H$ by  OPT-OFF during \\ & time-slot $t$ and \textit{L} otherwise \\
		\hline
		$r^{\text{BLTN}}(t)$   & Indicator variable; \textit{H} if the state is $S_H$ by  BLTN during time-slot $t$ \\ & and \textit{L} otherwise \\
		\hline
		$\eta$   & Notation for a policy\\
		\hline
		$C^{\eta}(n,m)$   & Total cost incurred by the policy $\eta$ in the interval $[n,m]$\\
		\hline
		$C^{\text{OPT-OFF}}(n,m)$  & Total cost incurred by the offline optimal  policy in the \\ & interval $[n,m]$\\
		\hline
		Frame $i$  & The interval between the $i^{\text{th}}$ and the $(i+1)^{\text{th}}$ switch to $S_H$ by \\ & the offline optimal policy\\
		\hline
		$C^{\text{OPT-OFF}}(i)$   & Total cost incurred by the offline optimal  policy in Frame $i$\\
		\hline
		$C^{\text{BLTN}}(i)$   & Total cost incurred by BLTN in Frame $i$\\
		\hline
	\end{tabular}}
	\vspace*{10pt}
	\caption{Notation}
	\vspace*{-10pt}
	\label{table:proofs}
\end{table}

We use the following lemmas to prove Theorem \ref{thm:BLTN_adv_online}(a). 

The following two lemmas give lower bounds of the difference in the number of requests served at the edge-server in the time interval between switching from $L \rightarrow H \rightarrow L$, and $H \rightarrow L \rightarrow H$ by OPT-OFF.
\begin{lemma}\label{lem:lemma_opt_H}
	
	If $r^*(n-1) = L$, $r^*(t) = H$ for $n \leq t \leq m$ and $r^*(m+1) = L$, then,  
	$
	\displaystyle\sum_{l=n}^m (\underbar{\text{$x$}}_{l,H} - \underbar{\text{$x$}}_{l,L})  \geq W +	\displaystyle\sum_{l=n}^m (c_H - c_L).
	$
\end{lemma}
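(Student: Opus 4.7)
The plan is to prove this lemma via a standard exchange argument leveraging the optimality of OPT-OFF. The key observation is that since $r^*(n-1) = L$ and $r^*(m+1) = L$, we can construct an alternative feasible policy $\eta$ that agrees with OPT-OFF on the complement of $[n,m]$ and simply remains in state $L$ throughout $[n,m]$. Because the states at times $n-1$ and $m+1$ match, there is no cost difference between OPT-OFF and $\eta$ at any time outside $[n,m]$, nor does any boundary mismatch introduce spurious switch costs. Thus the optimality of OPT-OFF reduces to the inequality $C^{\text{OPT-OFF}}(n,m) \leq C^{\eta}(n,m)$.

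Next I would write out the two sides explicitly. On the OPT-OFF side, the total cost in $[n,m]$ is the switch cost $W_{LH}$ (incurred at time $n$), the switch cost $W_{HL}$ (incurred at time $m+1$, chargeable to this frame), the rent cost $\sum_{l=n}^{m} c_H$, and the service cost $\sum_{l=n}^{m}(x_l - \underbar{x}_{l,H})$. On the $\eta$ side, there are no switches in $[n,m]$ (since both endpoints are already $L$), the rent cost is $\sum_{l=n}^{m} c_L$, and the service cost is $\sum_{l=n}^{m}(x_l - \underbar{x}_{l,L})$. Substituting $W = W_{LH} + W_{HL}$ and rearranging $C^{\text{OPT-OFF}}(n,m) \leq C^{\eta}(n,m)$ yields
\[
\sum_{l=n}^{m} (\underbar{x}_{l,H} - \underbar{x}_{l,L}) \geq W + \sum_{l=n}^{m}(c_H - c_L),
\]
which is exactly the claim.

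The only real subtlety to watch is the bookkeeping of switch cost at the right endpoint: OPT-OFF incurs $W_{HL}$ at time $m+1$ when moving $H \to L$, and this switch disappears in the alternative, so it must be credited to the $[n,m]$ cost difference. Once this accounting is done consistently, no further computation is needed; the result follows directly from the one-line optimality comparison. I do not anticipate any major obstacle beyond being precise about which switches are charged to which interval.
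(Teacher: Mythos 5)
Your proposal is correct and matches the paper's own argument: both construct the alternative policy $\eta$ that stays in state $L$ throughout the interval, charge the two switch costs $W_{LH}+W_{HL}=W$ to OPT-OFF, and derive the inequality from the optimality comparison (the paper merely phrases it as a contradiction rather than a direct rearrangement). The switch-cost bookkeeping at the right endpoint that you flag is handled identically in the paper by extending the comparison interval to $m+1$.
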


\begin{proof}
	The cost incurred by OPT-OFF in $n \leq t \leq m + 1$ is $ W+\displaystyle\sum_{l=n}^m c_H + c_L +\displaystyle\sum_{l=n}^m \delta_{l,H} + \delta_{m+1,L}.$ We prove Lemma \ref{lem:lemma_opt_H} by contradiction. Let us assume that $\displaystyle\sum_{l=n}^{m} (\underbar{\text{$x$}}_{l,H} - \underbar{\text{$x$}}_{l,L}) < W +	\displaystyle\sum_{l=n}^{m} (c_H - c_L)$. We construct another policy $\eta$ which behaves same as OPT-OFF except that $r_{\eta}(t) = L$ for $n \leq t \leq m + 1$.
	The total cost incurred by $\eta$ in $n \leq t \leq m + 1$ is $\displaystyle\sum_{l=n}^{m+1} (\text{$x$}_{l} - \underbar{\text{$x$}}_{l, L}) + \displaystyle\sum_{l=n}^{m+1} c_L.$ It follows that
	$C^{\eta}(n,m+1)-C^{\text{OPT-OFF}}(n,m+1)= \displaystyle\sum_{l=n}^m (\underbar{\text{$x$}}_{l,H} - \underbar{\text{$x$}}_{l,L})-W_{LH} - W_{HL} -\displaystyle\sum_{l=n}^m (c_H - c_L),$ which is negative by our assumption. This contradicts the definition of the OPT-OFF policy, thus proving the result.
\end{proof}

\begin{lemma}\label{lem:lemma_opt_L}
	
	If $r^*(n-1) = H$, $r^*(t) = L$ for $n \leq t \leq m$ and $r^*(m+1) = H$, then,  
	$
	\displaystyle\sum_{l=n}^m (\underbar{\text{$x$}}_{l,H} - \underbar{\text{$x$}}_{l,L}) +  W \leq 	\displaystyle\sum_{l=n}^m (c_H - c_L).
	$
\end{lemma}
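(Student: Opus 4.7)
The plan is to mirror the contradiction argument used to prove Lemma \ref{lem:lemma_opt_H}, with the roles of $S_H$ and $S_L$ interchanged. Suppose toward contradiction that the inequality fails, i.e.\ $\sum_{l=n}^m (\underbar{\text{$x$}}_{l,H} - \underbar{\text{$x$}}_{l,L}) + W > \sum_{l=n}^m (c_H - c_L)$. I will exhibit an alternative policy $\eta$ that agrees with OPT-OFF everywhere outside the window $[n,m]$ but hosts at state $S_H$ throughout $[n,m]$, and then show that $\eta$ strictly beats OPT-OFF, contradicting optimality.

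First I would write down the cost OPT-OFF incurs on $[n,m+1]$. Since OPT-OFF switches from $S_H$ to $S_L$ at time $n$ and back from $S_L$ to $S_H$ at time $m+1$, this equals
\[
C^{\text{OPT-OFF}}(n,m+1) = W_{HL} + W_{LH} + (m-n+1)\,c_L + c_H + \sum_{l=n}^m (x_l - \underbar{\text{$x$}}_{l,L}) + (x_{m+1} - \underbar{\text{$x$}}_{m+1,H}).
\]
Next, since $r^*(n-1)=H$ and $r^*(m+1)=H$, the policy $\eta$ does not switch at either boundary, so it incurs no switch cost in $[n,m+1]$ and its total cost is $(m-n+2)\,c_H + \sum_{l=n}^{m+1}(x_l - \underbar{\text{$x$}}_{l,H})$. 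Subtracting and cancelling the common terms at time $m+1$ yields
\[
C^{\eta}(n,m+1) - C^{\text{OPT-OFF}}(n,m+1) = \sum_{l=n}^m (c_H - c_L) - \sum_{l=n}^m (\underbar{\text{$x$}}_{l,H} - \underbar{\text{$x$}}_{l,L}) - W,
\]
which is strictly negative under our contrary assumption, contradicting the optimality of OPT-OFF.

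There is no real obstacle here beyond careful bookkeeping of the two switch costs: the construction of $\eta$ must preserve the boundary states so that \emph{both} $W_{HL}$ (at time $n$) and $W_{LH}$ (at time $m+1$) disappear, giving the full saving of $W = W_{HL} + W_{LH}$. Once that is set up correctly, the remaining algebra is a direct mirror of the proof of Lemma \ref{lem:lemma_opt_H}, and converting the strict-inequality contradiction into the weak inequality $\leq$ stated in the lemma is immediate.
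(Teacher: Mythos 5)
Your proposal is correct and follows essentially the same contradiction argument as the paper: replace OPT-OFF's sojourn in $S_L$ on $[n,m]$ by staying in $S_H$, observe that both boundary switch costs $W_{HL}$ and $W_{LH}$ vanish, and compute the cost difference $\sum_{l=n}^m (c_H - c_L) - \sum_{l=n}^m (\underbar{\text{$x$}}_{l,H} - \underbar{\text{$x$}}_{l,L}) - W$, which is negative under the contrary assumption. Your bookkeeping of the two switch costs matches the paper's computation exactly.
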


\begin{proof}
	The cost incurred by OPT-OFF in $n \leq t \leq m + 1$ is $ W+\displaystyle\sum_{l=n}^m c_L + c_H +\displaystyle\sum_{l=n}^m \delta_{l,L} + \delta_{m+1,H}.$ We prove Lemma \ref{lem:lemma_opt_L} by contradiction. Let us assume that $\displaystyle\sum_{l=n}^m (\underbar{\text{$x$}}_{l,H} - \underbar{\text{$x$}}_{l,L}) +  W > \displaystyle\sum_{l=n}^m (c_H - c_L)$. We construct another policy $\eta$ which behaves same as OPT-OFF except that $r_{\eta}(t) = H$ for $n \leq t \leq m$.
	The total cost incurred by $\eta$ in $n \leq t \leq m+1$ is $\displaystyle\sum_{l=n}^{m+1} (\text{$x$}_{l} - \underbar{\text{$x$}}_{l, H}) + \displaystyle\sum_{l=n}^{m+1} c_H.$ It follows that
	$C^{\eta}(n,m+1)-C^{\text{OPT-OFF}}(n,m+1)= \displaystyle\sum_{l=n}^m (\underbar{\text{$x$}}_{l,L} - \underbar{\text{$x$}}_{l,H})- W + \displaystyle\sum_{l=n}^m (c_H - c_L),$ which is negative by our assumption. This contradicts the definition of the OPT-OFF policy, thus proving the result.
\end{proof}

The next lemma shows that if the difference in the number of requests that can be served by the edge server in the two states in a time-interval exceeds a certain value (which is a function of the length of that time-interval) and the state is $S_L$ at the beginning of this time-interval, then OPT-OFF switches to state $S_H$ the service at least once in the time-interval.
\begin{lemma}\label{lem:OPT_download_H}
	If $r^*(n-1) = L$, and $\displaystyle\sum_{l=n}^m (\underbar{\text{$x$}}_{l,H} - \underbar{\text{$x$}}_{l,L})  \geq W +	\displaystyle\sum_{l=n}^m (c_H - c_L)$, then OPT-OFF switches the state to $S_H$ at least once in the interval from time-slots $n$ to $m$.
\end{lemma}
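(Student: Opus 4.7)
The natural approach is a proof by contradiction mirroring the constructive arguments used in Lemmas~\ref{lem:lemma_opt_H} and~\ref{lem:lemma_opt_L}. I would suppose that OPT-OFF never switches to $S_H$ during $[n,m]$. Since $r^*(n-1)=L$, this forces $r^*(t)=L$ for every $t\in[n,m]$. I would then construct an alternative policy $\eta$ that agrees with OPT-OFF outside $[n,m]$ and keeps the edge server in state $S_H$ throughout $[n,m]$, and compare the total costs in order to derive a contradiction.

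The bulk of the work is to evaluate $C^{\eta}-C^{\text{OPT-OFF}}$ term by term. Inside $[n,m]$, the rent-cost contribution changes by $+\sum_{l=n}^{m}(c_H-c_L)$ and the service-cost contribution changes by $-\sum_{l=n}^{m}(\underbar{\text{$x$}}_{l,H}-\underbar{\text{$x$}}_{l,L})$. For the switching cost I would split into cases based on $r^*(m+1)$: if $r^*(m+1)=L$, or if $m$ is the last slot, then $\eta$ pays at most two extra switches at the boundary, contributing at most $W_{LH}+W_{HL}=W$; if instead $r^*(m+1)=H$, then $\eta$ saves exactly the $L\to H$ switch that OPT-OFF was going to pay at $m+1$, and only incurs $W_{LH}\le W$ extra. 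In every subcase one obtains the uniform bound $C^{\eta}-C^{\text{OPT-OFF}}\le W+\sum_{l=n}^{m}(c_H-c_L)-\sum_{l=n}^{m}(\underbar{\text{$x$}}_{l,H}-\underbar{\text{$x$}}_{l,L})$.

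Plugging in the hypothesis $\sum_{l=n}^{m}(\underbar{\text{$x$}}_{l,H}-\underbar{\text{$x$}}_{l,L})\ge W+\sum_{l=n}^{m}(c_H-c_L)$ would then yield $C^{\eta}\le C^{\text{OPT-OFF}}$, contradicting the optimality of OPT-OFF under the standard tie-breaking convention in which OPT-OFF is chosen to be the optimal offline schedule that switches to $S_H$ whenever doing so is weakly beneficial. The main technical hurdle is the boundary case analysis at time-slot $m+1$: one must carefully verify that the extra switching paid by $\eta$ is controlled by $W$ under every possible continuation of OPT-OFF beyond $[n,m]$, and handle the borderline equality case so that the conclusion about at least one switch to $S_H$ in $[n,m]$ is preserved.
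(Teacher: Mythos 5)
Your proposal matches the paper's proof essentially exactly: the paper also argues by contradiction, constructs an alternative policy $\eta$ with $r_\eta(t)=H$ for $n\le t\le m$, and computes $C^{\eta}(n,m)-C^{\text{OPT-OFF}}(n,m)= W +\sum_{l=n}^m (c_H - c_L)-\sum_{l=n}^m (\underbar{\text{$x$}}_{l,H} - \underbar{\text{$x$}}_{l,L})$. If anything, your treatment is slightly more careful than the paper's, which simply charges $\eta$ the full switch cost $W$ without the boundary case analysis at $m+1$ and asserts the difference is ``negative'' even though the hypothesis only gives a non-strict inequality (so a tie-breaking convention, as you note, is indeed needed).
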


\begin{proof}
	We prove Lemma \ref{lem:OPT_download_H} by contradiction. We  construct another policy $\eta$ which behaves same as OPT-OFF except that $r_{\eta}(t) = H$ for $n \leq t \leq m$. 
	The total cost incurred by $\eta$ in $n \leq t \leq m$ is $C^{\eta}(n,m)=W +\displaystyle\sum_{l=n}^m c_H+\displaystyle\sum_{l=n}^m \delta_{l,H}$. It follows that
	$C^{\eta}(n,m)-C^{\text{OPT-OFF}}(n,m)= W +\displaystyle\sum_{l=n}^m (c_H - c_L)-\displaystyle\sum_{l=n}^m (\underbar{\text{$x$}}_{l,H} - \underbar{\text{$x$}}_{l,L}),$ which is negative. Hence there exists at least one policy $\eta$ which performs better than OPT-OFF. This contradicts the definition of the OPT-OFF policy, thus proving the result.
\end{proof}


The next lemma provides a lower bound on the duration for which OPT-OFF hosts  the service once it is fetched.
\begin{lemma}\label{lem:OPT_slots_1}
	Once OPT-OFF switches to state H,  the state is constant for at least $\frac{ W}{(\kappa_H - \kappa_L) -(c_H - c_L)}$ slots.
\end{lemma}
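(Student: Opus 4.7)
The plan is to prove the lemma by a straightforward contradiction argument that combines Lemma~\ref{lem:lemma_opt_H} with the trivial per-slot upper bound on the service savings obtained by being in state $S_H$ rather than $S_L$.

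First I would set up the contradiction. Suppose OPT-OFF switches to $S_H$ at time-slot $n$ (so $r^*(n-1)=L$ and $r^*(n)=H$) and then switches back to $S_L$ at some time-slot $m+1$ with $r^*(t)=H$ for all $n\le t\le m$. If the state never switches back within the horizon, the claim is vacuously stronger, so we may assume such an $m$ exists. Assume for contradiction that $m-n+1 < \frac{W}{\Delta\kappa - \Delta c}$.

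Next, I would invoke Lemma~\ref{lem:lemma_opt_H} directly on this interval. Since $r^*(n-1)=L$, $r^*(t)=H$ for $n\le t\le m$, and $r^*(m+1)=L$, the lemma gives
\[
\sum_{l=n}^m \bigl(\underbar{$x$}_{l,H} - \underbar{$x$}_{l,L}\bigr) \;\ge\; W + \sum_{l=n}^m (c_H - c_L) \;=\; W + (m-n+1)\,\Delta c.
\]
Separately, from the definitions $\underbar{$x$}_{l,H}=\min\{x_l,\kappa_H\}$ and $\underbar{$x$}_{l,L}=\min\{x_l,\kappa_L\}$ with $\kappa_L<\kappa_H$, one has the deterministic per-slot inequality $\underbar{$x$}_{l,H}-\underbar{$x$}_{l,L}\le \kappa_H-\kappa_L = \Delta\kappa$, valid for every arrival $x_l\ge 0$. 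Summing across the interval yields
\[
\sum_{l=n}^m \bigl(\underbar{$x$}_{l,H} - \underbar{$x$}_{l,L}\bigr) \;\le\; (m-n+1)\,\Delta\kappa.
\]

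Finally, I would combine the two inequalities to obtain $(m-n+1)(\Delta\kappa-\Delta c)\ge W$, i.e., $m-n+1\ge \frac{W}{\Delta\kappa-\Delta c}$, which is precisely the contradiction with our standing assumption (the denominator is strictly positive by the remark restricting the model to $\Delta\kappa>\Delta c$). Hence the duration must be at least $\frac{W}{\Delta\kappa-\Delta c}$, establishing the claim. There is no real obstacle here; the only thing to be careful about is the edge case where OPT-OFF never switches back to $S_L$ after entering $S_H$, which is handled by observing that the stated bound is only a lower bound on the dwell time and is trivially satisfied in that case.
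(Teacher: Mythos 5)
Your proof is correct and follows essentially the same route as the paper's: both invoke Lemma~\ref{lem:lemma_opt_H} on the dwell interval and combine it with the per-slot bound $\underbar{$x$}_{l,H}-\underbar{$x$}_{l,L}\le \kappa_H-\kappa_L$ to deduce $(m-n+1)(\Delta\kappa-\Delta c)\ge W$. The only cosmetic difference is that you phrase it as a contradiction while the paper argues directly, and you explicitly note the vacuous case where OPT-OFF never switches back.
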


\begin{proof}
	Suppose OPT-OFF switches to $S_H$ at the end of the $(n-1)^{\text{th}}$ time-slot and switches to $S_L$ at the end of time-slot $m > n$. From Lemma \ref{lem:lemma_opt_H}, $\displaystyle\sum_{l=n}^m (\underbar{\text{$x$}}_{l,H} - \underbar{\text{$x$}}_{l,L}) \geq W +\displaystyle\sum_{l=n}^m (c_H - c_L)$. Since $(\underbar{\text{$x$}}_{l,H} - \underbar{\text{$x$}}_{l,L}) \leq (m-n+1)\times(\kappa_H - \kappa_L)$ and $\displaystyle\sum_{l=n}^m (c_H - c_L) \geq (m-n+1)\times(c_H - c_L)$, $(m-n+1)\times(\kappa_H - \kappa_L) \geq W+(m-n+1)\times(c_H - c_L)$, i.e, $(m-n+1) \geq \frac{W}{(\kappa_H - \kappa_L)-(c_H - c_L)}$. This proves the result.
\end{proof}

\begin{lemma}\label{lem:OPT_slots_2}
	Once OPT-OFF switches to state L, the state is constant for at least $\frac{ W}{(c_H - c_L)}$ slots.
\end{lemma}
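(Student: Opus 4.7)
The plan is to mirror the argument of Lemma \ref{lem:OPT_slots_1} but now invoke the companion bound in Lemma \ref{lem:lemma_opt_L}. Suppose OPT-OFF switches to state $S_L$ at the end of time-slot $n-1$, keeps $r^*(t)=L$ for $n \leq t \leq m$, and switches back to $S_H$ at the end of time-slot $m$. Our goal is to show $m - n + 1 \geq \frac{W}{c_H - c_L}$.

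First, I would apply Lemma \ref{lem:lemma_opt_L} to the interval $[n,m]$, which yields
\begin{equation*}
\sum_{l=n}^m (\underbar{\text{$x$}}_{l,H} - \underbar{\text{$x$}}_{l,L}) + W \leq \sum_{l=n}^m (c_H - c_L).
\end{equation*}
Next, I would drop the non-negative term on the left: since $\kappa_H > \kappa_L$, we have $\underbar{\text{$x$}}_{l,H} = \min\{x_l,\kappa_H\} \geq \min\{x_l,\kappa_L\} = \underbar{\text{$x$}}_{l,L}$ for every $l$, so $\sum_{l=n}^m (\underbar{\text{$x$}}_{l,H} - \underbar{\text{$x$}}_{l,L}) \geq 0$. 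Substituting this bound gives $W \leq \sum_{l=n}^m (c_H - c_L) = (m - n + 1)(c_H - c_L)$, and rearranging yields the claimed lower bound on the length of the interval.

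There is no real obstacle here; the proof is essentially a one-line consequence of Lemma \ref{lem:lemma_opt_L} combined with the monotonicity of $\min\{x,\kappa\}$ in $\kappa$. The only thing to be careful about is the sign direction: in Lemma \ref{lem:OPT_slots_1} one had to use both $(\kappa_H - \kappa_L)$ as an upper bound on the service-count difference \emph{and} the trivial lower bound on the rent difference, whereas here the argument is strictly simpler because the non-negative service-count term appears on the same side of the inequality as $W$, so it can simply be discarded.
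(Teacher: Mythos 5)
Your proof is correct and matches the paper's argument exactly: both invoke Lemma \ref{lem:lemma_opt_L} on the interval $[n,m]$, discard the non-negative service-count difference, and rearrange $W \leq (m-n+1)(c_H-c_L)$. No differences worth noting.
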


\begin{proof}
	Suppose OPT-OFF switches to $S_L$ at the end of the $(n-1)^{\text{th}}$ time-slot and switches to $S_L$ at the end of time-slot $m > n$. From Lemma \ref{lem:lemma_opt_L}, $\displaystyle\sum_{l=n}^m (\underbar{\text{$x$}}_{l,H} - \underbar{\text{$x$}}_{l,L}) + W \leq \displaystyle\sum_{l=n}^m (c_H - c_L)$. Since $(\underbar{\text{$x$}}_{l,H} - \underbar{\text{$x$}}_{l,L}) \geq 0$ and $\displaystyle\sum_{l=n}^m (c_H - c_L) = (m-n+1)\times(c_H - c_L)$, $W \leq (m-n+1)\times(c_H - c_L)$, i.e, $(m-n+1) \geq \frac{W}{(c_H - c_L)}$. This proves the result.
\end{proof}

The next lemma gives an upper bound on the difference in the number of requests that can be served by the edge server (between the two states subject to its computation power constraints) in a time-interval such that BLTN is in state $S_L$ during the time-interval and fetches it in the last time-slot of the time-interval.

\begin{lemma}\label{lem:max_requests_L}
	Let $r^{\text{BLTN}}(n-1)=H$, $r^{\text{BLTN}}(t)=L$ for $n \leq t \leq m$ and $r^{\text{BLTN}}(m+1)=H$. Then for any $n\leq n'< m$,  $\displaystyle\sum_{l=n'}^m \left(\underbar{\text{$x$}}_{l,H} - \underbar{\text{$x$}}_{l,L}\right) < \displaystyle\sum_{l=n'}^{m-1} (c_H - c_L)+W + (\kappa_H - \kappa_L).$   
\end{lemma}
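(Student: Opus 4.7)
The plan is to exploit the BLTN decision rule at the time slot $m-1$ immediately preceding the switch. Since BLTN remains in state $L$ throughout $[n,m]$ and only triggers a switch at time $m$, the decision rule at time $m-1$ must have rejected a switch for every candidate $\tau$ in its search range. Converting this ``no-switch at $m-1$'' statement into an explicit per-index inequality, together with a trivial slot-wise bound on the final slot $m$, yields the claim.

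\textbf{Key inequality from non-switching.} Because BLTN switched from $H$ to $L$ at time $n-1$, we have $t_{\text{switch}} = n-1$, so the search range at time $m-1$ is $\tau \in \{n, n+1, \ldots, m-2\}$. Since BLTN is in state $L$ at time $m-1$ and does not switch, for every such $\tau$ the switching condition fails:
\[
\sum_{l = \tau}^{m-1} (\underbar{x}_{l,H} - \underbar{x}_{l,L}) \;<\; W + \sum_{l=\tau}^{m-1} (c_H - c_L).
\]
This is the engine of the proof: it caps the cumulative excess of $H$ over $L$ in terms of the rent-cost gap and the switch cost.

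\textbf{Main step.} Given $n'$ with $n \le n' < m$, I split
\[
\sum_{l=n'}^m (\underbar{x}_{l,H} - \underbar{x}_{l,L}) = \sum_{l=n'}^{m-1} (\underbar{x}_{l,H} - \underbar{x}_{l,L}) + (\underbar{x}_{m,H} - \underbar{x}_{m,L}),
\]
and bound the last term by $\kappa_H - \kappa_L$. For $n' \le m-2$, I substitute $\tau = n'$ into the key inequality to bound the first sum, and combining the two yields the desired strict inequality, with the $\kappa_H - \kappa_L$ slack on the right-hand side absorbing the uncontrolled final slot.

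\textbf{Boundary case and main obstacle.} The delicate point is $n' = m-1$, where the non-switching statement at time $m-1$ no longer provides a bound with $\tau = n'$, since the search range excludes $\tau = m-1$. In this edge case I fall back to the trivial per-slot bound $\underbar{x}_{l,H} - \underbar{x}_{l,L} \le \kappa_H - \kappa_L$ for both $l = m-1$ and $l = m$, and use the right-hand side's $W + (c_H - c_L)$ slack (together with the standing assumption $\Delta \kappa > \Delta c$) to close the inequality. The main technical subtlety of the proof is precisely this boundary accounting: the additive $+(\kappa_H-\kappa_L)$ on the right-hand side is engineered to absorb the single uncontrolled slot that lies outside the scope of the non-switching check at time $m-1$.
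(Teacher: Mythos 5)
Your main argument is exactly the paper's: since BLTN stays in state $S_L$ through time-slot $m-1$, the switching test at $t=m-1$ fails for every $\tau$ in its search range, which caps $\sum_{l=\tau}^{m-1}(\underbar{\text{$x$}}_{l,H}-\underbar{\text{$x$}}_{l,L})$ by $\sum_{l=\tau}^{m-1}(c_H-c_L)+W$, and the uncontrolled final slot $m$ is absorbed by the additive $\kappa_H-\kappa_L$ on the right-hand side. For $n'\le m-2$ this is precisely the paper's one-line proof and is correct. You have also spotted a point the paper passes over silently: because the search range at time $m-1$ is $t_{\text{switch}}<\tau<m-1$, the non-switching condition says nothing about the length-one window $\tau=m-1$, so the case $n'=m-1$ genuinely needs separate treatment.

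The gap is that your treatment of that boundary case does not close. Bounding both slots $m-1$ and $m$ by $\kappa_H-\kappa_L$ leaves you comparing $2(\kappa_H-\kappa_L)$ on the left with $(c_H-c_L)+W+(\kappa_H-\kappa_L)$ on the right, which requires $\kappa_H-\kappa_L<(c_H-c_L)+W$. The standing assumption $\Delta\kappa>\Delta c$ points in the \emph{opposite} direction, and the paper assumes no relation between $W$ and $\Delta\kappa-\Delta c$, so this inequality can simply fail (e.g.\ $\Delta\kappa$ large, $\Delta c$ and $W$ small, with a request burst at slot $m-1$). The clean way to repair it is to invoke the efficient implementation (Algorithm~1), which maintains $\Delta(t)=\max\{0,\Delta(t-1)+\underbar{\text{$x$}}_t^H-\underbar{\text{$x$}}_t^L-\Delta c\}$ and switches only when $\Delta(t)>W$: not switching at $m-1$ then forces $\underbar{\text{$x$}}_{m-1,H}-\underbar{\text{$x$}}_{m-1,L}\le \Delta c+W$ even for the single-slot window, which settles $n'=m-1$ (with $\le$ in place of $<$). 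As written, though, your appeal to $\Delta\kappa>\Delta c$ is a step that would fail.
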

\begin{proof}
	Given $r^{\text{BLTN}}(m)=L$ and $r^{\text{BLTN}}(m+1)=H$, then for any $n\leq n'< m$, $\displaystyle\sum_{l=n'}^{m-1} (\underbar{\text{$\delta$}}_{l,L} - \underbar{\text{$\delta$}}_{l,H}) <  \displaystyle\sum_{l=n'}^{m-1} (c_H - c_L)+W.$ 
	By definition,
	$
	\displaystyle\sum_{l=n'}^m \left(\underbar{\text{$x$}}_{l,H} - \underbar{\text{$x$}}_{l,L}\right) = \left(\underbar{\text{$x$}}_{l,H} - \underbar{\text{$x$}}_{l,L}\right)+\underbar{\text{$x$}}_{m,H} - \underbar{\text{$x$}}_{m,L}< \displaystyle\sum_{l=n'}^{m-1} (c_H - c_L)+W +(\kappa_H - \kappa_L),
	$
	thus proving the result.	
\end{proof}

\begin{lemma}\label{lem:max_requests_H}
	Let $r^{\text{BLTN}}(n-1)=L$, $r^{\text{BLTN}}(t)=H$ for $n \leq t \leq m$ and $r^{\text{BLTN}}(m+1)=L$. Then for any $n\leq n'< m$,  $\displaystyle\sum_{l=n'}^m \left(\underbar{\text{$x$}}_{l,H} - \underbar{\text{$x$}}_{l,L}\right) < \displaystyle\sum_{l=n'}^{m-1} (c_H - c_L)+W .$   
\end{lemma}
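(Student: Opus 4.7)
The plan is to mirror the proof of Lemma~\ref{lem:max_requests_L}, with the roles of states $L$ and $H$ interchanged. The argument has two steps: (i) an auxiliary bound on the prefix sum up to slot $m-1$, obtained from the fact that BLTN did not switch out of $H$ at slot $m-1$; and (ii) extension from slot $m-1$ to $m$ by absorbing the boundary term $\underbar{x}_{m,H}-\underbar{x}_{m,L}$ into the right-hand side.

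For step (i), since BLTN is in state $H$ throughout $[n,m]$ and triggers the $H\to L$ switch only at slot $m$, the switching criterion \eqref{algo:line8} fails at $t=m-1$ for every candidate window $(\tau,m-1)$ with $n-1<\tau<m-1$. Instantiating with $\tau=n'$ yields
$$\sum_{l=n'}^{m-1}\bigl[(c_H-c_L)-(\underbar{x}_{l,H}-\underbar{x}_{l,L})\bigr]\;\le\;W,$$
which I will rearrange into a bound relating $\sum_{l=n'}^{m-1}(\underbar{x}_{l,H}-\underbar{x}_{l,L})$ to $\sum_{l=n'}^{m-1}(c_H-c_L)+W$. For step (ii), the right-hand side of the target claim carries no extra $\kappa_H-\kappa_L$ slack (contrast with Lemma~\ref{lem:max_requests_L}). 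The sharper extension is afforded by the fact that the switch \emph{does} occur at slot $m$: in the reflected-walk update of Algorithm~\ref{algo:BLTNE}, $\Delta(m-1)\le W<\Delta(m)=\Delta(m-1)+\alpha_m$, which forces $\alpha_m=(c_H-c_L)-(\underbar{x}_{m,H}-\underbar{x}_{m,L})>0$, i.e.\ $\underbar{x}_{m,H}-\underbar{x}_{m,L}<c_H-c_L$. The boundary term is thereby folded directly into the $\sum_{l=n'}^{m-1}(c_H-c_L)$ summand on the right, with no need to invoke the crude bound $\underbar{x}_{m,H}-\underbar{x}_{m,L}\le\kappa_H-\kappa_L$ used in Lemma~\ref{lem:max_requests_L}.

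The main obstacle will be the sign bookkeeping in step (i): unlike Lemma~\ref{lem:max_requests_L}, where the $L\to H$ non-switching criterion and the claimed inequality align in direction, here the $H\to L$ non-switching criterion naturally controls the potential increment $(c_H-c_L)-(\underbar{x}_{l,H}-\underbar{x}_{l,L})$, whose sign is opposite to the combination on the claim's left-hand side. Translating the resulting prefix bound into the direction required by the statement, and combining it cleanly with the sharpened boundary contribution from step (ii), is the technical crux. If rearrangement from the non-switching condition alone is insufficient, I will invoke the triggering condition at slot $m$, which supplies a specific $\tau^{*}$ with $\sum_{l=\tau^{*}}^{m}\bigl[(c_H-c_L)-(\underbar{x}_{l,H}-\underbar{x}_{l,L})\bigr]>W$, and propagate the bound to arbitrary $n'\in[n,m-1]$ using the monotonicity of the reflected walk in Algorithm~\ref{algo:BLTNE}.
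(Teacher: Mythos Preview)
Your instinct that ``the sign bookkeeping is the main obstacle'' is exactly right, but the obstacle is not a technical one you can route around: the inequality as stated is false, and no amount of combining the non-switching condition at $m-1$ with the triggering condition at $m$ will rescue it. For a quick counterexample, take $\Delta c=1$, $W=10$, $\Delta\kappa=100$, and let $\underbar{x}_{l,H}-\underbar{x}_{l,L}=100$ for $l=n,\ldots,n+5$ and $\underbar{x}_{l,H}-\underbar{x}_{l,L}=0$ for $l=n+6,\ldots,m$ with $m=n+16$. The reflected walk $\Delta(\cdot)$ of Algorithm~\ref{algo:BLTNE} stays at $0$ during the high-arrival phase and then climbs by $1$ per slot, crossing $W$ precisely at $t=m$; so BLTN is in $H$ on $[n,m]$ and switches at $m$. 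With $n'=n$ the left side is $600$ while the right side is $16\cdot 1+10=26$.

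What the paper actually proves (and actually uses in Theorem~\ref{thm:BLTN_adv_online}(a)) is the \emph{reversed} inequality
\[
\sum_{l=n'}^{m}\bigl(\underbar{x}_{l,H}-\underbar{x}_{l,L}\bigr)\;>\;\sum_{l=n'}^{m-1}(c_H-c_L)\;-\;W,
\]
obtained exactly from your step~(i): the failure of the $H\!\to\!L$ trigger at $t=m-1$ gives $\sum_{l=n'}^{m-1}[(c_H-c_L)+(\delta_{l,H}-\delta_{l,L})]<W$, i.e.\ $\sum_{l=n'}^{m-1}(\underbar{x}_{l,H}-\underbar{x}_{l,L})>\sum_{l=n'}^{m-1}(c_H-c_L)-W$, and then one appends the nonnegative term $\underbar{x}_{m,H}-\underbar{x}_{m,L}\ge 0$. (Indeed the paper's own proof ends with ``$>\sum_{l=n'}^{m-1}(c_H-c_L)-W$'', matching this reversed bound, and the application in the proof of Theorem~\ref{thm:BLTN_adv_online}(a) bounds $\sum(\delta_{l,H}-\delta_{l,L})$ from above, which is the same thing.) So your plan is essentially the paper's plan; you just need to recognize that the displayed ``$<$'' and ``$+W$'' in the lemma statement are typos for ``$>$'' and ``$-W$'', and drop your step~(ii), which is unnecessary once the sign is corrected.
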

\begin{proof}
	Given $r^{\text{BLTN}}(m)=H$ and $r^{\text{BLTN}}(m+1)=L$, then for any $n\leq n'< m$, $\displaystyle\sum_{l=n'}^{m-1} (\underbar{\text{$\delta$}}_{l,H} - \underbar{\text{$\delta$}}_{l,L}) + \displaystyle\sum_{l=n'}^{m-1} (c_H - c_L) < W.$ 
	By definition,
	$
	\displaystyle\sum_{l=n'}^m \left(\underbar{\text{$x$}}_{l,H} - \underbar{\text{$x$}}_{l,L}\right) = \left(\underbar{\text{$x$}}_{l,H} - \underbar{\text{$x$}}_{l,L}\right)+\underbar{\text{$x$}}_{m,H} - \underbar{\text{$x$}}_{m,L} > \displaystyle\sum_{l=n'}^{m-1} (c_H - c_L) - W,
	$
	thus proving the result.	
\end{proof}

Consider the event where both BLTN and OPT-OFF have hosted in state $S_H$ in a particular time-slot. The next lemma states that given this, OPT-OFF switches states to $S_L$ before BLTN. 

\begin{lemma}\label{lem:eviction_H}
	If $r^{\text{BLTN}}(n)=H$, $r^*(t) = H$ for $n \leq t \leq m$, and $r^*(m+1) = L$. Then,  $r^{\text{BLTN}}(t)=H$ for $n+1 \leq t \leq m+1$.
\end{lemma}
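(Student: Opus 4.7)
The plan is proof by contradiction. Suppose $r^{\text{BLTN}}(t^{*})=L$ for some $t^{*}\in[n+1,m+1]$ and let $t^{*}$ be the smallest such index. Then at the end of time $t^{*}-1$ BLTN is still in $H$ and triggers an $H\to L$ switch, so by the switching criterion there is a $\tau$ with $t_{\text{switch}}^{\text{BLTN}}<\tau<t^{*}-1$ such that
\[
\sum_{l=\tau}^{t^{*}-1}\bigl(\underbar{x}_{l,H}-\underbar{x}_{l,L}\bigr) \;\leq\; (t^{*}-\tau)(c_H-c_L)-W. \qquad (\ast)
\]
Let $n^{*}\leq n$ denote the most recent time at which OPT-OFF entered $H$, and let $n_0\leq n$ be the analogous time for BLTN (so $\tau\geq n_0$). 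I would split the argument on whether $\tau\geq n^{*}$.

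In the case $\tau\geq n^{*}$, I would consider the offline policy $\eta$ that mimics OPT-OFF except it uses $L$ on $[\tau,t^{*}-1]$. The additional switches cost at most $W_{HL}+W_{LH}=W$, and by $(\ast)$ the rent-plus-service savings cover this, yielding $C^{\eta}\leq C^{\text{OPT-OFF}}$. In the boundary sub-cases where $\tau=n^{*}$ or $t^{*}-1=m$, the extra round trip collapses into one of OPT-OFF's own switches, so a full $W$ is saved and the inequality becomes strict, directly contradicting optimality of OPT-OFF; the interior sub-case is resolved by the same tie-breaking convention used in Lemmas~5 and~6.

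In the case $\tau<n^{*}$, the same $\eta$ now simply defers OPT-OFF's $L\to H$ switch from $n^{*}$ to $t^{*}$, so the total switch cost of $\eta$ matches that of OPT-OFF, and optimality forces
\[
\sum_{l=n^{*}}^{t^{*}-1}\bigl(\underbar{x}_{l,H}-\underbar{x}_{l,L}\bigr) \;\geq\; (t^{*}-n^{*})(c_H-c_L). \qquad (\ast\ast)
\]
Subtracting $(\ast\ast)$ from $(\ast)$ gives $\sum_{l=\tau}^{n^{*}-1}(\underbar{x}_{l,H}-\underbar{x}_{l,L})\leq (n^{*}-\tau)(c_H-c_L)-W$. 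But $r^{\text{BLTN}}(n^{*})=H$ and $n_0\leq \tau<n^{*}$, so the minimality of $t^{*}$ implies BLTN's switching condition was \emph{not} fired at time $n^{*}-1$; evaluating the non-firing at this same $\tau$ yields the reverse strict inequality, a contradiction.

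The main obstacle is Case~2: one has to pick the alternative $\eta$ so that the $L\to H$ switch simply slides from $n^{*}$ to $t^{*}$, which strengthens the naive OPT-OFF subinterval bound into the sharper $(\ast\ast)$ (without the usual $-W$ slack) and lets the minimality of $t^{*}$ produce a reverse strict inequality at $n^{*}-1$. The borderline sub-case $\tau=n^{*}-1$ (where the non-firing argument degenerates into an empty range) needs a short direct calculation, using $W>0$ together with $\underbar{x}_{n^{*}-1,H}-\underbar{x}_{n^{*}-1,L}\geq 0$ and OPT-OFF's local optimality at time $n^{*}-1$.
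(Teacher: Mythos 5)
Your Case~1 is, in essence, the paper's entire proof: the paper takes the window certified by BLTN's $H\to L$ switching condition, builds an offline policy $\eta$ that mimics OPT-OFF except for dipping to $L$ over that window at an extra switch cost of $W$, and contradicts the optimality of OPT-OFF. The paper does this while implicitly assuming the window lies wholly inside OPT-OFF's current $H$-run and while using the strict form of the switching inequality, so your boundary and tie-breaking remarks are a reasonable tightening rather than a departure. Your Case~2 ($\tau<n^{*}$) addresses a scenario the paper's one-line proof silently ignores, and its main sub-case --- deriving $(\ast\ast)$ by deferring OPT-OFF's $L\to H$ switch to $t^{*}$ and then contradicting BLTN's non-firing at time $n^{*}-1$ with the same window $[\tau,n^{*}-1]$ --- is sound, provided $\tau\le n^{*}-2$ so that this window is actually one the algorithm examines at time $n^{*}-1$.

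The degenerate sub-case $\tau=n^{*}-1$ is, however, a genuine gap: the ingredients you list do not close it. Subtracting $(\ast\ast)$ from $(\ast)$ there leaves only the single-slot inequality
\[
\underbar{x}_{n^{*}-1,H}-\underbar{x}_{n^{*}-1,L}\;\le\;(c_H-c_L)-W,
\]
and combining this with $\underbar{x}_{n^{*}-1,H}-\underbar{x}_{n^{*}-1,L}\ge 0$ yields merely $W\le c_H-c_L$, which the model does not exclude (the paper only assumes $\kappa_H-\kappa_L>c_H-c_L$). OPT-OFF's local optimality at $n^{*}-1$ (advancing its $L\to H$ switch by one slot) gives $\underbar{x}_{n^{*}-1,H}-\underbar{x}_{n^{*}-1,L}\le c_H-c_L$, an inequality in the \emph{same} direction, so it cannot supply the reverse strict inequality you need, and $W>0$ adds nothing. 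To close this sub-case one essentially has to rule out $\tau<n^{*}$ altogether --- e.g.\ by invoking the frame structure in which the lemma is actually applied, where BLTN enters $S_H$ no earlier than OPT-OFF does, so that every admissible $\tau$ satisfies $\tau\ge n^{*}$ and only your Case~1 can occur. That is, in effect, what the paper's proof tacitly assumes when it prices OPT-OFF's cost on the whole window as $\sum c_H+\sum\delta_{l,H}$.
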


\begin{proof}
	We prove this by contradiction. Let $\exists \widetilde{m}<m$ such that $r^{\text{BLTN}}(\widetilde{m}+1)=L$. Then, from Algorithm \ref{algo:BLTN}, there exists an integer $\tau>0$  such that  $\displaystyle\sum_{l=\widetilde{m}-\tau+1}^{\widetilde{m}} (\underbar{\text{$x$}}_{l,H} - \underbar{\text{$x$}}_{l,L})< \displaystyle\sum_{l=\widetilde{m}-\tau+1}^{\widetilde{m}} (c_H - c_L)-W.$ 	The cost incurred by OPT-OFF in the interval $\widetilde{m}-\tau+1$ to $\widetilde{m}$ is $\displaystyle\sum_{l=\widetilde{m}-\tau+1}^{\widetilde{m}} c_H+\displaystyle\sum_{l=\widetilde{m}-\tau+1}^{\widetilde{m}}\delta_{l,H}$. 
	
	Consider an alternative policy $\eta$ for which $r_{\eta}(t)=0$ for $\widetilde{m}-\tau+1 \leq t \leq \widetilde{m}$, $r_{\eta}(\widetilde{m}+1)=1$, and $r_{\eta}(t)=r^*(t)$ otherwise. It follows that
	$C^{\eta}-C^{\text{OPT-OFF}}= \displaystyle\sum_{l=\widetilde{m}-\tau+1}^{\widetilde{m}} (\underbar{\text{$x$}}_{l,H} - \underbar{\text{$x$}}_{l,L}) + W - \displaystyle\sum_{l=\widetilde{m}-\tau+1}^{\widetilde{m}} (c_H - c_L)$ which is negative by our assumption. This contradicts the definition of the OPT-OFF policy, thus proving the result.	
\end{proof}

Consider the case where both BLTN and OPT-OFF have hosted in state $S_H$ in a particular time-slot. From the previous lemma, we know that, OPT-OFF switches states to $S_L$ before BLTN. The next lemma gives a lower bound on the difference in the number of requests that can be served by the edge server in the interval which starts when OPT-OFF switches states to $S_L$ from the edge server  and ends when BLTN switches states to $S_L$ from the edge server. 

\begin{lemma}\label{lem:min_requests_1}
	Let $r^*(n-1)= H, \ r^*(n)=L, \ r^{\text{BLTN}}(t) = H$ for $n-1 \leq t \leq m$ and  $r^{\text{BLTN}}(m+1)=L$. Then for any $n\leq n'< m$, $\displaystyle\sum_{l=n'}^m (\underbar{\text{$x$}}_{l,H} - \underbar{\text{$x$}}_{l,L}) \geq 	\displaystyle\sum_{l=n'}^{m-1} (c_H - c_L) - W_{HL}.$    	
\end{lemma}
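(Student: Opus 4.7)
The plan is to combine the algorithmic definition of BLTN at time $m-1$ with the optimality of OPT-OFF at its $H \to L$ transition between $n-1$ and $n$. The key observation driving the argument is that $r^{\text{BLTN}}(m) = H$ forces BLTN's switching rule to have been unsatisfied at the end of time-slot $m-1$.

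First, by Algorithm \ref{algo:BLTN}, the fact that BLTN did not switch at the end of time $m-1$ means that for every admissible window $[\tau, m-1]$ with $t_{\text{switch}}^{\text{BLTN}} < \tau$,
\[\sum_{l=\tau}^{m-1}\bigl[(c_H - c_L) - (\underbar{x}_{l,H} - \underbar{x}_{l,L})\bigr] < W.\]
Since $r^{\text{BLTN}}(t)=H$ for $n-1 \le t \le m$ forces $t_{\text{switch}}^{\text{BLTN}} \le n-1 < n \le n'$, the choice $\tau = n'$ is admissible, and rearranging yields
\[\sum_{l=n'}^{m-1}(\underbar{x}_{l,H} - \underbar{x}_{l,L}) > \sum_{l=n'}^{m-1}(c_H - c_L) - W.\]
Adding the non-negative quantity $\underbar{x}_{m,H} - \underbar{x}_{m,L}$ to both sides gives a preliminary form of the lemma with $-W$ in place of the target $-W_{HL}$.

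To tighten from $-W$ to $-W_{HL}$---a reduction of exactly $W_{LH}$---I would argue by contradiction, leveraging OPT-OFF's optimality around its $H \to L$ switch at time $n-1$. Suppose the sharper inequality is violated for some $n' \in [n, m)$. Construct an alternative policy $\eta$ that agrees with OPT-OFF outside $[n-1, n'-1]$ but defers this transition: $\eta$ remains in $H$ throughout $[n-1, n'-1]$ and performs the $H \to L$ switch at the end of $n'-1$ (the construction branches according to $r^*(n')$, and Lemma \ref{lem:lemma_opt_L} controls the cost contribution of any intervening $L \to H \to L$ excursion performed by OPT-OFF). Aggregating the rent, service, and switching differences over $[n-1, n'-1]$---with the assumed violation supplying the needed cancellation together with the preliminary inequality from the first step---produces $C^\eta < C^{\text{OPT-OFF}}$, contradicting the optimality of OPT-OFF.

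The main obstacle is the switching-cost bookkeeping between $\eta$ and OPT-OFF when OPT-OFF's trajectory over $[n, n'-1]$ is not simply ``stay in $L$.'' Each intervening $L \to H$ and $H \to L$ switch by OPT-OFF must be matched by $\eta$ or recursively controlled by an analogous optimality argument applied to the sub-excursion; the crux is showing that the net improvement attributable to $\eta$'s deferred switch is at least $W_{LH}$, which is precisely the gap between the preliminary bound and the lemma's claim. A minor secondary point is the boundary case $n' = m-1$, where the window $\tau = n'$ sits on the edge of BLTN's search range and must be treated by a direct calculation rather than via the main argument.
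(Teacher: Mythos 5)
Your first step coincides exactly with the paper's entire proof: because $r^{\text{BLTN}}(m-1)=r^{\text{BLTN}}(m)=H$, the switch condition \eqref{algo:line8} must fail at the end of time-slot $m-1$ for the window beginning at $\tau=n'$ (admissible since $t_{\text{switch}}\leq n-2<n'$), which yields $\sum_{l=n'}^{m-1}(\underbar{\text{$x$}}_{l,H}-\underbar{\text{$x$}}_{l,L})>\sum_{l=n'}^{m-1}(c_H-c_L)-W$, and adding the nonnegative $l=m$ summand finishes. The paper stops there and declares the lemma proved; strictly, this establishes the bound only with $-W=-(W_{LH}+W_{HL})$ in place of the stated $-W_{HL}$, a discrepancy you were right to flag (it is harmless downstream, as the proof of Theorem~\ref{thm:BLTN_adv_online}(a) invokes only Lemmas~\ref{lem:max_requests_L} and~\ref{lem:max_requests_H}).

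Your proposed second stage to recover the extra $W_{LH}$, however, does not go through. OPT-OFF's decision to switch $H\to L$ at the end of $n-1$, together with Lemma~\ref{lem:lemma_opt_L} for the ensuing $L$-interval, constrains $\sum(\underbar{\text{$x$}}_{l,H}-\underbar{\text{$x$}}_{l,L})$ from \emph{above}; the lemma needs a bound from \emph{below}, so OPT-OFF's optimality points in the wrong direction here. Concretely, your deferral policy $\eta$ differs from OPT-OFF only on $[n-1,n'-1]$, while the assumed violation of the sharper inequality concerns the disjoint interval $[n',m]$; small arrivals on $[n',m]$ give no control over the cost gap on the interval where $\eta$ and OPT-OFF actually differ, and holding state $S_H$ longer is advantageous only when arrivals are large, so no contradiction $C^{\eta}<C^{\text{OPT-OFF}}$ can be extracted. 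The lower-bound information can only come from BLTN's reluctance to leave $S_H$, and that is already fully exploited in your first step at the value $-W$. (The boundary case $n'=m-1$, where the singleton window lies outside the range $\tau<t$ scanned by Algorithm~\ref{algo:BLTN} at $t=m-1$, is left open by both your sketch and the paper.)
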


\begin{proof}
	Given $r^{\text{BLTN}}(m)=H$ and $r^{\text{BLTN}}(m+1)=L$, then  for any $n\leq n'< m$, $\displaystyle\sum_{l=n'}^{m-1} (\underbar{\text{$x$}}_{l,H} - \underbar{\text{$x$}}_{l,L}) > 	\displaystyle\sum_{l=n'}^{m-1} \Delta c - W$. 
	By definition,
	$
	\displaystyle\sum_{l=n'}^m (\underbar{\text{$x$}}_{l,H} - \underbar{\text{$x$}}_{l,L}) = \left(\displaystyle\sum_{l=n'}^{m-1} (\underbar{\text{$x$}}_{l,H} - \underbar{\text{$x$}}_{l,L}) \right)+(\underbar{\text{$x$}}_{l,H} - \underbar{\text{$x$}}_{l,L}) > \displaystyle\sum_{l=n'}^{m-1} \Delta c - W+0,
	$
	thus proving the result. 	
\end{proof}

Our next result states that BLTN does not switch states to $S_H$ in the interval between a switch to $S_L$ and the subsequent switch to $S_H$ by OPT-OFF.

\begin{lemma}\label{lem:RR_no_download} If $r^*(n-1) = H, \ r^*(t) = L$ for $n \leq t \leq m$, and $r^*(m+1)=H$, then BLTN does not switch to state H in time-slots $n, n+1, \cdots, m-1$. 
\end{lemma}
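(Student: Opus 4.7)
The plan is to argue by contradiction, following the template used for the proof of Lemma~\ref{lem:eviction_H}. Suppose, for the sake of contradiction, that BLTN performs an $L\to H$ switch at some time-slot $\widetilde{m}\in\{n,n+1,\ldots,m-1\}$, so that $r^{\text{BLTN}}(\widetilde{m})=L$ and $r^{\text{BLTN}}(\widetilde{m}+1)=H$. The BLTN switching rule \eqref{algo:line16} in Algorithm~\ref{algo:BLTN} then guarantees an integer $\tau$ satisfying $t_{\text{switch}}<\tau\leq\widetilde{m}$ for which
\[
\sum_{l=\tau}^{\widetilde{m}}(\underbar{x}_{l,H}-\underbar{x}_{l,L}) \;>\; W+\sum_{l=\tau}^{\widetilde{m}}(c_H-c_L). \qquad(\star)
\]

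I would then construct an alternative offline policy $\eta$ that agrees with OPT-OFF outside $[\tau,\widetilde{m}]$ but sets $r_\eta(t)=H$ throughout $[\tau,\widetilde{m}]$, and show $C^\eta<C^{\text{OPT-OFF}}$, contradicting the optimality of OPT-OFF. In the principal case $\tau\geq n$, OPT-OFF is in $S_L$ on all of $[\tau,\widetilde{m}]$, so $\eta$ pays an additional $W_{LH}$ to switch into $S_H$ at time $\tau$ and an additional $W_{HL}$ to switch back to $S_L$ at time $\widetilde{m}+1$, after which it rejoins OPT-OFF (which performs its own $L\to H$ transition at $m+1$). The extra switching cost incurred by $\eta$ is therefore exactly $W$, while its rent-plus-service saving over $[\tau,\widetilde{m}]$ equals
\[
\sum_{l=\tau}^{\widetilde{m}}\bigl[(\underbar{x}_{l,H}-\underbar{x}_{l,L})-(c_H-c_L)\bigr],
\]
which by $(\star)$ is strictly greater than $W$. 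Hence $C^\eta<C^{\text{OPT-OFF}}$, the desired contradiction.

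The hard part will be the edge case $\tau<n$, in which $[\tau,\widetilde{m}]$ dips into OPT-OFF's preceding $S_H$ run. My plan here is to modify the construction so that $\eta$ stays in $S_H$ continuously from $n-1$ through $\widetilde{m}$ and switches to $S_L$ at $\widetilde{m}+1$, effectively postponing OPT-OFF's $H\to L$ transition from $n$ to $\widetilde{m}+1$; this makes the net switch cost of $\eta$ relative to OPT-OFF exactly zero, and reduces the cost difference to $\sum_{l=n}^{\widetilde{m}}[(c_H-c_L)-(\underbar{x}_{l,H}-\underbar{x}_{l,L})]$. The remaining task is to transfer $(\star)$ from $[\tau,\widetilde{m}]$ to the sub-interval $[n,\widetilde{m}]$, i.e., to show
$\sum_{l=n}^{\widetilde{m}}(\underbar{x}_{l,H}-\underbar{x}_{l,L})>(\widetilde{m}-n+1)(c_H-c_L)$. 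This is the chief analytical obstacle: one must control the prefix sum $\sum_{l=\tau}^{n-1}(\underbar{x}_{l,H}-\underbar{x}_{l,L})$, which lies inside an $S_H$ run of OPT-OFF, using the optimality characterization of that run (Lemma~\ref{lem:lemma_opt_H}) to argue that the excess granted by $(\star)$ survives the restriction to $[n,\widetilde{m}]$.

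In both cases the construction of an offline policy strictly cheaper than OPT-OFF yields the required contradiction, and we conclude that BLTN cannot switch to $S_H$ at any time-slot in $\{n,n+1,\ldots,m-1\}$.
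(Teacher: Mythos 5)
Your main case ($\tau\geq n$) is correct and is in substance the paper's own argument: the paper observes that the BLTN trigger condition on a window $[\tau,\widetilde m]\subseteq[n,m-1]$ is precisely the hypothesis of Lemma~\ref{lem:OPT_download_H}, which already packages the alternative-policy construction you write out and which forces OPT-OFF to switch to $S_H$ inside that window, contradicting $r^*(t)=L$ on $[n,m]$. So for that case you have essentially inlined an existing lemma, with the switch-cost bookkeeping done correctly.

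The genuine gap is the edge case $\tau<n$, which you flag but do not close, and the route you sketch for it cannot succeed. You propose to establish $\sum_{l=n}^{\widetilde m}(\underbar{\text{$x$}}_{l,H}-\underbar{\text{$x$}}_{l,L})>(\widetilde m-n+1)(c_H-c_L)$, but this inequality is refuted by OPT-OFF's optimality itself: the policy that keeps OPT-OFF's $S_H$ run going from $n-1$ through $\widetilde m$ and only then drops to $S_L$ has the same number of $H\to L$ and $L\to H$ transitions as OPT-OFF (the gap $[\widetilde m+1,m]$ is nonempty since $\widetilde m\leq m-1$), so optimality forces $\sum_{l=n}^{\widetilde m}(\underbar{\text{$x$}}_{l,H}-\underbar{\text{$x$}}_{l,L})\leq(\widetilde m-n+1)(c_H-c_L)$ --- the exact negation of your target. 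The right way to dispose of this case is to show it is vacuous, i.e., that the triggering window cannot reach back before $n$; this is the step the paper asserts without justification as ``$t-\tau\geq n$''. Concretely: if BLTN is ever in $S_H$ during OPT-OFF's preceding $S_H$ run, Lemma~\ref{lem:eviction_H} keeps it in $S_H$ through time-slot $n$, so any later $L\to H$ switch must be preceded by an $H\to L$ switch at some $t_{\text{switch}}\geq n$ and every candidate window then starts at $n+1$ or later; if instead BLTN stays in $S_L$ throughout that run with $t_{\text{switch}}$ preceding it, Lemma~\ref{lem:lemma_opt_H} shows the run itself already satisfies the $L\to H$ trigger, so BLTN switches to $S_H$ no later than the end of time-slot $n-1$, which again rules out an $L\to H$ switch from within $[n,m-1]$ whose window dips below $n$. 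Without an argument of this kind your case split leaves the lemma unproved.
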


\begin{proof}
	We prove this by contradiction. Let BLTN switch states to $S_H$ in time-slot $t$ where $n\leq t \leq m-1.$ 
	Then from Algorithm \ref{algo:BLTN}, there exists an integer $\tau>0$ such that $t-\tau\geq n$ and $\displaystyle\sum_{l=t-\tau+1}^t (\underbar{\text{$x$}}_{l,H} - \underbar{\text{$x$}}_{l,L}) \geq \displaystyle\sum_{l=t-\tau+1}^t (c_H - c_L) + W_{LH}  +W_{HL}.$
	If this condition is true, by Lemma \ref{lem:OPT_download_H},  OPT-OFF would have fetched the service at least once in the interval $t-\tau+1$ and $t$ for all $n\leq t \leq m-1.$ Hence BLTN does not switch states to $S_H$ between $n$ and $m-1.$
\end{proof}

The next lemma states that in the interval between a switch from $L \rightarrow H$ and subsequent switch from $H \rightarrow L$ by OPT-OFF, BLTN hosts the service for at least one time-slot. 

\begin{lemma}\label{lem:RR_one_download}
	If $r^*(n-1) = L$, $r^*(t) = H$ for $n \leq t \leq m$ and $r^*(m+1) = L$, then,
	for some  $n < t \leq m$, $r^{\text{BLTN}}(t) = H.$
\end{lemma}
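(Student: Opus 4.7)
I plan to proceed by contradiction, assuming $r^{\text{BLTN}}(t) = L$ for every $t$ with $n < t \leq m$. The key quantitative input is Lemma \ref{lem:lemma_opt_H}, which under the present hypothesis on OPT-OFF yields
\begin{equation*}
\sum_{l=n}^{m}(\underbar{\text{$x$}}_{l,H} - \underbar{\text{$x$}}_{l,L}) \;\geq\; W + \sum_{l=n}^{m}(c_H - c_L).
\end{equation*}
Writing $a_l := (\underbar{\text{$x$}}_{l,H} - \underbar{\text{$x$}}_{l,L}) - (c_H - c_L)$, this reads $\sum_{l=n}^{m} a_l \geq W$, i.e., the aggregate benefit of being in $H$ over the block $[n,m]$ exceeds the total switching budget.

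Next I would split on the state of BLTN at time $n$. In the first subcase, $r^{\text{BLTN}}(n) = L$, so the most recent BLTN switch time satisfies $t_{\text{switch}} < n$, and hence $\tau = n$ is an admissible choice in the BLTN switching criterion \eqref{algo:line16} at every decision time $t \in [n, m-1]$. Because by hypothesis BLTN does not switch to $H$ at any such $t$, the criterion must fail with $\tau = n$, giving $\sum_{l=n}^{t} a_l < W$ for every $t \in (n, m-1]$. I would then couple this with OPT-OFF's intra-frame optimality: since OPT-OFF gains nothing by postponing the $L\to H$ switch or by advancing the $H\to L$ switch within the frame, one obtains the partial-sum bounds $\sum_{l=n}^{n'} a_l \geq 0$ and $\sum_{l=m'+1}^{m} a_l \geq 0$ for $n', m' \in [n, m-1]$. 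Combined with Lemma \ref{lem:OPT_slots_1} (which forces $m - n + 1 \geq W/(\Delta \kappa - \Delta c)$, so that the interval is long enough that no single slot can carry the entire benefit $W$) and with $\sum_{l=n}^{m} a_l \geq W$, these constraints force the partial sum $\sum_{l=n}^{t} a_l$ to cross $W$ at some $t^{\star} \in [n+1, m-1]$, activating the BLTN switching criterion at decision time $t^\star$ with $\tau = n$ and yielding $r^{\text{BLTN}}(t^\star + 1) = H$ with $t^\star + 1 \leq m$, contradicting the assumption. The second subcase, $r^{\text{BLTN}}(n) = H$, is handled analogously by first applying the $H \to L$ switching criterion at the transition $r^{\text{BLTN}}(n)=H,\ r^{\text{BLTN}}(n+1)=L$ to quantify the deficit accumulated before $n$, and then repeating the argument above on the sub-interval $[n+1, m]$, using $\sum_{l=n+1}^{m} a_l \geq W - a_n$ together with the inherited bound on $a_n$.

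The main obstacle I anticipate is ruling out the degenerate scenario in which the running partial sum $\sum_{l=n}^{t} a_l$ stays strictly below $W$ for every $t < m$ and only crosses $W$ at the last slot $t = m$. In that case Lemma \ref{lem:lemma_opt_H} alone would only force BLTN to switch at decision time $m$, giving $r^{\text{BLTN}}(m+1) = H$, which lies just outside the required interval $(n, m]$. The resolution relies precisely on coupling Lemma \ref{lem:OPT_slots_1} with OPT-OFF's sub-interval optimality, which together prevent the benefit from being concentrated at the right endpoint of the frame and guarantee that the BLTN threshold is crossed at a decision time strictly less than $m$.
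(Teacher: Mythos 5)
Your overall route---contradiction, Lemma \ref{lem:lemma_opt_H} applied to the block $[n,m]$, and the failure of BLTN's switching criterion---is the same as the paper's, but the paper's proof is far shorter: it simply observes that if BLTN never switches, then criterion \eqref{algo:line16} fails for the single window $[n,m]$, i.e. $\sum_{l=n}^m(\underbar{\text{$x$}}_{l,H}-\underbar{\text{$x$}}_{l,L}) < \sum_{l=n}^m(c_H-c_L)+W$, which directly contradicts Lemma \ref{lem:lemma_opt_H}. It does not case-split on $r^{\text{BLTN}}(n)$, and it needs neither the prefix/suffix optimality of OPT-OFF nor Lemma \ref{lem:OPT_slots_1}. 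To your credit, you correctly spot the off-by-one that the paper elides: the decision at the end of slot $m$ governs $r^{\text{BLTN}}(m+1)$, so the hypothesis $r^{\text{BLTN}}(t)=L$ for $n< t\leq m$ only licenses the failure of the criterion at decision times up to $m-1$, not at $m$ itself.

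However, your proposed repair of that boundary case does not go through. Writing $a_l=(\underbar{\text{$x$}}_{l,H}-\underbar{\text{$x$}}_{l,L})-\Delta c$, the facts you marshal (nonnegative prefix and suffix sums of $a_l$ from OPT-OFF's sub-interval optimality, $\sum_{l=n}^m a_l\geq W$, and the length bound $m-n+1\geq W/(\Delta\kappa-\Delta c)$ from Lemma \ref{lem:OPT_slots_1}) only yield $\sum_{l=n}^{m-1}a_l\geq \sum_{l=n}^m a_l-(\Delta\kappa-\Delta c)$, which can be strictly below $W$. Concretely, take $a_l=\varepsilon$ for $n\leq l\leq m-1$ and $a_m$ close to $\Delta\kappa-\Delta c$ so that $\sum_{l=n}^m a_l$ just exceeds $W$: all of your constraints hold, every window $[\tau,t]$ with $t\leq m-1$ has sum below $W$, and BLTN's first switch occurs at decision time $m$, giving $r^{\text{BLTN}}(m+1)=H$ while $r^{\text{BLTN}}(t)=L$ throughout $(n,m]$. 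So the degenerate scenario you flag is not excluded by your argument (nor, in fact, by the paper's); the honest fix is to weaken the conclusion to ``for some $n<t\leq m+1$,'' which is what the computation at decision time $m$ actually establishes, at the cost of revisiting how the lemma is invoked in the frame-structure argument of Lemma \ref{lem:frameStructure}.
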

\begin{proof}
	We prove this by contradiction. Let $r^{\text{BLTN}}(t)=L$ for all $n \leq t \leq m$. Then by the definition of the BLTN policy,  $\displaystyle\sum_{l=t-\tau+1}^t (\underbar{\text{$x$}}_{l,H} - \underbar{\text{$x$}}_{l,L}) < \displaystyle\sum_{l=t-\tau+1}^t (c_H - c_L) + W$ for any $\tau>0$ and $\tau \leq t-n+1$.
	If we choose $t=m$   then  $\displaystyle\sum_{l=n}^m (\underbar{\text{$x$}}_{l,H} - \underbar{\text{$x$}}_{l,L}) < \displaystyle\sum_{l=n}^m (c_H - c_L) + W_{HL}  + W_{LH}$, which is false from Lemma \ref{lem:lemma_opt_H}. This contradicts our assumption.  
\end{proof}

If both BLTN and OPT-OFF are in state $S_H$ in a particular time-slot, from Lemma \ref{lem:eviction_H}, we know that OPT-OFF switches states to $S_L$ before BLTN. The next lemma states that BLTN switches states to $S_L$ before the next time OPT-OFF switches to state $S_H$.

\begin{lemma}\label{lem:RR_one_L}
	If $r^*(n-1) = H$, $r^*(t) = L$ for $n \leq t \leq m$, $r^*(m+1) = H$, and $r^{\text{BLTN}}(n-1)=H$, then, BLTN switches states to $S_L$ by the end of time-slot $m$ and $r^{\text{BLTN}}(m+1) = L$.
\end{lemma}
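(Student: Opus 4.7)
The plan is to combine Lemma~\ref{lem:lemma_opt_L} with BLTN's switching rule to force BLTN into state $L$ by the end of time-slot $m$. Applying Lemma~\ref{lem:lemma_opt_L} to the interval $[n,m]$ gives $\sum_{l=n}^{m}(\underbar{\text{$x$}}_{l,H}-\underbar{\text{$x$}}_{l,L}) + W \leq (m-n+1)\Delta c$, which, since $H_l - L_l = \Delta c - (\underbar{\text{$x$}}_{l,H}-\underbar{\text{$x$}}_{l,L})$, rearranges into the key inequality $\sum_{l=n}^{m}(H_l - L_l) \geq W$.

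Next I would argue by contradiction: suppose $r^{\text{BLTN}}(m+1) = H$, and split on BLTN's trajectory in $[n-1,m]$. If BLTN stays in state $H$ throughout $[n-1, m]$, then the most recent switch time satisfies $t_{\text{switch}} < n$, so taking $\tau = n$ in condition~\eqref{algo:line8} at time $t = m$ is valid; the key inequality immediately satisfies the switching condition, so BLTN would switch to $L$ at the end of time-slot $m$, contradicting the assumption. Otherwise BLTN switches $H \to L$ at some $t' \in [n-1, m-1]$; by Lemma~\ref{lem:RR_no_download} BLTN cannot switch back to $H$ in $[n, m-1]$, so BLTN remains in state $L$ throughout $[t'+1, m]$, and for $r^{\text{BLTN}}(m+1) = H$ to hold, BLTN must switch $L \to H$ at the end of time-slot $m$. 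Applying condition~\eqref{algo:line16} then yields some $\tau' \in (t', m)$ with $\sum_{l=\tau'}^{m}(H_l - L_l) \leq -W$; the key inequality rules out $\tau' = n$, forcing $\tau' \geq n+1$.

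To close this remaining case, I would construct an alternative policy $\eta$ that agrees with OPT-OFF everywhere except that $r_{\eta}(l) = H$ for all $l \in [\tau', m]$. Since $\tau' \geq n+1$, $\eta$ merely advances OPT-OFF's return to state $H$ from time $m+1$ to time $\tau'$, so $\eta$ performs exactly two switches in $[n-1, m+1]$ ($H \to L$ at time $n$ and $L \to H$ at time $\tau'$), incurring the same total switch cost as OPT-OFF. The service-plus-rent cost difference is $\sum_{l=\tau'}^{m}(H_l - L_l) \leq -W < 0$, so $C^{\eta} < C^{\text{OPT-OFF}}$, contradicting the optimality of OPT-OFF. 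The main obstacle is ensuring $\tau' > n$ so that $\eta$ is well-formed with the same parity of switches as OPT-OFF, and also the boundary case $n = m$, where the range of $\tau$ in~\eqref{algo:line8} degenerates; this latter case can be resolved via the equivalent $\Delta$-update formulation in Algorithm~\ref{algo:BLTNE}, where $\Delta(m) \geq W$ still triggers the switch under the non-strict reading of the threshold in the naive algorithm.
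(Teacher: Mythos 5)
Your proof is correct and rests on the same two pillars as the paper's own argument: Lemma~\ref{lem:lemma_opt_L} applied to $[n,m]$ (your ``key inequality'' $\sum_{l=n}^{m}(H_l-L_l)\geq W$ is precisely what the paper uses to contradict BLTN's non-switching condition at $\tau=n$, $t=m$), and Lemma~\ref{lem:RR_no_download} to prevent a return to $S_H$ inside the frame. Where you genuinely add something is in the final step. The paper concludes $r^{\text{BLTN}}(m+1)=L$ by citing Lemma~\ref{lem:RR_no_download}, but that lemma only excludes switches to $S_H$ during time-slots $n,\dots,m-1$; a switch at the end of slot $m$ itself is not covered, and that is exactly the event that would give $r^{\text{BLTN}}(m+1)=H$. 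Your case~(b) closes this gap: you show any $\tau'$ satisfying \eqref{algo:line16} at $t=m$ must have $\tau'\geq n+1$, and then an alternative policy that returns to $S_H$ at $\tau'$ instead of $m+1$ (same number of switches, hence same switch cost) would strictly undercut OPT-OFF --- a more complete argument than the paper's. The remaining loose ends are ones you share with the paper rather than introduce: the strict-versus-non-strict threshold (the key inequality gives $\geq W$ while condition \eqref{algo:line8} and Algorithm~\ref{algo:BLTNE} use strict inequalities) and the degenerate range of $\tau$ when $m=n$; both are tie-breaking conventions on which the paper itself is internally inconsistent, not substantive gaps in your reasoning.
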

\begin{proof}
	We prove this by contradiction. Assume that BLTN does not switch states to $S_L$ in any time slot $t$ for all  $n \leq t \leq m$. 
	Then from the definition of the BLTN policy, $\displaystyle\sum_{l=t-\tau+1}^{t} (\underbar{\text{$x$}}_{l,H} - \underbar{\text{$x$}}_{l,L}) + W \geq 	\displaystyle\sum_{l=t-\tau+1}^{t} (c_H - c_L)$ for all $\tau$ such that $0 < \tau \leq t-n+1$.
	As a result, at $t=m$, $\displaystyle\sum_{l=n}^m (\underbar{\text{$x$}}_{l,H} - \underbar{\text{$x$}}_{l,L}) + W_{HL}  + W_{LH} > \displaystyle\sum_{l=n}^m (c_H - c_L)$. 
		Given this, it follows that OPT-OFF will not switch states to $S_L$ at the end of time-slot $n-1$. 
	This contradicts our assumption.
	By Lemma \ref{lem:RR_no_download}, BLTN does not switch states to $S_H$ in the interval between switches from $S_H$ to $S_L$ and back by OPT-OFF. Therefore, $r^{\text{BLTN}}(m+1) = 0$.
\end{proof}

To compare the costs incurred by BLTN and OPT-OFF we divide time into frames $[1,t_1-1]$, $[t_1,t_2-1], [t_2,t_3-1],\cdots,$ where $t_i-1$ is the time-slot in which OPT-OFF switches to state $S_H$ for the $i^{\text{th}}$ time for $i\in\{1,2,\cdots\}.$ Our next result characterizes the sequence of events that occur in any such frame.

\begin{lemma}
	\label{lem:frameStructure}
	Consider the interval $[t_i, t_{i+1}-1]$ such that OPT-OFF switches the state to $S_H$ at the end of time-slot $t_{i}-1$ and switches the state to $S_H$ again the end of time-slot $t_{i+1}-1$. By definition, there exists $\tau \in [t_i, t_{i+1}-2]$ such that OPT-OFF switches the state to $S_L$ in time-slot $\tau$. 
	BLTN switches to $S_H$ from $S_L$ and back exactly once each in $[t_1, t_2-1]$. The switch to $S_H$ by BLTN is in time-slot $t_{LH}^{\text{BLTN}} $ such that $t_1 \leq t_{LH}^{\text{BLTN}} \leq \tau$ and
	the switch to $S_L$ by BLTN is in time-slot $t_{HL}^{\text{BLTN}}$ such that $\tau < t_{HL}^{\text{BLTN}} < t_2$ (Figure \ref{fig:OPT_RR_frame}).
\end{lemma}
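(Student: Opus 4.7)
I would prove this by induction on the frame index $i$, with inductive hypothesis $r^{\text{BLTN}}(t_i - 1) = L$, i.e., BLTN enters frame $i$ in the low state. For the base case, note that in Frame $0 = [1, t_1 - 1]$ OPT-OFF never enters $S_H$ by the definition of $t_1$, while BLTN is initialised in $S_L$; any hypothetical $L \to H$ switch of BLTN inside Frame $0$ would, via the algorithm's switching condition combined with Lemma \ref{lem:OPT_download_H}, force OPT-OFF to occupy $S_H$ at some point inside $[1, t_1 - 1]$, contradicting the choice of $t_1$. Hence $r^{\text{BLTN}}(t_1 - 1) = L$.

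For the inductive step, fix a frame $[t_i, t_{i+1} - 1]$ and let $\tau \in [t_i, t_{i+1} - 2]$ be the slot at which OPT-OFF switches to $S_L$, so OPT-OFF sits in $S_H$ on $[t_i, \tau - 1]$ and in $S_L$ on $[\tau, t_{i+1} - 1]$. Applying Lemma \ref{lem:RR_one_download} with $n = t_i$, $m = \tau - 1$ yields a slot $t^\star \in (t_i, \tau - 1]$ with $r^{\text{BLTN}}(t^\star) = H$; combined with the inductive hypothesis there is a first slot $t^{\text{BLTN}}_{LH} \in [t_i, t^\star]$ at which BLTN switches $L \to H$. Invoking Lemma \ref{lem:eviction_H} at $n = t^{\text{BLTN}}_{LH}$, $m = \tau - 1$ then extends BLTN's stay in $S_H$ all the way to $\tau$, giving $r^{\text{BLTN}}(t) = H$ on $[t^{\text{BLTN}}_{LH}, \tau]$ and therefore exactly one $L \to H$ BLTN switch in the first half of the frame with $t_i \leq t^{\text{BLTN}}_{LH} \leq \tau - 1 < \tau$.

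Since $r^{\text{BLTN}}(\tau - 1) = H$ from the previous step, Lemma \ref{lem:RR_one_L} applied with $n = \tau$, $m = t_{i+1} - 1$ produces an $H \to L$ switch of BLTN at some $t^{\text{BLTN}}_{HL} \in (\tau, t_{i+1} - 1]$ together with $r^{\text{BLTN}}(t_{i+1}) = L$. Lemma \ref{lem:RR_no_download} forbids any subsequent $L \to H$ BLTN switch in slots $\tau, \ldots, t_{i+1} - 2$, and a switch at the last slot $t_{i+1} - 1$ is ruled out by the same mechanism: after the $H \to L$ switch one has $t_{\text{switch}} = t^{\text{BLTN}}_{HL} > \tau$, so the BLTN switching-check window lies entirely in $(\tau, t_{i+1} - 2]$ where OPT-OFF occupies $S_L$, and Lemma \ref{lem:OPT_download_H} would then force OPT-OFF into $S_H$ there, a contradiction. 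Hence BLTN makes exactly one switch in each direction inside the frame and remains in $S_L$ throughout $[t^{\text{BLTN}}_{HL}, t_{i+1} - 1]$; in particular $r^{\text{BLTN}}(t_{i+1} - 1) = L$, closing the induction. The principal obstacle is not a new estimate but the bookkeeping of the four switching times $t_i, \tau, t^{\text{BLTN}}_{LH}, t^{\text{BLTN}}_{HL}$ and verifying the hypotheses of Lemmas \ref{lem:eviction_H} and \ref{lem:RR_one_L} at each invocation, most notably the requirement $r^{\text{BLTN}}(\tau - 1) = H$ which is exactly what the eviction step in the first half supplies.
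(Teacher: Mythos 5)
Your proof is correct and follows essentially the same route as the paper: the same chain of supporting lemmas (Lemma \ref{lem:RR_one_download} for the forced switch to $S_H$, Lemma \ref{lem:eviction_H} to keep BLTN in $S_H$ until OPT-OFF's eviction, Lemma \ref{lem:RR_one_L} for the forced switch back to $S_L$, and Lemma \ref{lem:RR_no_download} to rule out further switches). The only cosmetic difference is that you make the frame-to-frame propagation of ``BLTN enters the frame in state $S_L$'' an explicit induction with a Frame~0 base case, whereas the paper obtains the same fact by invoking Lemma \ref{lem:RR_one_L} on the preceding frame.
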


\begin{figure}[ht]
	\centering
	\begin{tikzpicture}
	\foreach \x in {0,0.3,0.6,...,7.2}{
		\draw[] (\x,0) --  (\x+0.3,0);
		\draw[gray] (\x,-1mm) -- (\x,1mm);
	}
	
	\draw[gray] (7.2,-1mm) -- (7.2,1mm);
	\draw[-latex] (0.15,10mm) -- node[above=5mm]{$t_{i}-1$} (0.15,0mm);
	\draw[-latex] (6.75,10mm) -- node[above=5mm]{$t_{i+1}-1$} (6.75,0mm);
	\draw[-latex] (4.05,0mm) -- node[above=5mm]{$\tau$} (4.05,10mm);
	\draw[-latex,color=red] (2.25,10mm) -- node[above=5mm]{$t_{LH}^{\text{BLTN}}$} (2.25,0mm);
	\draw[-latex,color=red] (5.55,0mm) -- node[above=5mm]{$t_{HL}^{\text{BLTN}}$} (5.55,10mm);
	
	\draw[gray] (7.2,-1mm) -- (7.2,1mm);
	\draw[-latex] (0.15,10mm) -- node[above=5mm]{} (0.15,0mm);
	\draw[-latex] (6.75,10mm) -- node[above=5mm]{} (6.75,0mm);
	\draw[-latex] (4.05,0mm) -- node[above=5mm]{} (4.05,10mm);
	\draw[-latex,color=red] (2.25,10mm) -- node[above=5mm]{} (2.25,0mm);
	\draw[-latex,color=red] (5.55,0mm) -- node[above=5mm]{} (5.55,10mm);
	
	\draw[<->,color=black] (0.3,-2mm) -- node[below=0mm,pos=0.5]{Frame $i$} (6.9,-2mm);
	
	\draw[black] (0.3,-1mm) -- (0.3,-3mm);
	\draw[black] (6.9,-1mm) -- (6.9,-3mm);
	
	\node[] at (7.7,-0.9) {OPT-OFF};
	
	\filldraw[fill=black!] (0.25,-1) rectangle (4.15,-0.7);
	\filldraw[fill=white!40] (4.15,-1) rectangle (6.85,-0.7);
	\node[] at (7.25,-1.4) {BLTN};
	\filldraw[fill=white!40] (0.25,-1.5) rectangle (2.4,-1.2);
	\filldraw[fill=red!] (2.4,-1.5) rectangle (5.7,-1.2);
	\filldraw[fill=white!40] (5.7,-1.5) rectangle (6.85,-1.2);

	\end{tikzpicture}
	\caption{Illustration of Lemma \ref{lem:frameStructure} showing switches between $S_H$ and $S_L$ by OPT-OFF and BLTN in the $i^{\text{th}}$ frame. Downward arrows represent switch to $S_H$, upward arrows indicate switch to $S_L$. Black and red arrows correspond to the OPT-OFF and BLTN policy respectively. The two bars below the timeline indicate the state of the edge server  under OPT-OFF and BLTN. The solid black and solid red portions represent the intervals during with OPT-OFF and BLTN host  the service on the edge server respectively}
	\label{fig:OPT_RR_frame}
\end{figure}
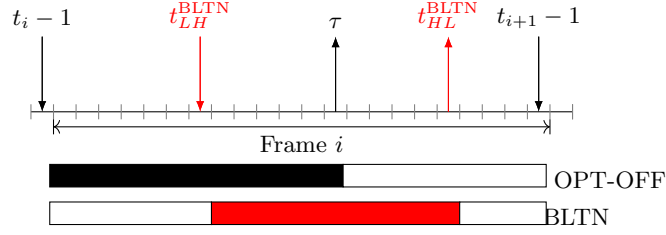

\begin{proof}
	Without loss of generality, we prove the result for $i=1$.
	Since $r^*(t_1-1) = L$, $r^*(t) = H$ for $t_1 \leq t \leq \tau$ and $r^*( \tau) = L$ then by Lemma \ref{lem:RR_one_download}, $r^{\text{BLTN}}(t_{LH}^{\text{BLTN}}) = H$ for some  $t_1 < t_{LH}^{\text{BLTN}} \leq \tau.$ In addition, by Lemma \ref{lem:RR_one_L}, $r^{\text{BLTN}}(t_1) = L$. Therefore, BLTN the state to $S_H$ at least once in the interval $[t_1, t_2-1]$. 
		By Lemma \ref{lem:eviction_H}, if $t_{LH}^{\text{BLTN}} < \tau$, since both BLTN and OPT-OFF stay in state $S_H$ during time-slot $t_{LH}^{\text{BLTN}}+1$, OPT-OFF switches to $S_L$ before BLTN, therefore, once fetched, BLTN does not switch to $S_L$ before time-slot $\tau+1$, i.e., $r^{\text{BLTN}}(t)=H$ for $t_{LH}^{\text{BLTN}}+1 \leq t \leq \tau+1.$
		Since $r^*(\tau) = H$, $r^*(t) = L$ for $\tau+1 \leq t \leq t_2-1$ and $r^*(t_2) = H$, then by Lemma \ref{lem:RR_one_L}, BLTN switches states to $S_L$ in time-slot $t_{HL}^{\text{BLTN}}$ such that $\tau < t_{HL}^{\text{BLTN}} \leq t_2-1$.
	In addition, once switched to $S_L$ at $t_{HL}^{\text{BLTN}} \leq t_2-1$, BLTN does not switch to $S_H$ again in the before time-slot $t_2$ by Lemma \ref{lem:RR_no_download}. 
		This completes the proof.
\end{proof}

\begin{proof}[of Theorem \ref{thm:BLTN_adv_online}(a)] As mentioned above, to compare the costs incurred by BLTN and OPT-OFF we divide times into frames $[1,t_1-1]$, $[t_1,t_2-1], [t_2,t_3-1],\ldots,$ where $t_i-1$ is the time-slot in which OPT-OFF downloads the service for the $i^{\text{th}}$ time for $i\in\{1,2,\ldots, k\}.$

	For convenience, we account for the switch costs incurred by OPT-OFF in time-slot $t_i$ in the cost incurred by OPT-OFF in Frame $i$. Given this, the cost under BLTN and OPT-OFF is the same for $[1,t_1-1]$ (Frame 0) since both policies host  the service in the same state. 
	
	Note that if the total number of switches made by OPT-OFF is less than $2k<\infty$, there are exactly $k+1$ frames (including Frame 0). The $(k+1)^{\text{th}}$ frame either has no switch by OPT-OFF to $S_L$ or OPT-OFF switches to $S_L$ and then never switches back. 
	
	We now focus on Frame $i$, such that $0<i<k$, where $2k$ is the total number of switches made by OPT-OFF. 
	
	Note: $\Delta \kappa = \kappa_H - \kappa_L, \Delta c = c_H - c_L$
	
	Without loss of generality, we focus on Frame 1. Recall the definitions of $\tau$, $t_{HL}^{\text{BLTN}}$, and $t_{LH}^{\text{BLTN}}$ from Lemma \ref{lem:frameStructure}, also seen in Figure \ref{fig:OPT_RR_frame}. By Lemma \ref{lem:frameStructure}, we have that BLTN switches from $S_H$ to $S_L$ and back exactly once each in $[t_1, t_2-1]$ such that the switch to $S_H$ by BLTN is in time-slot $t_{LH}^{\text{BLTN}} $ such that $t_1 \leq t_{LH}^{\text{BLTN}} \leq \tau$ and the eviction by BLTN is in time-slot $t_{HL}^{\text{BLTN}}$ such that $\tau < t_{HL}^{\text{BLTN}} < t_2$.
	
	Both OPT-OFF and BLTN makes one pair of switches in the frame. Hence the difference in the switch costs is zero. We now focus on the service and rent cost incurred by the two policies. 
	
	By Lemma \ref{lem:max_requests_L}, the difference in total cost $= C^{BLTN} - C^{OPT-OFF} = \displaystyle\sum_{l=t_1}^{t_{LH}^{\text{BLTN}}} (\underbar{\text{$\delta$}}_{l,L} - \underbar{\text{$\delta$}}_{l,H}) + \displaystyle\sum_{l=t_1}^{t_{LH}^{\text{BLTN}}} (c_L - c_H) <  \displaystyle\sum_{l=t_1}^{t_{LH}^{\text{BLTN}}-1} (c_H - c_L)+W +(\kappa_H - \kappa_L) + \displaystyle\sum_{l=t_1}^{t_{LH}^{\text{BLTN}}} (c_L - c_H)  = W + \Delta \kappa - \Delta c$.
    
		The service and rent cost incurred by OPT-OFF and BLTN in $[t_{f}^{\text{BLTN}}+1, \tau]$ are equal. 
		
	By Lemma \ref{lem:max_requests_H}, the difference of costs in $[\tau+1, t_{HL}^{\text{BLTN}}]$ is $ C^{BLTN} - C^{OPT-OFF} = \displaystyle\sum_{l=\tau+1}^{t_{HL}^{\text{BLTN}}} (\underbar{\text{$\delta$}}_{l,H} - \underbar{\text{$\delta$}}_{l,L}) + \displaystyle\sum_{l=\tau+1}^{t_{HL}^{\text{BLTN}}} (c_H - c_L) <  \displaystyle\sum_{l=\tau+1}^{t_{HL}^{\text{BLTN}}-1} (c_L - c_H)+W + \displaystyle\sum_{l=\tau+1}^{t_{HL}^{\text{BLTN}}} (c_H - c_L)  = W + \Delta c$.
	
	The service and rent cost incurred by OPT-OFF and BLTN in $[t_{e}^{\text{BLTN}}+1, t_2-1]$ are equal. 
	
	Let $C^{\text{BLTN}}(i),$ $C^{\text{OPT-OFF}}(i)$ denote the costs incurred in the $i^{\text{th}}$ frame by BLTN and OPT-OFF respectively.
	We therefore have that,
	\begin{align}
	C^{\text{BLTN}}(i)-C^{\text{OPT-OFF}}(i)\leq  2W + \kappa_H - \kappa_L
	\label{ineq:bound1_RR}
	\end{align}
	
	By Lemma \ref{lem:OPT_slots_1}, once OPT-OFF switches to state H, it stays put  for at least $\tau_H = \frac{ W}{\Delta \kappa-\Delta c}$ slots. And by Lemma \ref{lem:OPT_slots_2}, once OPT-OFF switches to state L, it stays put  for at least $\tau_L = \frac{ W}{\Delta \kappa}$ slots.
	Therefore,
	\begin{align}
	&C^{\text{OPT-OFF}}(i)\geq W_{LH}+ c_H \tau_H + W_{HL} + c_L \tau_L  \nonumber
	\label{ineq:bound1_OPT}
	\end{align}
	
	From \eqref{ineq:bound1_RR} and \eqref{ineq:bound1_OPT}, 
	\begin{align}
	C^{\text{BLTN}}(i) \leq  \left(1 + \frac{ 2W + \Delta \kappa}{ W \left( 1 + \frac{c_H}{\Delta \kappa - \Delta c}+ \frac{c_L}{\Delta \kappa } \right)}\right)C^{\text{OPT-OFF}}(i).
	\end{align}

	Frames $1$ to $k-1$ have now been characterized completely.
	
	For Frame $k$, which is the last frame, there are two possible cases, one where OPT-OFF switches states to $S_L$ in Frame $k$, in which case the analysis for Frame $k$ is identical to that of Frame $1$, and the other when OPT-OFF does not switch states to $S_L$ in Frame $k$. We now focus on the latter.

	Given that OPT-OFF switches to state $S_H$ the service in time-slot $t_{k}-1$, there exists $m>t_{k}$ such that
	$\displaystyle\sum_{l=t_k}^m (\underbar{\text{$x$}}_{l,H} - \underbar{\text{$x$}}_{l,L}) \geq W +\displaystyle\sum_{l=t_k}^m (c_H - c_L)$. 
	By Step~8 in Algorithm \ref{algo:BLTN}, BLTN switches to state $S_H$ at the end of
	time-slot $m$.
	Let $\tau_k = m-t_k$. By Lemma \ref{lem:max_requests_H}, the difference in the number of requests that can be served by the edge server during these $\tau_k$ time-slots is at most $\displaystyle\sum_{l=n'}^{m-1} (c_H - c_L)+W.$
	Since BLTN is in state $S_L$ during these  $\tau_k$ time-slots, the rent cost incurred by BLTN is $c_L$ per slot and the service cost incurred by BLTN is at most $W+\displaystyle\sum_{l=t_k}^{m-1} c_L+\kappa_L + \displaystyle \sum_{l = t_k}^{m} \delta_{l,L}$. OPT-OFF rents the edge server during these $\tau_k$ time-slots at cost $\displaystyle \sum_{l = t_k}^{m} c_H$ and the service cost incurred by OPT-OFF is $\displaystyle \sum_{l = t_k}^{m - 1} \delta_{l,H}$. There is no difference between the cost of BLTN and OPT-OFF after the first $\tau_k$ slots in Frame $k$. It follows that
	\begin{align}
	C^{\text{BLTN}}(k)-C^{\text{OPT-OFF}}(k)\leq 2W+\Delta \kappa.
	\label{ineq:bound1_RRk}
	\end{align}
	From \eqref{ineq:bound1_OPT} and \eqref{ineq:bound1_RRk},

	\begin{align}
	C^{\text{BLTN}}(k) &\leq \left(1 + \frac{ 2W + \Delta \kappa}{ W \left( 1 + \frac{c_H}{\Delta \kappa - \Delta c}+ \frac{c_L}{\Delta \kappa } \right)}\right)C^{\text{OPT-OFF}}(k).
	\end{align}

	Stitching together the results obtained for all frames, the result follows.  
	
\end{proof}

\label{sec:proofthm1b}

\subsection{Proof of Theorem \ref{thm:BLTN_adv_online}(b)}

\begin{proof}
	Let $\mathcal{P}$ be a given deterministic online policy and $C^\mathcal{P}(a)$ be the cost incurred by this policy for the request sequence $a$.
	
	We first consider the case where $\mathcal{P}$ starts in state $S_H$. 
	
	We define $t^{(1)} \geq 1$ as the first time the policy $\mathcal{P}$ switches to state $S_H$ when there are $\kappa_H$ arrivals in each of the first $t^{(1)}$ time-slots. As $\mathcal{P}$ is a deterministic policy, the value of $t^{(1)}$ can be computed a-priori.
	
	We define $t^{(2)} \geq 1$ as the first time the policy $\mathcal{P}$ switches to state $S_L$ when there are $\kappa_L$ arrivals in each of the first $t^{(2)}$ time-slots. Similarly, as $\mathcal{P}$ is a deterministic policy, the value of $t^{(2)}$ can be computed a-priori.
	
	Consider the arrival process $a$ with $\kappa_H$ request arrivals each in the first $t^{(1)}$ time-slots and $\kappa_L$ request arrivals each in the next $t^{(2)}$ time-slots. 
	
	Note: $\Delta \kappa = \kappa_H - \kappa_L, \Delta c = c_H - c_L, W = W $
	
	It follows that $C^{\mathcal{P}}(a) = t^{(1)}(\Delta \kappa + c_L) + t^{(2)}c_H + W.$
	
	Consider an alternative policy ALT which is in state $S_H$ from time-slots $1$ to $t^{(1)}$ and in state $S_L$ from time-slots $t^{(1)} + 1$ to $t^{(1)} + t^{(2)}$. It follows that $C^{\text{ALT}}(a) = c_Ht^{(1)} + c_Lt^{(2)} + W.$ By definition,
	$
	\rho^{\mathcal{P}} \geq \dfrac{(\Delta \kappa + c_L)t^{(1)} + c_Ht^{(2)} + W}{c_Ht^{(1)} + c_Lt^{(2)} + W}.
	$
	Therefore,
	\[
	\rho^{\mathcal{P}} \geq
	\text{min } \bigg\{ \frac{\Delta \kappa + c_L}{c_H}, \frac{c_H}{c_L}, \frac{(\Delta \kappa + c_L) + c_H + W}{c_H + c_L + W} \bigg\}
	\]
	
\end{proof}

\label{sec:proofthm2}

\subsection{Proof of Theorem 2}

We use the following lemmas to prove Theorem \ref{thm:BLTN_stochastic_theorem}. 

\begin{lemma}
	\label{lemma:optimal_causal}
	Let $X_t$ be the number  of requests arriving in time-slot $t$, $\nu = \mathbb{E}[X_t]$,  $\underbar{$X$}_{t,H}=\min\{X_t,\kappa_H\}$ and $\mu_L =  \mathbb{E}[\underbar{$X$}_{t,L}]$, $\underbar{$X$}_{t,L}=\min\{X_t,\kappa_L\}$ and $\mu_L =  \mathbb{E}[\underbar{$X$}_{t,L}]$. 
	Let the rent cost per time-slot be $c_H$ or $c_L$ depending on the states $S_H$ or $S_L$ respectively.
	Under Assumption \ref{assum_stochastic}, let $\mathbb{E}[ C_t^{\text{OPT-ON}}]$ be the cost per time-slot incurred by the OPT-ON policy. Then,
	$
	\mathbb{E}[ C_t^{\text{OPT-ON}}] \geq \min\{c_H + \nu - \mu_H, c_L + \nu - \mu_L \}.
$
\end{lemma}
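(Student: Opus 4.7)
}
The plan is to lower bound the per-slot cost of \emph{any} online policy (including OPT-ON) by observing that in each time-slot the policy commits to a rental level $r_t \in \{H,L\}$ that is determined by information available strictly before the arrival $X_t$ is seen, and then to use the i.i.d. assumption to decouple the state decision from the current arrival.

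First, I would decompose the per-slot cost as in \eqref{equation:total_cost}: $C_t^{\text{OPT-ON}} = C_{R,t}^{\text{OPT-ON}} + C_{S,t}^{\text{OPT-ON}} + C_{W,t}^{\text{OPT-ON}}$. Since $C_{W,t}^{\text{OPT-ON}} \geq 0$ always, it suffices to lower bound $\mathbb{E}[C_{R,t}^{\text{OPT-ON}} + C_{S,t}^{\text{OPT-ON}}]$. Next, according to the sequence of events described in Section~\ref{sec:setting}, the level $r_t$ rented for time-slot $t$ is chosen at the end of time-slot $t-1$, so $r_t$ is a (possibly randomized) function of $X_1,\dots,X_{t-1}$ only, and is therefore independent of $X_t$ under Assumption~\ref{assum_stochastic}.

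Conditioning on the event $\{r_t = H\}$, the rent plus service cost equals $c_H + X_t - \min\{X_t,\kappa_H\}$, whose conditional expectation (by independence) is $c_H + \nu - \mu_H$. Similarly on $\{r_t = L\}$ it is $c_L + \nu - \mu_L$. Letting $\alpha_t = \Prob(r_t = H)$, I would write
\begin{align*}
\mathbb{E}[C_{R,t}^{\text{OPT-ON}} + C_{S,t}^{\text{OPT-ON}}]
&= \alpha_t (c_H + \nu - \mu_H) + (1-\alpha_t)(c_L + \nu - \mu_L) \\
&\geq \min\{c_H + \nu - \mu_H,\; c_L + \nu - \mu_L\},
\end{align*}
because any convex combination of two numbers is at least their minimum. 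Adding the nonnegative switching cost yields the claimed bound. The only subtlety is the measurability of $r_t$, but this is built into the definition of an online policy, so no real obstacle arises; the argument is essentially a one-step coupling of the OPT-ON decision with the i.i.d. structure of the arrivals.
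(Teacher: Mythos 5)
Your proposal is correct and follows essentially the same route as the paper: condition on the rented level in slot $t$, lower bound the expected cost in each of the two cases by $c_H+\nu-\mu_H$ and $c_L+\nu-\mu_L$ respectively, and take the minimum. The paper's proof is just those two case statements; your version additionally makes explicit the nonnegativity of the switch cost and the independence of $r_t$ from $X_t$ (since $r_t$ is chosen before $X_t$ is revealed), which the paper glosses over but which is indeed the point that makes the conditional expectations evaluate to the unconditional ones.
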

\begin{proof}
    If the service is hosted in state $S_H$ on the edge in time-slot $t$, the expected cost incurred is at least $\mathbb{E}[X_t - \min\{X(t) , \kappa_H\} + c_H]$ = $c_H + \nu - \mu_H$.
    Else, if the service is hosted in state $S_L$ on the edge in time-slot $t$, the expected cost incurred is at least $\mathbb{E}[X_t - \min\{X(t) , \kappa_L\} + c_L]$ = $c_L + \nu - \mu_L$.
\end{proof}

\begin{lemma}\label{lem:Hoeffding}
	Let $X_t$ be the number  of requests arriving in time-slot $t$, $\underbar{X}_{t,H}=\min\{X_t,\kappa_H\}$, $\mu_H =  \mathbb{E}[\underbar{X}_{t,H}]$, $\underbar{X}_{t,L}=\min\{X_t,\kappa_L\}$ and $\mu_L =  \mathbb{E}[\underbar{X}_{t,L}]$. 
	Let the rent cost per time-slot be $c_H$ or $c_L$ depending on the states $S_H$ or $S_L$ respectively. Define $\Delta \underbar{X}_l = \underbar{X}_{l,H} - \underbar{X}_{l,L}$, $\Delta \mu = \mu_H - \mu_L$, $\Delta \kappa = \kappa_H - \kappa_L$,  $\Delta c = c_H - c_L$,  $Y_l=\Delta \underbar{X}_l- \Delta c$, and  $Y=\sum\limits_{l=t-\tau+1}^t Y_l$ then $Y$ satisfies, for $(\Delta c- \Delta \mu) \tau + W > 0$,
	$
	\mathbb{P}\left(Y \geq W\right)  \leq \exp\left(-2\frac{((\Delta c- \Delta \mu)\tau+W)^2}{\tau (\Delta \kappa)^2}\right), 
	$
	and for $(\Delta \mu- \Delta c) \tau + W > 0$,
	$
	\mathbb{P}\left(Y \leq \tau c-W\right) \leq \exp\left(-2\frac{((\Delta \mu- \Delta c)\tau+W)^2}{\tau (\Delta \kappa)^2}\right). 
	$
\end{lemma}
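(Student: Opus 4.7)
The plan is to recognize this as a direct two-sided application of Hoeffding's inequality, with the nontrivial part being careful bookkeeping of the range and mean of the centered summands $Y_l$. First I would pin down the support of each $Y_l$. Since $X_t \geq 0$ and $\kappa_L < \kappa_H$, one has $0 \leq \underbar{X}_{l,L} \leq \underbar{X}_{l,H}$ and $\underbar{X}_{l,H} - \underbar{X}_{l,L} \leq \kappa_H - \kappa_L = \Delta\kappa$ (with a short case split according to whether $X_l$ is below $\kappa_L$, between $\kappa_L$ and $\kappa_H$, or above $\kappa_H$). Hence $\Delta\underbar{X}_l \in [0,\Delta\kappa]$ almost surely, so $Y_l = \Delta\underbar{X}_l - \Delta c$ takes values in the deterministic interval $[-\Delta c,\ \Delta\kappa - \Delta c]$ of length exactly $\Delta\kappa$.

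Next I would compute the mean. Under Assumption~\ref{assum_stochastic} the $X_l$ are i.i.d., hence so are the $Y_l$, and $\mathbb{E}[Y_l] = \mathbb{E}[\underbar{X}_{l,H}] - \mathbb{E}[\underbar{X}_{l,L}] - \Delta c = \Delta\mu - \Delta c$. Summing over the $\tau$ slots in the window gives $\mathbb{E}[Y] = \tau(\Delta\mu - \Delta c)$. Because each summand has range $\Delta\kappa$, the standard sum-of-squared-ranges appearing in Hoeffding's bound is $\tau(\Delta\kappa)^2$.

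For the upper-tail statement I would rewrite the event as $\{Y - \mathbb{E}[Y] \geq W + \tau(\Delta c - \Delta\mu)\}$. The hypothesis $(\Delta c - \Delta\mu)\tau + W > 0$ ensures the right-hand side is strictly positive, so Hoeffding's inequality yields exactly
\[
\mathbb{P}(Y \geq W) \ \leq\ \exp\!\left(-\,\frac{2\bigl(W + (\Delta c - \Delta\mu)\tau\bigr)^{2}}{\tau(\Delta\kappa)^{2}}\right),
\]
which is the claimed bound. The lower-tail bound follows by the symmetric argument: rewrite the event so that the deviation from $\mathbb{E}[Y]$ equals $W + \tau(\Delta\mu - \Delta c)$ in absolute value, observe that the new hypothesis $(\Delta\mu - \Delta c)\tau + W > 0$ makes this positive, and invoke the symmetric half of Hoeffding's inequality.

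The main obstacle, such as it is, is essentially bookkeeping rather than mathematical depth: the two sign conditions in the statement are precisely what is needed to keep the deviation argument of Hoeffding on the correct side of zero (so that the exponent genuinely certifies a small probability rather than trivially bounding by $1$). Once the range $[-\Delta c,\Delta\kappa-\Delta c]$ of each $Y_l$ is identified, the rest is a two-line invocation of the textbook inequality.
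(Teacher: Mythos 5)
Your proposal is correct and follows essentially the same route as the paper: identify the range $Y_l\in[-\Delta c,\ \Delta\kappa-\Delta c]$ of the i.i.d.\ centered summands and invoke the two tails of Hoeffding's inequality, with the stated sign conditions guaranteeing a positive deviation from the mean $\tau(\Delta\mu-\Delta c)$. Your write-up is in fact more explicit than the paper's three-line sketch (and you correctly read the second event as the symmetric lower tail $Y\leq -W$, which is how the lemma is actually applied despite the typo in its statement).
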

\begin{proof}   Using i.i.d. condition of $\{X_t\}_{t\geq 1}$, it follows that for $s>0$, $\mathbb{E}[\exp(sY)]\leq \prod\limits_{l=t-\tau+1}^t \mathbb{E}[\exp(sY_l)]$.  Moreover, $Y_l\in [ -\Delta c, \Delta \kappa- \Delta c]$. Then the result  follows by Hoeffding's inequality. 
\end{proof}
\vspace{-1pt}
\begin{proof}[of Lemma \ref{lemma:difference_BLTNstochastic}] We first consider the case when $\Delta \mu> \Delta c$. We define the following events

$E_{t_1,t_2}: \displaystyle\sum_{l=t_1}^{t_2} \Delta \underbar{$X$}_l \leq \displaystyle\sum_{l=t_1}^{t_2} \Delta c - W$, $E^{\tau} =\displaystyle\bigcup_{t_1=1}^{\tau} E_{t_1,\tau}$, $E_{t-1} =\displaystyle\bigcup_{\tau=t-\lceil\frac{\lambda W}{\Delta \mu- \Delta c}\rceil}^{t-1} E^{\tau}$, $E_{t} =\displaystyle\bigcup_{\tau=t-\lceil\frac{\lambda W}{\Delta \mu- \Delta c}\rceil + 1}^{t} E^{\tau}, F: \displaystyle\sum_{l=t-\lceil\frac{\lambda W}{\Delta \mu- \Delta c}\rceil}^{t-1} \Delta \underbar{$X$}_l \geq \displaystyle\sum_{l=t-\lceil\frac{\lambda W}{\Delta \mu- \Delta c}\rceil}^{t-1} \Delta c+W$.  \\
	
	By Lemma \ref{lem:Hoeffding}, it follows that 
	$\mathbb{P}(E_{t_1,t_2}) \leq \exp\left(-2\frac{(\Delta \mu- \Delta c)^2 (t_2-t_1+1)}{(\Delta \kappa)^2}\right),$ 
	and therefore,
	\begin{align}
	\mathbb{P}(E^{\tau}) &\leq \displaystyle \sum_{t_1=1}^{\tau-\lceil\frac{W}{\Delta c}\rceil+1} \exp\left(-2\frac{(\Delta \mu- \Delta c)^2 (\tau-t_1+1)}{ (\Delta \kappa)^2}\right) \nonumber \\ &\leq \frac{\exp\left(-2\frac{(\Delta \mu- \Delta c)^2 \frac{W}{\Delta c}}{ (\Delta \kappa)^2}\right)}{1-\exp\left(-2\frac{(\Delta \mu- \Delta c)^2}{ (\Delta \kappa)^2}\right)}. \label{eq:E_tau}
	\end{align}
	Using \ref{eq:E_tau} and the union bound, $\mathbb{P}(E_{t-1})$ and $\mathbb{P}(E_t)$ are upper bounded by
	\begin{align}
	\bigg\lceil\frac{\lambda W}{\Delta \mu- \Delta c}\bigg\rceil \frac{\exp\left(-2\frac{(\Delta \mu- \Delta c)^2 \frac{W}{\Delta c}}{ (\Delta \kappa)^2}\right)}{1-\exp\left(-2\frac{(\Delta \mu- \Delta c)^2}{ (\Delta \kappa)^2}\right)}. \label{eq:E}
	\end{align}
	By Lemma \ref{lem:Hoeffding}, 
	\begin{align}
	\mathbb{P}(F^c )&\leq \exp\left(-2\frac{((\Delta \mu- \Delta c)\lceil\frac{\lambda W}{\Delta \mu- \Delta c}\rceil-W)^2}{\frac{\lambda W}{\Delta \mu- \Delta c}  (\Delta \kappa)^2}\right) \nonumber \\ &\leq \exp\left(-2\frac{(\lambda-1)^2 W(\Delta \mu- \Delta c)}{\lambda  (\Delta \kappa)^2}\right). \label{eq:F}
	\end{align}
	By \eqref{eq:E} and \eqref{eq:F}, 
	\begin{align}
	\mathbb{P}(E_t^c \cap E_{t-1}^c \cap F) \geq &1 - 2\bigg\lceil\frac{\lambda W}{\Delta \mu- \Delta c}\bigg\rceil \frac{\exp\left(-2\frac{(\Delta \mu- \Delta c)^2 \frac{W}{\Delta c}}{ (\Delta \kappa)^2}\right)}{1-\exp\left(-2\frac{(\Delta \mu- \Delta c)^2}{ (\Delta \kappa)^2}\right)} \nonumber \\ &- \exp\left(-2\frac{(\lambda-1)^2 W(\Delta \mu- \Delta c)}{\lambda  (\Delta \kappa)^2}\right). \label{eq:EcapF}
	\end{align}
	Consider the event $G = E_t^c \cap E_{t-1}^c \cap F$ and the following three cases. 
	
	Case 1: The service is hosted in state $S_H$ during time-slot $t -\big\lceil\frac{\lambda W}{\Delta \mu- \Delta c}\big\rceil$: Conditioned on $E_{t-1}^c$, by the properties of the BLTN  policy, the service is not switched to $S_L$  in time-slots $t - \big\lceil\frac{\lambda W}{\Delta \mu- \Delta c}\big\rceil$ to $t-1$. It follows that in this case, the service is hosted in state $S_H$ during time-slot $t$.
	
	Case 2: The service is hosted in state $S_L$ during time-slot $t - \big\lceil\frac{\lambda W}{\Delta \mu- \Delta c}\big\rceil$ and the state is switched to $S_H$ in time-slot $\tilde{\tau}$ such that $t - \big\lceil\frac{\lambda W}{\Delta \mu- \Delta c}\big\rceil + 1 \leq \tilde{\tau} \leq t-2$: Conditioned on $E_{t-1}^c$, by the properties of the BLTN policy, the service is not switched to $S_L$  in time-slots $\tilde{\tau}+1$ to $t-1$. It follows that in this case, the service is hosted in state $S_H$ during time-slot $t$.
	
	Case 3: The service is hosted in state $S_L$ during time-slot $t - \big\lceil\frac{\lambda W}{\Delta \mu- \Delta c}\big\rceil$ and is not switched to $S_H$ in time-slots $t - \big\lceil\frac{\lambda W}{\Delta \mu- \Delta c}\big\rceil + 1$ to $t-2$: In this case, in time-slot $t-1$, $t_{\text{evict}} \leq t - \big\lceil\frac{\lambda W}{\Delta \mu- \Delta c}\big\rceil$. Conditioned on $F$, by the properties of the BLTN policy, condition in Step 8 in Algorithm \ref{algo:BLTN} is satisfied for $\tau = t - \big\lceil\frac{\lambda W}{\Delta \mu- \Delta c}\big\rceil$.  It follows that in this case, the decision to switch states is made in time-slot $t-1$ and therefore, the service is hosted in state $S_H$ during time-slot $t$.
	
	We thus conclude that conditioned on $G = E_{t-1}^c \cap F$, the service is hosted in state $S_H$ during time-slot $t$. In addition, conditioned on $E^c_t$, the service is not switched in time-slot $t$.
	We now compute the expected cost incurred by the BLTN policy. By definition,
	$
	\mathbb{E}[C_t^{\text{BLTN}}] 
	=  \mathbb{E}[C_t^{\text{BLTN}}|G] \mathbb{P}(G) + \mathbb{E}[C_t^{\text{BLTN}}|G^c] \times \mathbb{P}(G^c).
	$
	
	Note that, 
	$
	\mathbb{E}[C_t^{\text{BLTN}}|G] = c_H + \nu -\mu_H, \ \mathbb{E}[C_t^{\text{BLTN}}|G^c] \leq W + c_L + \nu - \mu_L.
	$
	Therefore,
	\begin{align}
	\mathbb{E}[C_t^{\text{BLTN}}] 
	=  & c_H + \nu-\mu_H + (W + \Delta \mu - \Delta c) \mathbb{P}(G^c) \nonumber \\
	\leq  & c_H + \nu-\mu_H + (W + \Delta \mu - \Delta c) \times \bigg( 2\bigg\lceil\frac{\lambda W}{\Delta \mu- \Delta c}\bigg\rceil\frac{\exp(-2\frac{(\Delta \mu- \Delta c)^2\frac{W}{\Delta c}}{\Delta \kappa^2})}{1-\exp(-2\frac{(\Delta \mu- \Delta c)^2}{ (\Delta \kappa)^2})} \nonumber \\ & + \exp(-2\frac{(\lambda-1)^2 W(\Delta \mu- \Delta c)}{\lambda  (\Delta \kappa)^2})\bigg). \label{eq:finalBound}
	\end{align}
	We optimize over $\lambda>1$ to get the tightest possible bound. By Lemma \ref{lemma:optimal_causal} and \eqref{eq:finalBound}, we have the result for BLTN.

	
	Next, we consider the case when $\Delta \mu < \Delta c$. We define the following events
	
	$F_{t_1,t_2}: \displaystyle\sum_{l=t_1}^{t_2} \Delta \underbar{$X$}_l \geq \displaystyle\sum_{l=t_1}^{t_2} \Delta c+W$, $F^{\tau}=\displaystyle\bigcup_{t_1=1}^{\tau} F_{t_1,\tau}$, $F_{t-1}=\displaystyle\bigcup_{\tau=t-\lceil\frac{\lambda W}{\Delta c - \Delta \mu}\rceil}^{t-1} F^{\tau}$, $F_{t}=\displaystyle\bigcup_{\tau=t-\lceil\frac{\lambda W}{\Delta c - \Delta \mu}\rceil+1}^{t} F^{\tau}$, $E: \displaystyle\sum_{l=t-\lceil\frac{\lambda W}{\Delta c - \Delta \mu}\rceil}^{t-1} \underbar{$X$}_l+ W < \displaystyle\sum_{l=t-\lceil\frac{\lambda W}{\Delta c - \Delta \mu}\rceil}^{t-1} \Delta c$.\\
	
	By Lemma \ref{lem:Hoeffding}, it follows that $\mathbb{P}(F_{t_1,t_2})\leq \exp\left(-2\frac{(\Delta c - \Delta \mu)^2 (t_2-t_1+1)}{ (\Delta \kappa)^2}\right),$ and therefore, 
	\begin{align}
	\mathbb{P}(F^{\tau}) &\leq \displaystyle \sum_{t_1=1}^{\tau-\lceil\frac{W}{\Delta \kappa- \Delta c}\rceil+1} \exp\left(-2\frac{(\Delta c - \Delta \mu)^2 (\tau-t_1+1)}{ (\Delta \kappa)^2}\right) \nonumber \\ &\leq \frac{\exp\left(-2\frac{(\Delta c - \Delta \mu)^2 \frac{W}{\Delta \kappa- \Delta c}}{(\Delta \kappa)^2}\right)}{1-\exp\left(-2\frac{(\Delta c - \Delta \mu)^2}{ (\Delta \kappa)^2}\right)}. \label{eq:F_tau}
	\end{align}
	Using \eqref{eq:F_tau} and the union bound, $\mathbb{P}(F_{t})$ and $\mathbb{P}(F_{t-1})$ are upper bounded by
	\begin{equation}
	\bigg\lceil\frac{\lambda W}{\Delta c - \Delta \mu}\bigg\rceil \frac{\exp\left(-2\frac{(\Delta c - \Delta \mu)^2 \frac{W}{\Delta \kappa- \Delta c}}{(\Delta \kappa)^2}\right)}{1-\exp\left(-2\frac{(\Delta c - \Delta \mu)^2}{ (\Delta \kappa)^2}\right)}. \label{eq:F2}
	\end{equation}
	By Lemma \ref{lem:Hoeffding},  
	\begin{align}
	\mathbb{P}(E^c) &\leq \exp\left(-2\frac{((\Delta c - \Delta \mu)\lceil\frac{\lambda W}{\Delta c - \Delta \mu}\rceil-W)^2}{\frac{\lambda W}{\Delta c - \Delta \mu}  (\Delta \kappa)^2}\right) \leq \exp\left(-2\frac{(\lambda-1)^2(\Delta c - \Delta \mu)W}{\lambda  (\Delta \kappa)^2}\right). \label{eq:E2}
	\end{align} 
	By \eqref{eq:F2} and \eqref{eq:E2},
	\begin{align}
	\mathbb{P}(F^c_t \cap F^c_{t-1} \cap E) \geq & 1- \exp\left(-2\frac{(\lambda-1)^2(\Delta c - \Delta \mu)W}{\lambda  (\Delta \kappa)^2}\right) \nonumber \\ &-2\bigg\lceil\frac{\lambda W}{\Delta c - \Delta \mu}\bigg\rceil \frac{\exp\left(-2\frac{(\Delta c - \Delta \mu)^2 \frac{W}{\Delta \kappa- \Delta c}}{ (\Delta \kappa)^2}\right)}{1-\exp\left(-2\frac{(\Delta c - \Delta \mu)^2}{ (\Delta \kappa)^2}\right)}.
    \label{eq:FcapE2}
	\end{align}
	Consider the event $G = F^c_t \cap F^c_{t-1} \cap E$ and the following three cases. 
	
	Case 1: The service is hosted in state $S_L$ during time-slot $t -\big\lceil\frac{\lambda W}{\Delta c - \Delta \mu}\big\rceil$: Conditioned on $F^c$, by the properties of the BLTN  policy, the service is not switched to $S_H$ in time-slots $t - \big\lceil\frac{\lambda W}{\Delta c - \Delta \mu}\big\rceil$ to $t-1$. It follows that in this case, the service is hosted in state $S_L$ during time-slot $t$.
	
	Case 2: The service is hosted in state $S_H$ during time-slot $t - \big\lceil\frac{\lambda W}{\Delta c - \Delta \mu}\big\rceil$ and the state is switched to $S_L$ in time-slot $\tilde{\tau}$ such that $t - \big\lceil\frac{\lambda W}{\Delta c - \Delta \mu}\big\rceil + 1 \leq \tilde{\tau} \leq t-2$: Conditioned on $F^c$, by the properties of the BLTN policy, the state is not switched to $S_H$ in time-slots $\tilde{\tau}+1$ to $t-1$. It follows that in this case, the service is hosted in state $S_L$ during time-slot $t$.
	
	
	Case 3: The service is hosted in state $S_H$ during time-slot $t - \big\lceil\frac{\lambda W}{\Delta c - \Delta \mu}\big\rceil$ and the state is not switched to $S_L$ in time-slots $t - \big\lceil\frac{\lambda W}{\Delta c - \Delta \mu}\big\rceil + 1$ to $t-2$: In this case, in time-slot $t-1$, $t_{\text{evict}} \leq t - \big\lceil\frac{\lambda W}{\Delta c - \Delta \mu}\big\rceil$. Conditioned on $E$, by the properties of the BLTN  policy, condition in Step 16 in Algorithm \ref{algo:BLTN}  is satisfied for $\tau = t - \big\lceil\frac{\lambda W}{\Delta \mu- \Delta c}\big\rceil$.  It follows that in this case, the decision to switch states is made in time-slot $t-1$ and therefore, the service is hosted in state $S_L$ in time-slot $t$.
	
	We thus conclude that conditioned on $F^c_{t-1} \cap E$, the service is hosted in state $S_L$ during time-slot $t$. In addition, conditioned on $F^c_{t}$, the service is not switched to $S_H$ in time-slot $t$. 
	We now compute the expected cost incurred by the BLTN policy. By definition, 
	$
	\mathbb{E}[C_t^{\text{BLTN}}] =  \mathbb{E}[C_t^{\text{BLTN}}|G] \mathbb{P}(G) + \mathbb{E}[C_t^{\text{BLTN}}|G^c] \times \mathbb{P}(G^c).
	$
	
	Note that, 
	$
	\mathbb{E}[C_t^{\text{BLTN}}|G] = \nu - \mu_L + c_L, \  \mathbb{E}[C_t^{\text{BLTN}}|G^c] \leq c_H + \nu - \mu_H + W.
	$
	Therefore,
	\begin{align}
	\mathbb{E}[C_t^{\text{BLTN}}] &= \nu - \mu_L + c_L + (\Delta c - \Delta \mu + W) \mathbb{P}(G^c) \nonumber \\
	&\leq  \nu - \mu_L + c_L + (\Delta c - \Delta \mu + W) \times \Bigg( \exp\left(-2\frac{(\lambda-1)^2(\Delta c - \Delta \mu)W}{\lambda  (\Delta \kappa)^2}\right) \nonumber \\ & + 2\bigg\lceil\frac{\lambda W}{\Delta c - \Delta \mu}\bigg\rceil \frac{\exp\left(-2\frac{(\Delta c - \Delta \mu)^2 \frac{W}{\Delta \kappa- \Delta c}}{ (\Delta \kappa)^2}\right)}{1-\exp\left(-2\frac{(\Delta c - \Delta \mu)^2}{ \Delta \kappa)^2}\right)} \Bigg). 
 \label{eq:finalBound2}
	\end{align}
	We optimize over $\lambda>1$ to get the tightest possible bound. By Lemma \ref{lemma:optimal_causal} and \eqref{eq:finalBound2}, we have the result for BLTN.  
\end{proof}

\end{document}